\PassOptionsToPackage{notheorems,nomaths}{jmlrutils}
\PassOptionsToPackage{hyphens}{url}
\documentclass[pmlr,11pt,twoside]{jmlr}

\usepackage[T1]{fontenc}
\usepackage{newtxtext}
\usepackage{amsthm,mathtools}
\usepackage{newtxmath}
\let\vv\relax

\usepackage{bm,mathrsfs}
\usepackage{xcolor}
\usepackage{graphicx}
\usepackage{algorithm,algorithmic,placeins}
\usepackage{needspace}
\usepackage{booktabs,longtable,multirow,array}
\usepackage{pifont,enumitem,microtype,comment}
\setcitestyle{round,authoryear}
\definecolor{referencecolor}{HTML}{236B63}
\hypersetup{colorlinks=true,linkcolor=referencecolor,
            citecolor=blue,urlcolor=blue}
\usepackage[capitalize,nameinlink,noabbrev]{cleveref}
\jmlrSuppressPackageChecks

\newtheoremstyle{coltplain}{6pt}{6pt}{\itshape}{}%
  {\bfseries}{}{0.5em}{}
\newtheoremstyle{coltdefinition}{6pt}{6pt}{\normalfont}{}%
  {\bfseries}{}{0.5em}{}
\theoremstyle{coltplain}
\newtheorem{theorem}{Theorem}
\newtheorem{lemma}[theorem]{Lemma}
\newtheorem{proposition}[theorem]{Proposition}
\newtheorem{corollary}[theorem]{Corollary}

\theoremstyle{coltdefinition}

\theoremstyle{coltdefinition}
\newtheorem{remark}[theorem]{Remark}
\theoremstyle{coltplain}

\crefname{theorem}{Theorem}{Theorems}
\crefname{example}{Example}{Examples}
\crefname{assumption}{Assumption}{Assumptions}
\crefname{problem}{Problem}{Problems}
\makeatletter
\renewenvironment{proof}[1][\proofname]{%
  \par\pushQED{\qed}\normalfont
  \topsep6\p@\@plus6\p@\relax
  \trivlist\item[\hskip\labelsep\bfseries #1]\ignorespaces
}{%
  \popQED\endtrivlist\@endpefalse
}
\makeatother

\newcommand{\R}{\mathbb{R}}
\newcommand{\E}{\mathbb{E}}

\newcommand{\epsappr}{\varepsilon_{\mathrm{appr}}}
\newcommand{\epsobj}{\varepsilon_{\mathrm{obj}}}

\newcommand{\epsppr}{\varepsilon_{\mathrm{ppr}}}

\newcommand{\trans}{\mathsf{T}}

\DeclarePairedDelimiter{\norm}{\lVert}{\rVert}

\DeclarePairedDelimiterX{\inner}[2]{\langle}{\rangle}{#1, #2}

\newcommand{\softO}{\widetilde{\mathcal{O}}}

\newcommand{\va}{\bm{a}}
\newcommand{\vb}{\bm{b}}
\newcommand{\vc}{\bm{c}}
\newcommand{\vd}{\bm{d}}
\newcommand{\ve}{\bm{e}}
\newcommand{\vf}{\bm{f}}
\newcommand{\vg}{\bm{g}}
\newcommand{\vh}{\bm{h}}

\newcommand{\vn}{\bm{n}}

\newcommand{\vp}{\bm{p}}
\newcommand{\vq}{\bm{q}}
\newcommand{\vr}{\bm{r}}
\newcommand{\vs}{\bm{s}}
\newcommand{\vt}{\bm{t}}
\newcommand{\vu}{\bm{u}}
\newcommand{\vv}{\bm{v}}
\newcommand{\vw}{\bm{w}}
\newcommand{\vx}{\bm{x}}
\newcommand{\vy}{\bm{y}}
\newcommand{\vz}{\bm{z}}

\newcommand{\vdelta}{\bm{\delta}}

\newcommand{\vlambda}{\bm{\lambda}}

\newcommand{\vomega}{\bm{\omega}}

\newcommand{\vpi}{\bm{\pi}}

\newcommand{\vxi}{\bm{\xi}}
\newcommand{\vell}{\bm{\ell}}

\newcommand{\vzero}{\bm{0}}

\newcommand{\one}{\mathbf{1}}

\newcommand{\eunit}[1]{\bm{e}_{#1}}

\newcommand{\mA}{\bm{A}}
\newcommand{\mB}{\bm{B}}
\newcommand{\mC}{\bm{C}}
\newcommand{\mD}{\bm{D}}

\newcommand{\mH}{\bm{H}}
\newcommand{\mI}{\bm{I}}

\newcommand{\mK}{\bm{K}}
\newcommand{\mL}{\bm{L}}
\newcommand{\mM}{\bm{M}}

\newcommand{\mP}{\bm{P}}
\newcommand{\mQ}{\bm{Q}}
\newcommand{\mR}{\bm{R}}
\newcommand{\mS}{\bm{S}}

\newcommand{\mW}{\bm{W}}

\newcommand{\mDelta}{\bm{\Delta}}

\newcommand{\mPi}{\bm{\Pi}}

\newcommand{\gB}{\mathcal{B}}
\newcommand{\gC}{\mathcal{C}}

\newcommand{\gF}{\mathcal{F}}

\newcommand{\gI}{\mathcal{I}}

\newcommand{\gK}{\mathcal{K}}

\newcommand{\gS}{\mathcal{S}}

\newcommand{\gU}{\mathcal{U}}
\newcommand{\gV}{\mathcal{V}}

\newcommand{\sN}{\mathbb{N}}

\DeclareMathOperator{\diag}{diag}

\DeclareMathOperator{\nnz}{nnz}

\DeclareMathOperator{\supp}{supp}

\DeclareMathOperator{\vol}{vol}

\DeclareMathOperator{\Work}{Work}

\DeclareMathOperator*{\argmin}{arg\,min}

\title[Accelerated Local Algorithms for PageRank]{Accelerated Local Algorithms for\\Personalized and Regularized PageRank}
\author[Baojian Zhou]{%
  \Name{Baojian Zhou} \Email{\href{mailto:bjzhou@fudan.edu.cn}{bjzhou@fudan.edu.cn}}\\
  \addr School of Data Science, Fudan University\\
  Shanghai, China
}
\hypersetup{pdftitle={Accelerated Local Algorithms for Personalized and Regularized PageRank},
            pdfauthor={Baojian Zhou}}

\begin{document}
\maketitle
\thispagestyle{plain}

\begin{abstract}
Local PageRank algorithms seek sparse approximations with work independent
of graph size. We give a deterministic algorithm for regularized
personalized PageRank with additive objective accuracy $\epsobj$ in
$\widetilde{\mathcal{O}}(1/(\rho\sqrt\alpha))$ local work, where $\alpha$
is the lazy teleportation parameter and $\rho$ is the regularizer.
Accuracy enters only polylogarithmically. The bound charges discovery,
repeated neighborhood scans, numerical updates, certification, and output,
without graph-wide preprocessing or a supplied solution support.

The algorithm combines regularization continuation with accelerated
corrections constrained by a degree-scaled box and a mass cap. Two energies
for the same recurrence control objective convergence and the response
that activates coordinates. A selected-flow argument bounds cumulative
scanned volume, and a sparse threshold reporter realizes the bound.
We also specify a bounded-arithmetic implementation for rational inputs.

A second, randomized algorithm uses support-safe threshold batches.
A block-Cholesky and Chebyshev argument bounds their depth, and certified
SDD solves give expected work
$\widetilde{\mathcal{O}}(V_*\min\{k_*,\alpha^{-1/2}\})$, where $k_*$
and $V_*$ are the optimal support's cardinality and degree volume.
Both methods imply the corresponding accelerated degree-normalized PPR
approximation. The concurrent September 2026 preprint of Cui, Wei, and
Yang also attains the randomized work scale. Our principal distinction
is deterministic local acceleration with only polylogarithmic overhead
and no SDD oracle.
\end{abstract}

\section{Introduction}\label{sec:introduction}
Personalized PageRank (PPR) measures the relevance of graph vertices to a
seed through a geometrically stopped random walk. Local PPR algorithms
produce sparse approximations and support graph clustering and diffusion
without requiring access to the entire graph
\citep{andersen2006local,andersen2007using}. The central question is whether
acceleration can reduce total local work while preserving independence
from the graph size.

The classical push algorithm computes a sparse approximation in
$\mathcal{O}(1/(\alpha\epsappr))$ degree-weighted work, where $\alpha$ is the
teleportation parameter and $\epsappr$ is its normalized residual
tolerance. Positive residual mass pays for each neighborhood scan. Global
accelerated methods suggest a square-root improvement: the lazy symmetric
PageRank system has spectrum in $[\alpha,1]$, so standard accelerated methods need only
$\widetilde{\mathcal{O}}(1/\sqrt\alpha)$ iterations. These iteration bounds,
however, do not control the cost of accessing the graph. A local
running-time guarantee must charge both the vertices discovered and every
repeated update.

\paragraph{Regularized PageRank and the acceleration question.}
Weighted $\ell_1$ regularization gives a sparse optimization formulation
called regularized personalized PageRank (RPPR). For a seed vertex $v$ in
a graph with $n$ vertices, consider
\begin{equation}
 \min_{\vx\in\R^n}
 F_\rho(\vx):=
 \tfrac12\vx^\trans\mQ\vx-\alpha\eunit{v}^\trans\mD^{-1/2}\vx
           +\alpha\rho\norm{\mD^{1/2}\vx}_1,
 \qquad
 \mQ=\alpha\mI+\tfrac{1-\alpha}{2}\bm{\mathcal L}.
 \label{eq:intro-rppr-objective}
\end{equation}
Here $\mD$ is the diagonal matrix of original graph degrees,
$\bm{\mathcal L}$ is the symmetric normalized graph Laplacian, and $\rho>0$
is the regularization parameter. The optimum is nonnegative and has
support volume at most $1/\rho$ \citep{fountoulakis2019variational}.
Local proximal gradient exploits this structure in
$\widetilde{\mathcal{O}}(1/(\alpha\rho))$ work.
\citet{fountoulakis2022open} asked whether a local algorithm can achieve
\begin{equation*}
 \widetilde{\mathcal{O}}\!\left(\frac1{\rho\sqrt\alpha}\right)
\end{equation*}
work with only polylogarithmic dependence on the requested accuracy and no dependence on the ambient graph size.
We attain this target deterministically and also give a randomized algorithm.
A concurrent preprint by \citet{cui2026accelerating}, submitted to arXiv on
September~10, 2026, attains the same randomized work scale. Our principal
distinction is a deterministic bound with only polylogarithmic overhead,
proved without a deterministic nearly-linear SDD oracle.

\paragraph{The obstacle to local acceleration.}
Earlier approaches face two computational difficulties: expensive updates
outside the optimal support due to momentum and repeated solves on growing
active sets. Momentum-based methods can activate vertices outside the
optimal support, including vertices of very large degree; the fast iterative
shrinkage-thresholding algorithm (FISTA) can fail support confinement
\citep{fountoulakis2026complexity}. Meanwhile, methods that preserve safe
support expansion avoid these excursions by solving a sequence of restricted
problems, but their guarantees retain a factor for the number of support
discoveries \citep{martinezrubio2023accelerated,wei2026simple}. The concurrent
analysis of \citet{cui2026accelerating} overcomes this second difficulty
using a potential function for blocks of active-set expansions. The first
difficulty calls for a bound on cumulative update volume; the second calls
for a bound on the number of restricted systems that must be solved. Our two
algorithms address these difficulties through different forms of local
acceleration.

\paragraph{Our contributions.}
We consider point sources on finite connected simple unweighted undirected
graphs with at least two vertices, under the local-access model of
\cref{sec:problem-formulation}. The RPPR work bounds below concern the
nonzero regime $0<\rho<1/d_v$, where $d_v$ is the seed degree; for
$\rho\geq1/d_v$, the exact zero solution is returned in constant work.
\begin{itemize}[leftmargin=*]
\item \textbf{Deterministic local acceleration.}
Regularization continuation and constrained accelerated corrections
attain additive RPPR objective accuracy $\epsobj$ in
$\widetilde{\mathcal{O}}(1/(\rho\sqrt\alpha))$ work
(\cref{thm:deterministic-rppr}). The algorithm decreases regularization in
stages, accelerating each correction within a degree-scaled box under a
mass constraint. The total work bound includes the cumulative scanning
cost $\sum_j d_{u_j}$, where $u_j$ is the vertex scanned at the $j$th
neighborhood scan, as well as all remaining operations.

\item \textbf{Randomized acceleration with support adaptation.}
Certified threshold batching attains the same objective guarantee with
expected work
\begin{equation}
 \widetilde{\mathcal{O}}\!\left(
      V_*\min\left\{k_*,\frac1{\sqrt\alpha}\right\}\right)
 \leq\widetilde{\mathcal{O}}\!\left(
      \min\left\{\frac1{\rho^2},\frac1{\rho\sqrt\alpha}\right\}\right),
 \label{eq:intro-random-adaptive}
\end{equation}
where $k_*$ and $V_*$ are the cardinality and original-degree volume
of the optimal support (\cref{thm:randomized-rppr-main}). The minimum
retains the small-support regime of the active-set bound while adding an
accelerated bound on the number of batches. Restarting after a reported
failure gives a Las Vegas algorithm: every returned vector is certified.

\item \textbf{A PPR approximation consequence.}
Regularization bias and objective error together yield a sparse
approximation $\widehat\vpi$ to the seed's lazy PPR vector $\vpi$ with
$\norm{\mD^{-1}(\widehat\vpi-\vpi)}_\infty\leq\epsppr$ in
$\widetilde{\mathcal{O}}(1/(\epsppr\sqrt\alpha))$ deterministic work, or the
same expected work for the randomized method
(\cref{cor:main-ppr}). The conversion specifies separate regularization
and objective-accuracy budgets. We also give a certified conversion from
the vertex sets discovered by these algorithms to PPR approximations,
using principal systems with the original graph degrees.
\end{itemize}

\paragraph{Why the two methods remain local.}
The deterministic algorithm solves a sequence of decreasing regularization
levels. Each level starts from a safe sparse baseline and accelerates the
remaining correction inside a degree-scaled box with a mass constraint.
Two comparison energies for the same recurrence connect objective
convergence to local work. A signed-flow argument charges the cumulative
volume of all updated coordinates, including repeated visits and updates
outside the optimal support.

The randomized algorithm repeatedly solves on its current active set,
certifies the result, and admits boundary vertices whose violations exceed
a common threshold. The sign structure of the PageRank matrix certifies
that admitted vertices belong to the optimal support. A block-Cholesky and
Chebyshev argument bounds the number of batches by
$\widetilde{\mathcal{O}}(1/\sqrt\alpha)$, while every nonempty batch adds a
vertex and hence there are at most $k_*$ batches. Combining the smaller
bound with fully charged solves of local symmetric diagonally dominant
(SDD) systems yields
\cref{eq:intro-random-adaptive}.
The same support-adaptive soft bound also follows from the proof of
\citet[Theorem~1.3]{cui2026accelerating}; we do not claim that bound alone
as a distinction. Our randomized proof instead supplies a block-Cholesky
and Chebyshev explanation of the batch-depth phenomenon, with a
residual-certified Las Vegas implementation.

All bounds count graph access, local computation, certification, and
output under the model in \cref{sec:problem-formulation}. General seed
inputs and their additional costs are treated in \cref{app:seed-scope}.

\subsection{Related work}\label{sec:related-work}
\citet{fountoulakis2019variational} developed a monotone local version of
the iterative shrinkage-thresholding algorithm (ISTA) for regularized
PageRank. The regularization-path results of \citet{ha2021statistical}
provide a related foundation for continuation.

\paragraph{The local RPPR algorithms.}
\citet{martinezrubio2023accelerated} introduced Conjugate Directions for
PageRank (CDPR), an exact method, and Accelerated Sparse PageRank (ASPR),
an approximate method, on expanding supports.
\citet{wei2026simple} replaced the restricted numerical solve by a
nearly-linear SDD solve, obtaining a bound proportional to the number
of support discoveries. Their output also supplies the Andersen--Chung--Lang
(ACL) positive-residual certificate \citep{andersen2006local}.
For the comparison in \cref{tab:lineage}, write $\mathcal S_\rho^*:=\supp(\vx_\rho^*)$ for the RPPR optimum's
support and let $k_*=|\mathcal S_\rho^*|$,
$V_*=\vol(\mathcal S_\rho^*)$, and
$\widetilde V_*=\nnz(\mQ_{\mathcal S_\rho^*,\mathcal S_\rho^*})$.
The first volume uses original graph degrees; the last quantity counts
internal nonzeros. \Cref{app:related-work} gives the exact normalization
maps, theorem pointers, and output qualifications. Wei and Yang also discuss a deterministic SDD substitution with
additional factors; the specific soft bound, rather than the existence
of a deterministic active-set method, is the distinction here.

\citet{cui2026accelerating} subsequently prove the accelerated randomized
RPPR and ACL approximation bounds by controlling blocks of active-set
expansions. Their September~10 preprint directly overlaps our randomized
running-time result, including the support-adaptive soft bound obtained
from their proof. Its deterministic substitution incurs an additional
subpolynomial factor, rather than only polylogarithmic overhead.
Our deterministic continuation theorem is the main distinction;
the randomized proof here uses a different block-Cholesky decay argument.

\paragraph{Classical acceleration and evolving sets.}
The local iterative methods of
\citet{zhou2024iterative} and the accelerated evolving
set framework of \citet{huang2025accelerated} develop the link between
numerical acceleration and changing local supports. Their guarantees and
accuracy dependence differ from the graph-uniform RPPR bound studied here.
\citet{fountoulakis2026complexity} give conditional work bounds for FISTA
and construct $m$-edge stars on which it requires $\Omega(m)$ work while
ISTA's work is independent of $m$. For fixed $\alpha$, these instances have
$\rho=\Theta(1/m)$, so this lower bound does not rule out the
$\widetilde{\mathcal{O}}(1/(\rho\sqrt\alpha))$ target. Our two algorithms
attain this target through explicit control of cumulative update volume
or support expansion.

\Needspace{3\baselineskip}
\paragraph{Other approximation and access models.}
Some PPR estimators use the same output norm as our PPR corollary
\citep{wei2024absolute}; teleportation dependence, preprocessing, and access
assumptions also matter. Recent scalar PageRank estimation
\citep{thorup2026instance} and $\ell_1$-regularized resistance
\citep{li2026resistance} have different outputs or computational models.
\Cref{app:related-work} gives detailed comparisons with these and other
local diffusion, estimation, and obstacle methods.

\begin{table}[t]
\centering
\small
\renewcommand{\arraystretch}{1.15}
\caption{RPPR work in the canonical graph normalization. Additive
approximation refers to the RPPR objective gap. Soft bounds suppress the
logarithmic factors specified in the access model; probability qualifications
are shown explicitly. The comparison retains original-degree volume $V_*$
and internal nonzeros $\widetilde V_*$.}
\label{tab:lineage}
\begin{tabular}{@{}
 >{\raggedright\arraybackslash}p{0.35\textwidth}
 >{\raggedright\arraybackslash}p{0.23\textwidth}
 >{\raggedright\arraybackslash}p{\dimexpr0.42\textwidth-4\tabcolsep\relax}@{}}
\toprule
Method & Output & Work \\
\midrule
Local ISTA \citep{fountoulakis2019variational}
 & Additive approximation & $\softO(V_*/\alpha)$ \\
\midrule
CDPR \citep{martinezrubio2023accelerated}
 & Exact optimum & $\mathcal{O}(k_*^3+k_*V_*)$ \\
\midrule
ASPR \citep{martinezrubio2023accelerated}
 & Additive approximation & $\softO(k_*\widetilde V_*/\sqrt\alpha+k_*V_*)$ \\
\midrule
SDD active set \citep{wei2026simple}
 & Additive approximation and ACL certificate & $\softO(k_*V_*)$ with high probability \\
\midrule
Accelerated active set \citep{cui2026accelerating}
 & Additive approximation and ACL certificate & $\softO(V_*\min\{k_*,\alpha^{-1/2}\})$ with high probability \\
\midrule
\cref{alg:deterministic-rppr} (this paper)
 & Additive approximation, safe subsolution & $\softO(1/(\rho\sqrt\alpha))$ \\
\midrule
\cref{alg:threshold-batch-rppr} (this paper)
 & Additive approximation, safe support & $\softO(V_*\min\{k_*,\alpha^{-1/2}\})$ expected \\
\bottomrule
\end{tabular}
\end{table}

\paragraph{Organization.}
\Cref{sec:problem-formulation} fixes the problem, access model, and main
results. \Cref{sec:obstacle-active-sets} develops the common geometry and
local certificates. \Cref{sec:deterministic-method} presents deterministic
continuation, its two-energy analysis, and its sparse implementation.
\Cref{sec:randomized-method} gives threshold batching and proves its depth
and total-work bounds. \Cref{sec:two-stage-ppr} develops the PPR consequences
and discusses further questions. Detailed literature comparisons, bounded
arithmetic, and seed extensions appear in the appendices.

\section{Problem Formulation and Main Results}\label{sec:problem-formulation}
We use the lazy symmetric normalization of \citet[Section~2, equation~(3)]{fountoulakis2022open}.

\subsection{Graph and optimization problems}
\label{subsec:shared-source-aligned-problem}

We use boldface for vectors and matrices and regard vectors as columns.
Let \(\mathcal{G}=(\mathcal{V},\mathcal{E})\) be a finite connected simple
unweighted undirected graph with \(\mathcal{V}=[n]:=\{1,\ldots,n\}\),
\(n\geq2\), and adjacency matrix \(\bm{A}\in\{0,1\}^{n\times n}\).
For \(i\in\mathcal{V}\), define
\[
    \mathcal N(i):=\{j\in\mathcal{V}:\{i,j\}\in\mathcal{E}\},
    \qquad
    d_i:=|\mathcal N(i)|,
    \qquad
    \bm{D}:=\diag(d_1,\ldots,d_n).
\]
Degrees always refer to the original graph, including in restricted
systems. Connectedness gives \(d_i\geq1\), so \(\bm{D}^{-1/2}\) is well
defined. For \(\mathcal{S}\subseteq\mathcal{V}\), its neighborhood,
external boundary, and volume are
\begin{equation*}
    \mathcal N(\mathcal{S}):=\bigcup_{i\in\mathcal{S}}\mathcal N(i),
    \qquad
    \partial\mathcal{S}:=\mathcal N(\mathcal{S})\setminus\mathcal{S},
    \qquad
    \vol(\mathcal{S}):=\sum_{i\in\mathcal{S}}d_i.
\end{equation*}
Write \(\supp(\bm{x}):=\{i\in\mathcal{V}:x_i\neq0\}\).
We use \(\preceq,\succeq\) for Loewner order and \(\leq,\geq\) for
entrywise inequalities between vectors or matrices. The seed distribution
satisfies
\begin{equation*}
    \bm{s}\in\R_+^n,
    \qquad
    \inner{\one}{\bm{s}}=1.
\end{equation*}
Here \(\one\) is the all-ones vector. Our main results use a point source
\(\bm{s}=\bm{e}_v\), where \(\bm{e}_v\) is the \(v\)-th standard basis
vector; general seeds and their input and work costs are treated in
\cref{app:seed-scope}. For a teleportation parameter \(\alpha\in(0,1]\),
define
\begin{equation}
\begin{aligned}
    \bm{\mathcal L}&:=\bm{I}-\bm{D}^{-1/2}\bm{A}\bm{D}^{-1/2},\\
    \bm{Q}&:=\alpha\bm{I}+\frac{1-\alpha}{2}\bm{\mathcal L}
      =\frac{1+\alpha}{2}\bm{I}
       -\frac{1-\alpha}{2}\bm{D}^{-1/2}\bm{A}\bm{D}^{-1/2},\\
    \bm{b}&:=\alpha\bm{D}^{-1/2}\bm{s}.
\end{aligned}
    \label{eq:shared-pagerank-matrices}
\end{equation}
The normalized Laplacian satisfies
\(\bm{0}\preceq\bm{\mathcal L}\preceq2\bm{I}\), hence
\(\alpha\bm{I}\preceq\bm{Q}\preceq\bm{I}\). The PageRank quadratic is
\begin{equation}
    f(\bm{x}):=\frac12\inner{\bm{x}}{\bm{Q}\bm{x}}-\inner{\bm{b}}{\bm{x}},
    \qquad
    \nabla f(\bm{x})=\bm{Q}\bm{x}-\bm{b}.
    \label{eq:shared-pagerank-objective}
\end{equation}
Its unique minimizer and the corresponding lazy personalized PageRank
(PPR) vector are
\begin{equation*}
    \bm{x}_0^*:=\bm{Q}^{-1}\bm{b},
    \qquad
    \bm{\pi}:=\bm{D}^{1/2}\bm{x}_0^*.
\end{equation*}
For \(\rho>0\), the regularized PageRank (RPPR) objective and its unique
minimizer are defined by
\begin{equation}
    g_\rho(\bm{x}):=\alpha\rho\norm{\bm{D}^{1/2}\bm{x}}_1,
    \qquad
    F_\rho(\bm{x}):=f(\bm{x})+g_\rho(\bm{x}),
    \qquad
    \bm{x}_\rho^*:=\argmin_{\bm{x}\in\R^n}F_\rho(\bm{x}).
    \label{eq:shared-rppr-objective}
\end{equation}
We abbreviate the minimizer as \(\bm{x}^*\) when the objective is clear.
Writing \(\mathcal{S}_\rho^*:=\supp(\bm{x}_\rho^*)\), standard RPPR
properties give \(\bm{x}_\rho^*\geq\bm{0}\),
\(\vol(\mathcal{S}_\rho^*)\leq1/\rho\), and the coordinatewise
optimality conditions
\begin{equation}
    \nabla_i f(\bm{x}_\rho^*)
    \in
    \begin{cases}
        \{-\alpha\rho\sqrt{d_i}\}, & (\bm{x}_\rho^*)_i>0,\\
        [-\alpha\rho\sqrt{d_i},0], & (\bm{x}_\rho^*)_i=0.
    \end{cases}
    \label{eq:shared-rppr-kkt}
\end{equation}
For a point source $\vs=\eunit{v}$, $\vx_\rho^*\neq\vzero$ exactly when
\begin{equation}
 0<\rho<1/d_v.
 \label{eq:nonzero-rppr-regime}
\end{equation}
Indeed, the optimality condition at zero is
$\vb\leq\alpha\rho\mD^{1/2}\one$, which reduces to $\rho\geq1/d_v$.
A single degree query identifies the zero case. When $\alpha=1$, the
solution is explicit:
$\vx_\rho^*=((1-\rho d_v)_+/\sqrt{d_v})\eunit{v}$, where
$(t)_+:=\max\{t,0\}$.

\subsection{Accuracy guarantees}

The RPPR task takes $\rho>0$ and an additive tolerance
$\epsobj>0$, and asks for a sparse vector $\widehat\vx$ satisfying
\begin{equation}
 F_\rho(\widehat\vx)-F_\rho(\vx_\rho^*)\leq\epsobj.
 \label{eq:rppr-output-target}
\end{equation}
The output lists the nonzero coordinates of $\widehat\vx$ and their values;
unlisted coordinates are zero. For PPR, an output $\widehat\vx$ represents
$\widehat\vpi=\mD^{1/2}\widehat\vx$. The semantic target is
\begin{equation}
 \norm{\mD^{-1}(\widehat\vpi-\vpi)}_\infty
 =\norm{\mD^{-1/2}(\widehat\vx-\vx_0^*)}_\infty
 \leq\epsppr.
 \label{eq:semantic-ppr-target}
\end{equation}
Thus $\epsppr$ measures solution error per unit degree. The PageRank matrix $\mQ$ is a \emph{Stieltjes matrix}: symmetric positive
definite with nonpositive off-diagonal entries.
With $\vomega=\mD^{1/2}\one$, the identities
\begin{equation*}
 \mQ\vomega=\alpha\vomega,\qquad
 \mQ^{-1}\geq\vzero,
 \qquad |\mQ|\vomega=\vomega
\end{equation*}
will be used throughout; $|\mQ|$ denotes entrywise absolute values.
Since $\mA\one=\mD\one$, we have
$\mD^{-1/2}\mA\mD^{-1/2}\vomega=\vomega$, which gives the first identity
and, using the zero diagonal of $\mA$, the third. Also,
$\mI-\mQ\geq\vzero$ entrywise and
$\norm{\mI-\mQ}_2\leq1-\alpha<1$, so the Neumann series
$\mQ^{-1}=\sum_{k=0}^{\infty}(\mI-\mQ)^k$ converges and is entrywise
nonnegative.

\begin{proposition}[Regularization bias and objective-to-PPR conversion]
\label{prop:rppr-ppr-bridge}
For every $\rho>0$,
\begin{equation}
 \vzero\leq\vx_0^*-\vx_\rho^*\leq\rho\vomega.
 \label{eq:rppr-bias}
\end{equation}
Every nonnegative output satisfying \cref{eq:rppr-output-target} obeys
\begin{equation}
 \norm{\mD^{-1/2}(\widehat\vx-\vx_0^*)}_\infty
 \leq\rho+\sqrt{2\epsobj/\alpha}.
 \label{eq:objective-to-ppr}
\end{equation}
\end{proposition}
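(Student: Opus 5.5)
The plan is to prove the bias bound \cref{eq:rppr-bias} from the optimality conditions \cref{eq:shared-rppr-kkt} together with the Stieltjes structure of $\mQ$. Then \cref{eq:objective-to-ppr} follows by strong convexity and the triangle inequality.

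\textbf{Bias bound.} Set $\vr:=\mQ\vx_\rho^*-\vb=\nabla f(\vx_\rho^*)$. By \cref{eq:shared-rppr-kkt}, every coordinate satisfies $-\alpha\rho\sqrt{d_i}\leq r_i\leq 0$, so $-\alpha\rho\vomega\leq\vr\leq\vzero$ entrywise. Since $\vx_0^*=\mQ^{-1}\vb$,
\[
 \vx_0^*-\vx_\rho^*=-\mQ^{-1}\vr .
\]
Because $\mQ^{-1}\geq\vzero$ entrywise and $-\vr\geq\vzero$, the difference is nonnegative. For the upper bound, monotonicity of $\mQ^{-1}$ on nonnegative vectors gives $-\mQ^{-1}\vr\leq\alpha\rho\,\mQ^{-1}\vomega$. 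The identity $\mQ\vomega=\alpha\vomega$ gives $\mQ^{-1}\vomega=\vomega/\alpha$, so the bound equals $\rho\vomega$. This proves \cref{eq:rppr-bias}.

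\textbf{Objective-to-PPR conversion.} Use the triangle inequality
\[
 \norm{\mD^{-1/2}(\widehat\vx-\vx_0^*)}_\infty\leq\norm{\mD^{-1/2}(\vx_\rho^*-\vx_0^*)}_\infty+\norm{\mD^{-1/2}(\widehat\vx-\vx_\rho^*)}_\infty .
\]
By \cref{eq:rppr-bias}, the first term is at most $\rho$, since $(\mD^{-1/2}\vomega)_i=1$. For the second term, $F_\rho$ is $\alpha$-strongly convex because $\mQ\succeq\alpha\mI$ and $g_\rho$ is convex. Since $\vx_\rho^*$ minimizes $F_\rho$, we get
\[
 \tfrac{\alpha}{2}\norm{\widehat\vx-\vx_\rho^*}_2^2\leq F_\rho(\widehat\vx)-F_\rho(\vx_\rho^*)\leq\epsobj .
\]
Hence $\norm{\widehat\vx-\vx_\rho^*}_2\leq\sqrt{2\epsobj/\alpha}$. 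Finally, $d_i\geq1$ gives
\[
 \norm{\mD^{-1/2}\vy}_\infty\leq\norm{\vy}_\infty\leq\norm{\vy}_2 .
\]
Combining the two terms yields \cref{eq:objective-to-ppr}.

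\textbf{Remarks.} The nonnegativity hypothesis on $\widehat\vx$ is not needed in this argument. It is harmless, and presumably it matters only for interpreting $\widehat\vpi$.

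The only step needing care is the strong-convexity inequality with a nonsmooth term. I will justify it directly: write $F_\rho(\widehat\vx)-F_\rho(\vx_\rho^*)$ as
\[
 \inner{\nabla f(\vx_\rho^*)}{\widehat\vx-\vx_\rho^*}+\tfrac12\norm{\widehat\vx-\vx_\rho^*}_{\mQ}^2+g_\rho(\widehat\vx)-g_\rho(\vx_\rho^*).
\]
By optimality, $-\nabla f(\vx_\rho^*)\in\partial g_\rho(\vx_\rho^*)$, so the subgradient inequality gives
\[
 g_\rho(\widehat\vx)-g_\rho(\vx_\rho^*)\geq-\inner{\nabla f(\vx_\rho^*)}{\widehat\vx-\vx_\rho^*}.
\]
The linear terms therefore combine to be nonnegative. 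What remains is $\tfrac12\norm{\widehat\vx-\vx_\rho^*}_{\mQ}^2\geq\tfrac{\alpha}{2}\norm{\widehat\vx-\vx_\rho^*}_2^2$.
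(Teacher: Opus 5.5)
Your proposal is correct and matches the paper's proof step for step: the KKT sandwich $\vzero\leq\vb-\mQ\vx_\rho^*\leq\alpha\rho\vomega$ is multiplied by $\mQ^{-1}\geq\vzero$ using $\mQ^{-1}\vomega=\vomega/\alpha$, and then $\alpha$-strong convexity of $F_\rho$, $d_i\geq1$, and the triangle inequality give the conversion. You are also right that nonnegativity of $\widehat\vx$ is never used; the paper's argument does not use it either.
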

\begin{proof}
The KKT conditions give
$\vzero\leq\vb-\mQ\vx_\rho^*\leq\alpha\rho\vomega$.
Multiply by $\mQ^{-1}\geq\vzero$ and use
$\mQ^{-1}\vomega=\vomega/\alpha$ to obtain \cref{eq:rppr-bias}.
Strong convexity gives
$\norm{\widehat\vx-\vx_\rho^*}_2\leq\sqrt{2\epsobj/\alpha}$.
Since $d_i\geq1$, its degree-normalized maximum norm is no larger.
The triangle inequality proves \cref{eq:objective-to-ppr}.
\end{proof}
For example, to achieve the semantic PPR target \cref{eq:semantic-ppr-target}, it suffices to set
\begin{equation}
 \rho=\epsppr/2,\qquad \epsobj=\alpha\epsppr^2/8
 \label{eq:direct-ppr-budgets}
\end{equation}
is sufficient for \cref{eq:semantic-ppr-target}. No identification of the
two tolerances is involved. A second, residual-based sufficient condition is
\begin{equation}
 \norm{\mD^{-1/2}(\vb-\mQ\widehat\vx)}_\infty\leq\alpha\epsppr
 \quad\Longrightarrow\quad
 \norm{\mD^{-1/2}(\widehat\vx-\vx_0^*)}_\infty\leq\epsppr.
 \label{eq:ppr-residual-certificate}
\end{equation}
It follows from inverse positivity applied to the two coordinatewise
residual bounds. Our main RPPR algorithm uses its proved objective budget;
it does not need to scan the whole graph to check this sufficient PPR
certificate.

\subsection{Local access and computational cost}

The algorithms receive the seed label $v$, their numerical parameters, and
degree and adjacency-list access to $\mathcal G$. All other vertex labels
must be discovered through incident entries. No support information or
auxiliary graph data are supplied, and both algorithms use no graph-wide
preprocessing. Unless stated otherwise, we use an exact-real algebraic word model and
define the \emph{total work} of a complete execution by
\begin{equation}
 \Work:=N_{\mathrm{adj}}+N_{\mathrm{deg}}+N_{\mathrm{op}}+N_{\mathrm{out}}.
 \label{eq:total-work}
\end{equation}
Here $N_{\mathrm{adj}}$ counts adjacency-list entry inspections,
$N_{\mathrm{deg}}$ degree queries, $N_{\mathrm{op}}$ all remaining scalar
arithmetic operations, comparisons, random choices, and state reads or
writes, and $N_{\mathrm{out}}$ the words written to return the output. Each such operation
costs one unit. Composite operations, including dictionary updates,
local matrix and solver construction, threshold reporting, certification,
and cleanup, are charged for all their elementary operations.

Every inspection is counted, including repeated access to cached
incidences. Scanning a set $\mathcal S$ reads only its vertices' lists and
contributes $\vol(\mathcal S)$ to $N_{\mathrm{adj}}$. Full scans therefore
contribute $\sum_j d_{u_j}$, where $u_j$ is the vertex scanned at the
$j$th scan and repeated visits are included. For randomized algorithms,
$\Work$ includes failed trials and restarts; expected bounds control
$\E[\Work]$. Peak storage is reported separately.

An exact scalar occupies one word regardless of its encoding length.
For rational inputs, a density $f_i$ and original degree $d_i$ represent
$x_i=f_i\sqrt{d_i}$ and $\pi_i=d_if_i$, avoiding square-root computations.
The deterministic schedule uses dyadic halving, arithmetic, and comparisons.
\Cref{app:bounded-arithmetic} separately charges numerical and label
encoding costs for a specified bounded-arithmetic implementation;
floating-point stability requires separate analysis.

In soft bounds, $\widetilde{\mathcal{O}}$ hides only polylogarithms in inverse numerical
parameters, requested accuracy, explicitly supplied failure probability, and
the number of exposed records. It hides no ambient graph-size factor, no
unknown minimum sign margin, and no uncharged setup. In the nonzero regime
\cref{eq:nonzero-rppr-regime}, set
\begin{equation}
 L_{\mathrm{par}}:=\log\frac{2}{\alpha\rho\min\{1,\epsobj\}}.
 \label{eq:main-parameter-log}
\end{equation}

\subsection{Main results}

The next two theorems establish the RPPR guarantee, and
\cref{cor:main-ppr} gives its PPR consequence.
A nonnegative vector $\widehat\vx$ satisfying $\mQ\widehat\vx\leq\vb$
is called a \emph{subsolution}. Both algorithms can return subsolutions
lying coordinatewise below $\vx_\rho^*$ at the stated work bounds.

\begin{theorem}[Deterministic local acceleration]
\label{thm:deterministic-rppr}
For every graph and point source above, $0<\alpha\leq1$,
$0<\rho<1/d_v$, and $\epsobj>0$, \cref{alg:deterministic-rppr} returns
\begin{equation*}
 \vzero\leq\widehat\vx\leq\vx_\rho^*,\qquad
 F_\rho(\widehat\vx)-F_\rho(\vx_\rho^*)\leq\epsobj,
 \qquad\vol(\supp\widehat\vx)\leq1/\rho.
\end{equation*}
Its total work, defined in \cref{eq:total-work}, satisfies
\begin{equation}
 \Work=\mathcal{O}\!\left(\frac{L_{\mathrm{par}}}{\rho\sqrt\alpha}
           \log^2\!\left(2+\frac{L_{\mathrm{par}}}{\rho\sqrt\alpha}\right)\right)
 =\softO\!\left(\frac1{\rho\sqrt\alpha}\right),
 \label{eq:deterministic-main-work}
\end{equation}
and its peak storage is $\mathcal{O}(1+L_{\mathrm{par}}/(\rho\sqrt\alpha))$ words.
For $\rho\geq1/d_v$ it returns the exact zero solution in constant word work.
\end{theorem}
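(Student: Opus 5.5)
Proof plan. The argument rests on three facts: a monotone continuation schedule in $\rho$, an accelerated correction inside a safe box at each level, and a flow-based charge for the scanning cost.

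\textbf{Schedule.} Set $\rho_0:=1/d_v$, where $\vx_{\rho_0}^*=\vzero$. Halve dyadically, $\rho_{t+1}=\max\{\rho_t/2,\rho\}$. This gives $T=\mathcal O(\log(1/(\rho d_v)))$ stages, absorbed into $L_{\mathrm{par}}$.

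\textbf{Monotonicity.} Since $\mQ$ is a Stieltjes matrix and $\rho\mapsto\vx_\rho^*$ is coordinatewise nonincreasing (a comparison argument using inverse positivity, as in \cref{prop:rppr-ppr-bridge}), each earlier optimum is a subsolution lying below the next one. This justifies the safe baseline $\vzero\le\vy_t\le\vx_{\rho_t}^*$.

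\textbf{Correction problem at stage $t$.} Minimize $F_{\rho_t}(\vy_t+\vdelta)$ over the box $\vzero\le\vdelta\le c\,\rho_{t}\vomega$, with the mass constraint $\inner{\vomega}{\vdelta}\lesssim \vol$-scale $1/\rho_t$. The box is justified by the bias estimate
\[
\vx_{\rho_{t+1}}^*-\vx_{\rho_t}^*\le \mQ^{-1}(\alpha(\rho_t-\rho_{t+1})\vomega)=(\rho_t-\rho_{t+1})\vomega,
\]
proved exactly like \cref{eq:rppr-bias}. The mass cap comes from $\vol(\mathcal S^*_{\rho_{t+1}})\le 1/\rho_{t+1}$. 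On the box, $g_\rho$ is linear, so the problem is a box-constrained strongly convex quadratic with condition number $1/\alpha$.

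\textbf{Accelerated projected gradient and the two energies.} Run accelerated projected gradient (FISTA-type with strong-convexity momentum) for $K=\mathcal O(\alpha^{-1/2}\log(1/(\alpha\rho\epsobj)))$ steps per stage. The first energy is the standard estimate-sequence potential. It gives objective error $\le\epsobj$ at the last stage, with the initial gap bounded by $\mathcal O(\rho_t)$ via the box radius.

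To keep the output a subsolution below $\vx_\rho^*$, finish each stage with a few monotone (ISTA-type) clamped steps from below. Alternatively, take the coordinatewise minimum with a certified upper sequence. I would use the monotone clean-up, which preserves $\mQ\vx\le\vb$ by the Stieltjes property. The volume bound $\vol(\supp\widehat\vx)\le1/\rho$ then follows from $\widehat\vx\le\vx_\rho^*$.

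The second energy is a weighted $\ell_1$ or dual potential for the same recurrence. It controls the "response" $\nabla f$ on inactive coordinates, so that a coordinate becomes active only if its gradient exceeds $\alpha\rho_t\sqrt{d_i}$ in some iterate's extrapolated point.

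\textbf{Main obstacle: charging scans.} The key bound is on $\sum_j d_{u_j}$ over all iterations, including momentum excursions outside $\mathcal S^*$. I would argue that per iteration the scanned volume is at most $\mathcal O(1/\rho_t)$. A vertex $u$ is scanned only if its coordinate is nonzero in the iterate or extrapolation. Nonzero coordinates must carry mass at least proportional to $\rho_t\sqrt{d_u}$ of gradient flow, since a prox step activates $u$ only when the incoming signed flow $\sum_{w\sim u}(\cdot)$ exceeds $\alpha\rho_t\sqrt{d_u}$.

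The total available flow per iteration is bounded by the mass cap times the operator norm, using $|\mQ|\vomega=\vomega$. Thus the selected-flow argument yields $\sum_{u\text{ active}} d_u\lesssim 1/\rho_t$ per step. Summing $K$ steps over $T$ geometric stages gives $\sum_t K/\rho_t=\mathcal O(L_{\mathrm{par}}/(\rho\sqrt\alpha))$, dominated by the last stage. The signed momentum terms are the delicate part: negative flow can cancel. I would first try to bound absolute flow using the box constraint, which caps $|\delta_i|\le c\rho_t\sqrt{d_i}$ and hence the absolute extrapolation magnitude.

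\textbf{Remaining costs.} Threshold detection, meaning finding which boundary vertices to activate without scanning all of them, is realized by a sparse threshold reporter (a dictionary or heap keyed on accumulated flow). This costs the extra $\log^2$ factor and yields \cref{eq:deterministic-main-work}. Storage is bounded by the number of touched records, $\mathcal O(1+L_{\mathrm{par}}/(\rho\sqrt\alpha))$. The case $\rho\ge1/d_v$ uses one degree query, by \cref{eq:nonzero-rppr-regime}.
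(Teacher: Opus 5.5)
Your schedule, the monotonicity of $\vx_r^*$, and the idea of boxing the correction all match the paper. The step that carries the theorem, charging $\sum_j d_{u_j}$ under momentum, fails as you propose it.

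By your own accounting, an outside vertex $u$ is activated only when its incoming flow exceeds $\alpha\rho_t\sqrt{d_u}$, i.e.\ $\alpha\rho_t d_u$ in mass units. The identity $|\mQ|\vomega=\vomega$ bounds the available flow only by the weighted mass of the iterate, which is $\mathcal O(1)$ at best. That gives $\mathcal O(1/(\alpha\rho_t))$ volume per step and $\softO(\alpha^{-3/2}/\rho_t)$ per stage, which is worse than ISTA; your $\mathcal O(1/\rho_t)$ silently drops the factor $\alpha$. Your box fallback loses the same factor: $|\vxi_k-\vxi^*|\le c\rho_t\vomega$ only yields $|\mQ(\vxi_k-\vxi^*)|\le c\rho_t\vomega$, which exceeds any margin of order $\alpha\rho_t\omega_i$ by a factor $1/\alpha$. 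Also, complementarity at the optimal boundary can be arbitrarily weak. You therefore need an analytical set with a uniform margin, such as $\gC_r=\supp\vx^*_{r/2}$, outside which the optimal slack is at least $\lambda_r\omega_i/2$.

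What the paper adds is an $\ell_2$ response bound strong enough to beat the $1/\alpha$ loss. It uses a second accelerated energy in the $\mQ$ metric for $\mathcal A_r(\vxi)=\tfrac12\norm{\mQ\vxi-\vh}_2^2+\alpha\lambda_r\vomega^\trans\vxi$, whose $\mQ$-gradient is exactly the update direction, compared against $\vt=\mQ^{-1}\vh$ rather than a minimizer. This is legitimate for the Euclidean projection only because of the sector inequality $\inner{\vp-\vt}{\mQ(\vq-\vp)}\ge0$. That inequality needs the box height $4r$ to match the diffuse source $\vh\le4\lambda_r\vomega$, and it needs a weighted-mass cap containing $\vt$. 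The paper uses $m_r\le1$; your cap ``on the scale $1/\rho_t$'' confuses weighted mass with support volume. The result is $\norm{\mQ(\vxi_k-\vxi^*)}_2^2\le18\alpha^2 r$. A selected-flow inequality relative to $\vxi^*$ is then telescoped over the whole stage. The excess kinetic mass cancels only cumulatively, so no per-iteration $\mathcal O(1/r)$ bound is proved or needed. Weighted Cauchy--Schwarz against the selected volume, plus Young's inequality, then gives $\sum_{k<K}\vol(\supp\vz_{k+1})\le76K/r$.

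Your stage handoff is a second gap. The ISTA map is order-preserving, so it keeps iterates below $\vx_r^*$ only if the start is already below it; the accelerated candidate is not. From such a start ISTA converges at the unaccelerated rate, needing order $1/\alpha$ steps, not ``a few''. Taking a minimum with an upper certificate also need not land below $\vx_\rho^*$.

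The paper instead takes one projected-gradient step, which makes both $\norm{\vp-\vx_r^*}_2$ and some subgradient at most $\delta/2$. It then shifts down to $\vu=[\vp-\delta\vomega]_+$, using $\mQ\vomega=\alpha\vomega$ and the Stieltjes signs, and takes a coordinatewise maximum with the old baseline. This yields $\vzero\le\overline\vx^+\le\vx_r^*$ and also $\vzero\le\vb-\mQ\overline\vx^+\le2\lambda_r\vomega$. That residual bound is exactly the diffuseness that bounds $\vt$ and the second energy at the next stage. Your box justification via the exact bias $\vx^*_{\rho_{t+1}}-\vx^*_{\rho_t}$ ignores the baseline's own error and cannot supply it.

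Finally, the averaging $\vxi_{k+1}=\chi\vxi_k+\theta\vz_{k+1}$ touches the entire historical support at every step. Without a common scale $\sigma_k=\chi^k$, the ledger picks up an extra factor of up to $K$. The reporter must also compute the exact box-and-mass water-filling multiplier, not just detect activations.
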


The proof occupies \cref{sec:deterministic-method}. The bound improves
the graph-independent worst-case work from
$\softO(1/(\alpha\rho))$ to $\softO(1/(\rho\sqrt\alpha))$.
The principal new estimate controls the cumulative kinetic volume of an
entire stage while allowing accelerated updates outside the optimal support.

\begin{theorem}[Randomized local acceleration, with support adaptation]
\label{thm:randomized-rppr-main}
Let $k_*=|\mathcal S_\rho^*|$ and $V_*=\vol(\mathcal S_\rho^*)$.
For $0<\alpha\leq1$, $0<\rho<1/d_v$, and $\epsobj>0$, there is a Las
Vegas local algorithm returning $\vzero\leq\widehat\vx\leq\vx_\rho^*$
satisfying \cref{eq:rppr-output-target}. It can also guarantee
$\vzero\leq\vb-\mQ\widehat\vx\leq2\alpha\rho\vomega$.
Its expected total work satisfies
\begin{equation}
 \E[\Work]=\softO\!\left(V_*\min\{k_*,\alpha^{-1/2}\}\right)
 \leq\softO\!\left(
       \min\left\{\frac1{\rho^2},\frac1{\rho\sqrt\alpha}\right\}\right).
 \label{eq:randomized-adaptive-work}
\end{equation}
Alternatively, a single capped run reports an explicit failure with
probability at most $\zeta\in(0,1)$, has the same expected work up to
failure-probability logarithms, and always satisfies the output guarantee
when it returns a vector. All support admissions are deterministically
certified.
\end{theorem}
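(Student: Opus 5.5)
The plan is to design a threshold-batching active-set method and prove three things separately: support safety, a bound on the number of batches (batch depth), and the cost of each batch. The algorithm keeps a certified subsolution $\vx^{(t)}$ supported on an active set $\mathcal S_t\subseteq\mathcal S_\rho^*$, starting from $\mathcal S_0=\{v\}$. At each round it does four things:
\begin{enumerate}
\item Solve the restricted problem $\mQ_{\mathcal S_t\mathcal S_t}\vy=(\vb-\alpha\rho\vomega)_{\mathcal S_t}$ approximately, using a randomized nearly-linear SDD solver on the principal submatrix. The matrix has $\mathcal O(\vol(\mathcal S_t))$ nonzeros, counted with original degrees.
\item Certify the solution through its residual. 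We shift it slightly downward so that it remains a subsolution, which yields $\vzero\le\vb-\mQ\vx\le2\alpha\rho\vomega$ on $\mathcal S_t$.
\item Scan $\partial\mathcal S_t$ and compute the violations $\nabla_i f(\vx)+\alpha\rho\sqrt{d_i}$.
\item Admit into the next batch every boundary vertex whose violation, scaled by $\sqrt{d_i}$, exceeds a common threshold.
\end{enumerate}

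\textbf{Support safety.} This follows from the Stieltjes comparison principle. If $\vx\le\vx_\rho^*$ and $\nabla_i f(\vx)<-\alpha\rho\sqrt{d_i}$ at a coordinate with $x_i=0$, then $i\in\mathcal S_\rho^*$. The reason is that nonpositive off-diagonals make $\nabla_i f(\vx_\rho^*)\le\nabla_i f(\vx)$ when $x_i=(\vx_\rho^*)_i=0$, which would contradict \cref{eq:shared-rppr-kkt}.

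Monotonicity $\vx^{(t)}\le\vx^{(t+1)}\le\vx_\rho^*$ follows from $\mQ^{-1}_{\mathcal S\mathcal S}\ge\vzero$ and from restriction to nested supports. This gives the bounds $\vol(\mathcal S_t)\le V_*$ and at most $k_*$ nonempty batches.

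Termination comes from an empty batch. In that case every exterior violation is below the threshold, which certifies that the restricted solution is an approximate optimum of $F_\rho$. The objective gap is controlled by the violation mass, measured against $\vomega$ through $\mQ\vomega=\alpha\vomega$. Choosing the threshold at level $\alpha\rho$ times a polynomially small factor in $\epsobj$ therefore yields \cref{eq:rppr-output-target} with only logarithmic cost.

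\textbf{Batch depth.} The main obstacle is the $\widetilde{\mathcal O}(\alpha^{-1/2})$ bound on the number of batches. The plan is to write the optimal solution via a block-Cholesky (Schur complement) decomposition with respect to the current active set. The residual outside $\mathcal S_t$ that remains to be discovered is then $\mQ_{\bar{\mathcal S}\mathcal S}\mQ_{\mathcal S\mathcal S}^{-1}$ applied to the boundary data. A vertex that remains unadmitted after $k$ batches must lie at graph distance at least $k$ from the certified region in a suitable sense. The influence it receives is a degree-$k$ polynomial in $\mI-\mQ$, whose spectrum lies in $[0,1-\alpha]$.

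A Chebyshev polynomial argument should show that after $k\gtrsim\alpha^{-1/2}\log(1/(\alpha\rho\epsobj))$ batches, the portion of $\vx_\rho^*$ not yet captured cannot still exceed the threshold. The decay rate is $\exp(-k\sqrt\alpha)$ rather than $(1-\alpha)^k$. To make this rigorous, one would need to:
\begin{itemize}
\item compare the batch iterates with a Chebyshev-accelerated Jacobi sequence that dominates them entrywise, using positivity;
\item show that the Chebyshev sequence's support expands by one layer per step, while its error in the $\vomega$-weighted norm decays at rate $\exp(-k\sqrt\alpha)$;
\item conclude that every vertex the batching has not yet admitted carries violation below the threshold after $\widetilde{\mathcal O}(\alpha^{-1/2})$ rounds.
\end{itemize}

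\textbf{Total work.} Each round costs $\widetilde{\mathcal O}(\vol(\mathcal S_t)+\vol(\partial\text{-scan}))=\widetilde{\mathcal O}(V_*)$, charging the solver, the certification, and the boundary scan of $\mathcal S_t$, whose cost is at most its volume. Multiplying by $\min\{k_*,\alpha^{-1/2}\}$ gives \cref{eq:randomized-adaptive-work}, and $V_*\le1/\rho$, $k_*\le V_*$ gives the right-hand inequality.

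\textbf{Las Vegas guarantee.} Solver failures are detected by the residual certificate. In that case the algorithm restarts, or, in the single capped-run variant, reports failure after $\mathcal O(\log(1/\zeta))$ attempts. Since each attempt succeeds with constant probability, the expected work grows only by a constant factor. Because every admission and every returned vector is certified deterministically, correctness never depends on randomness.
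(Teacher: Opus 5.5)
Your safety argument, per-round cost accounting, and Las Vegas wrapper are essentially the paper's. The batch-depth step, however, has a genuine gap.

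You aim to show that the threshold process itself stops: after $\softO(\alpha^{-1/2})$ rounds no boundary violation exceeds the threshold, so the loop ends on an empty batch. The paper neither proves nor needs this. \cref{alg:threshold-batch-rppr} is \emph{capped} at $J=T_{\mathrm b}q_{\mathrm{acc}}=\mathcal O(\alpha^{-1/2}\log(1/\epsobj))$ rounds and returns the current face even when violators remain. \cref{thm:threshold-batch-depth} then bounds the objective gap of that capped face by $8q_\alpha^{2J}+\vartheta^2/(\alpha\rho)$; it says nothing about the time to emptiness.

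Emptiness does not follow from such a gap bound. On $\mathcal S_\rho^*\setminus\gU_J$ the residual is only controlled by $\sqrt{2\,\mathrm{gap}}$, and the threshold term leaves this of order $\vartheta/\sqrt{\alpha\rho}\gg\vartheta$. Nor should you expect emptiness in general. Outside residuals only grow as faces grow, so a neighbor of the seed can sit just below threshold and be tipped over only by an admission elsewhere, which may tip another, and so on. Such delayed-admission chains are tied neither to graph distance nor to $\alpha^{-1/2}$. In particular, ``unadmitted after $k$ batches implies distance at least $k$'' is false, and the only evident general bound on the number of rounds is the number of admissible vertices. Without a cap and an accuracy estimate for the capped face, you have neither the objective guarantee nor the $\min\{k_*,\alpha^{-1/2}\}$ count.

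The suggested mechanism also fails as stated. A Chebyshev-accelerated sequence cannot dominate the iterates entrywise. The Chebyshev residual polynomial changes sign on $[\alpha,1]$, so $p(\mQ)\vc$ is not ordered against $\mQ^{-1}\vc$ or against face solutions. The natural monotone comparison, plain Jacobi, decays only like $(1-\alpha)^k$. Exact face solves are not bounded-degree polynomials either, so layer-by-layer support growth does not describe them.

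The paper separates the two roles:
\begin{itemize}
\item Positivity enters only through the exact triangular comparison $\vzero\le\widehat{\mC}^{-1}\le\mC^{-1}$ between the block-respecting Cholesky factor and its block-bidiagonal truncation. The causality inequality $\widehat{\mC}\vz\le[\vg_0;\vartheta\one]$ says each future block was below threshold when its predecessor was admitted. Together these give $\vzero\le\vz\le\widehat{\mC}^{-1}[\vg_0;\vartheta\one]$.
\item Chebyshev approximation is applied only in $\ell_2$, to the block-tridiagonal matrix $\widehat{\mC}\widehat{\mC}^\trans$ with spectrum in $[\alpha,2]$. Locality is therefore measured in block index, not graph distance.
\item With the exact tail identity $2(\phi_\rho(\vx^{\gU_J})-\phi_\rho(\vx_\rho^*))=\sum_{k>J}\norm{\vz_k}_2^2$, the seed part decays like $q_\alpha^{2J}$. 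Each delayed coordinate is charged once, for a total of $\vartheta^2|\mathcal S_\rho^*|/\alpha\le\vartheta^2/(\alpha\rho)$.
\end{itemize}

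Two smaller points follow once you adopt the cap.
\begin{itemize}
\item The bound $\vzero\le\vb-\mQ\widehat{\vx}\le2\alpha\rho\vomega$ must hold on all coordinates, not only on $\mathcal S_t$. At a capped exit the boundary residuals can be large, so you need the repair of \cref{lem:det-repair} (a projected-gradient step, a downward shift by $\delta\vomega$, and a tightened objective budget), as in \cref{cor:randomized-safe-repair}.
\item $\mQ\vx\le\vb$ does not imply $\vx\le\vx_\rho^*$, which your safety argument uses. You need $\mQ\vx\le\vc_\rho$ on $\supp\vx$, or the paper's route of comparing with the exact face solution through $|(\mQ\ve)_i|\le\nu$.
\end{itemize}

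Your $\vomega$-weighted certificate at an empty batch is valid. For an exact face, $2\,\mathrm{gap}=\sum_{i\notin\gU}(\vx_\rho^*)_i r_i^\gU\le\tau\,\vomega^\trans\vx_\rho^*\le\tau$ under a degree-scaled threshold $\tau$.
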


The factors suppressed in \cref{eq:randomized-adaptive-work} include
logarithmic objective-accuracy dependence in the batch count and in the SDD
solves. The refined bound follows because every nonempty batch adds a new
vertex, while \cref{thm:threshold-batch-depth} bounds the number of batches
spectrally. Thus the randomized method retains the small-support bound as
well as the accelerated worst-case envelope: up to logarithmic factors,
the smaller of support size and the inverse square root of the teleportation
parameter determines how many restricted systems need to be solved.
Its graph and boundary records have size $\mathcal{O}(V_*)$; numerical workspace is
that of one SDD call on the current active set. Allocation is charged, so expected peak
space is in particular bounded by the displayed expected work.

\begin{corollary}[Local PPR at semantic accuracy]
\label{cor:main-ppr}
For $0<\epsppr<1$, degree-normalized PPR error $\epsppr$ can be attained
in $\softO(1/(\epsppr\sqrt\alpha))$ deterministic work, or in the same
expected work by a Las Vegas algorithm. The deterministic output obeys
$\vzero\leq\widehat\vx\leq\vx_0^*$, and both outputs have support volume
at most $2/\epsppr$. If $\epsppr\geq1$, zero suffices.
\end{corollary}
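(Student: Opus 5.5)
The corollary follows by instantiating the two main theorems with the budgets \cref{eq:direct-ppr-budgets} and applying \cref{prop:rppr-ppr-bridge}. First dispose of $\epsppr\geq1$. The bias bound \cref{eq:rppr-bias} at any $\rho$ gives, in the limit $\rho\to\infty$ (or directly from $\vx_0^*=\mQ^{-1}\vb\leq\mQ^{-1}\alpha\vomega=\vomega$, using $\vb=\alpha\mD^{-1/2}\eunit{v}\leq\alpha\vomega$ and $\mQ^{-1}\geq\vzero$), the bound $\vzero\leq\vx_0^*\leq\vomega$. Hence $\norm{\mD^{-1/2}(\vzero-\vx_0^*)}_\infty\leq1\leq\epsppr$, so zero suffices and costs constant work.

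For $0<\epsppr<1$, set $\rho=\epsppr/2$ and $\epsobj=\alpha\epsppr^2/8$. If $\rho\geq1/d_v$, then $\vx_\rho^*=\vzero$ exactly. We return zero, and \cref{eq:rppr-bias} gives error at most $\rho\leq\epsppr$. Otherwise run \cref{alg:deterministic-rppr}. By \cref{thm:deterministic-rppr} its output is nonnegative, satisfies \cref{eq:rppr-output-target}, lies below $\vx_\rho^*\leq\vx_0^*$ (the last inequality by \cref{eq:rppr-bias}), and has support volume at most $1/\rho=2/\epsppr$. Then \cref{eq:objective-to-ppr} gives error at most $\epsppr/2+\sqrt{2\epsobj/\alpha}=\epsppr/2+\epsppr/2=\epsppr$.

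For the work, $1/(\rho\sqrt\alpha)=2/(\epsppr\sqrt\alpha)$. The hidden factor $L_{\mathrm{par}}=\log(2/(\alpha\rho\min\{1,\epsobj\}))$ is $\mathcal{O}(\log(1/(\alpha\epsppr)))$, since $\epsobj\leq1$ and $1/\epsobj=8/(\alpha\epsppr^2)$. This factor is polylogarithmic in the inverse parameters, so it is absorbed by $\softO$.

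For the Las Vegas claim, invoke \cref{thm:randomized-rppr-main} with the same budgets. Its expected work is at most $\softO(1/(\rho\sqrt\alpha))=\softO(1/(\epsppr\sqrt\alpha))$ by the right-hand envelope of \cref{eq:randomized-adaptive-work}. Its output obeys $\vzero\leq\widehat\vx\leq\vx_\rho^*$, so $\supp\widehat\vx\subseteq\mathcal S_\rho^*$ and the support volume is at most $\vol(\mathcal S_\rho^*)\leq1/\rho=2/\epsppr$. The accuracy argument is identical to the deterministic case.

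The only point that needs care is the support-volume claim for the randomized output. It relies on $\widehat\vx\leq\vx_\rho^*$ together with nonnegativity, which gives support containment, and on the standard bound $\vol(\mathcal S_\rho^*)\leq1/\rho$. Everything else is direct substitution.
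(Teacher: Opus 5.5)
Your proof is correct and follows the paper's route: substitute the budgets \cref{eq:direct-ppr-budgets} into the two main theorems, apply \cref{prop:rppr-ppr-bridge}, and observe that $L_{\mathrm{par}}$ is only logarithmic in $1/\alpha$ and $1/\epsppr$. For $\epsppr\geq1$ the paper uses $\vpi\geq\vzero$, $\one^\trans\vpi=1$ and $d_i\geq1$; your direct bound $\vx_0^*\leq\alpha\mQ^{-1}\vomega=\vomega$ works equally well, but the phrase ``in the limit $\rho\to\infty$'' is backwards, since the useful instance of the bias bound is $\rho=1/d_v$, where $\vx_\rho^*=\vzero$.
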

\begin{proof}
Use \cref{eq:direct-ppr-budgets} and
\cref{prop:rppr-ppr-bridge}. The logarithm of the requested objective
accuracy remains a sum of $\log(1/\alpha)$ and $\log(1/\epsppr)$.
For $\epsppr\geq1$, use $\vpi\geq\vzero$, $\one^\trans\vpi=1$, and
$d_i\geq1$.
\end{proof}

\paragraph{Bounded arithmetic.}
For rational numerical inputs, \cref{thm:bounded-deterministic-rppr} gives
an implementation with the same $\softO(1/(\rho\sqrt\alpha))$ local
operation bound using integers of controlled length. With schoolbook
integer arithmetic its bit cost is
$\softO(B^2/(\rho\sqrt\alpha))$, where $B$ explicitly includes the
input numerical encodings, encountered label and degree encodings, and
parameter and precision logarithms. This statement has no hidden dependence
on a minimum activation margin. It concerns a specified rounded algorithm,
not an arbitrary numerical implementation of the exact recurrence.

\section{Stieltjes Geometry and Local Certificates}\label{sec:obstacle-active-sets}
This section collects structural properties and local certificates used by
the two algorithms.
Monotonicity and mass bounds justify deterministic continuation in
\cref{sec:deterministic-method}. Safe active-set expansion and a local
objective certificate support the randomized method in
\cref{sec:randomized-method}. An electrical-flow interpretation is given
in \cref{app:grounded-flow-dual}.

\subsection{Nonnegative formulation and monotonicity}

On nonnegative vectors, the weighted $\ell_1$ penalty is linear. Absorb it
into the load by defining
\begin{equation}
 \vc_\rho:=\vb-\alpha\rho\vomega,
 \qquad
 \phi_\rho(\vx):=\frac12\vx^\trans\mQ\vx-\vc_\rho^\trans\vx.
 \label{eq:shifted-obstacle}
\end{equation}
Thus $F_\rho(\vx)=\phi_\rho(\vx)$ for $\vx\geq\vzero$.
Only the seed coordinate of $\vc_\rho$ can be positive, which will imply
connectivity of the optimal support.
Principal submatrices inherit the spectral bounds and off-diagonal signs
of $\mQ$, so the same Neumann-series argument gives nonnegative principal
inverses.

The reformulation and KKT conditions below follow
\citet[Section~2]{martinezrubio2023accelerated}, building on the
nonnegativity analysis of \citet[Theorem~1]{fountoulakis2019variational}.
The latter paper also discusses connected support for connected seed sets
(p.~567). We include a short proof in our point-source setting for
completeness.

\begin{lemma}[Nonnegative reformulation and connected support]
\label{thm:obstacle-reduction}
The RPPR problem has the same unique optimum and optimum value as
\begin{equation}
 \min_{\vx\geq\vzero}\phi_\rho(\vx).
 \label{eq:obstacle-problem}
\end{equation}
Its KKT system is
\begin{equation}
 \vx\geq\vzero,
 \qquad
 \vw:=\mQ\vx-\vc_\rho\geq\vzero,
 \qquad
 x_iw_i=0\quad(i\in[n]).
 \label{eq:obstacle-kkt}
\end{equation}
In the nontrivial regime \eqref{eq:nonzero-rppr-regime}, the support
$\mathcal S_\rho^*$ is connected and contains $v$.
\end{lemma}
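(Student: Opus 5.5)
The argument has three parts: show the RPPR optimum is nonnegative and that $F_\rho$ and $\phi_\rho$ agree on the orthant, read off the KKT system of the constrained problem, and get connectivity of the support from the sign pattern of $\mQ$.

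\textbf{Step 1: equivalence of the two problems.} We take as given the standard fact quoted in \cref{subsec:shared-source-aligned-problem} that $\vx_\rho^*\geq\vzero$. A self-contained alternative is the following. For any $\vx$, replace it by $|\vx|$. Then $\vx^\trans\mQ\vx\geq|\vx|^\trans\mQ|\vx|$, because the off-diagonal entries of $\mQ$ are nonpositive. Also $\vb\geq\vzero$ gives $\vb^\trans\vx\leq\vb^\trans|\vx|$, and the penalty is unchanged. Hence $F_\rho(|\vx|)\leq F_\rho(\vx)$, and by uniqueness of the strongly convex minimizer, $\vx_\rho^*=|\vx_\rho^*|\geq\vzero$. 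On $\vx\geq\vzero$ we have $g_\rho(\vx)=\alpha\rho\,\vomega^\trans\vx$, so $F_\rho=\phi_\rho$ there. It follows that
\[
\min_{\vx\geq\vzero}\phi_\rho=\min_{\vx\geq\vzero}F_\rho\geq\min_{\vx} F_\rho=F_\rho(\vx_\rho^*),
\]
and equality holds because $\vx_\rho^*$ is feasible. Since $\phi_\rho$ is strongly convex, its constrained minimizer is unique, so it equals $\vx_\rho^*$.

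\textbf{Step 2: the KKT system.} The constraints are linear, so \cref{eq:obstacle-kkt} is the standard necessary and sufficient optimality system for a convex quadratic over the nonnegative orthant. The multiplier is $\vw=\nabla\phi_\rho(\vx)=\mQ\vx-\vc_\rho$. As a consistency check, $\vw=\nabla f(\vx)+\alpha\rho\vomega$, so this matches \cref{eq:shared-rppr-kkt}.

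\textbf{Step 3: the support contains $v$ and is connected.} This is the only step needing an idea. In the regime \eqref{eq:nonzero-rppr-regime}, $\vx^*:=\vx_\rho^*\neq\vzero$. Let $\mathcal C$ be a connected component of the subgraph induced on $\mathcal S:=\mathcal S_\rho^*$, and suppose $v\notin\mathcal C$. On $\mathcal C$ complementarity gives $(\mQ\vx^*)_{\mathcal C}=(\vc_\rho)_{\mathcal C}$. Since $v\notin\mathcal C$, $(\vc_\rho)_{\mathcal C}=-\alpha\rho\,\vomega_{\mathcal C}<\vzero$.

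Expanding the row block,
\[
(\mQ\vx^*)_{\mathcal C}=\mQ_{\mathcal C\mathcal C}\vx^*_{\mathcal C}+\mQ_{\mathcal C,\mathcal S\setminus\mathcal C}\,\vx^*_{\mathcal S\setminus\mathcal C}.
\]
The cross term vanishes, because there are no edges between $\mathcal C$ and the rest of $\mathcal S$ and $\mQ$ has zero entries off the edges. Therefore
\[
\vx^*_{\mathcal C}=\mQ_{\mathcal C\mathcal C}^{-1}(\vc_\rho)_{\mathcal C}\leq\vzero.
\]
Here we use that the principal inverse is entrywise nonnegative, by the Neumann-series remark in \cref{sec:obstacle-active-sets}. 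This contradicts $\vx^*_{\mathcal C}>\vzero$. Hence every component contains $v$, so there is exactly one component, $\mathcal S$ is connected, and $v\in\mathcal S$.

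The one point to be careful about is the use of a component of the induced subgraph rather than of $\mathcal S$ itself, so that the cross term genuinely vanishes.
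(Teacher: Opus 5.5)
Your proposal is correct and follows the paper's proof: the $|\vx|$ comparison with uniqueness gives the reformulation, and connectivity comes from stationarity on a support component, $\mQ_{\gC\gC}\vx^*_\gC=(\vc_\rho)_\gC<\vzero$, together with a nonnegative principal inverse. The only minor difference is how you get $v\in\mathcal S_\rho^*$: the paper shows it directly (a zero seed coordinate would have strictly negative slack), while you deduce it from $\vx_\rho^*\neq\vzero$ and the fact that every support component must contain $v$; both routes are valid.
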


\begin{proof}
The signs $Q_{ij}\leq0$ for $i\neq j$ and $\vb\geq\vzero$ give
$F_\rho(|\vx|)\leq F_\rho(\vx)$. Uniqueness therefore implies
$\vx_\rho^*\geq\vzero$. Equality of the objectives on the nonnegative
orthant proves the reformulation. Its optimality conditions are
\cref{eq:obstacle-kkt}, and strong convexity gives uniqueness.

At the seed,
\[
 (\vc_\rho)_v=\frac{\alpha}{\sqrt{d_v}}(1-\rho d_v)>0.
\]
If $(\vx_\rho^*)_v=0$, its slack is strictly negative, contradicting
\cref{eq:obstacle-kkt}.  Thus $v\in\mathcal S_\rho^*$.  If an active connected
component $\gC$ did not contain $v$, no support edge would join $\gC$ to the
other active components.  Stationarity on $\gC$ would give
$\mQ_{\gC\gC}(\vx_\rho^*)_\gC=(\vc_\rho)_\gC<\vzero$.  Multiplication by
$\mQ_{\gC\gC}^{-1}\geq\vzero$ contradicts positivity on $\gC$.
\end{proof}

\begin{lemma}[Order, mass, and support volume]
\label{lem:rppr-order-mass}
If $\vz\geq\vzero$ and $\mQ\vz\geq\vc_\rho$, then
$\vx_\rho^*\leq\vz$. Thus $\vx_\rho^*$ is the least nonnegative vector
satisfying this shifted-load inequality. Consequently, for $0<r\leq r'$,
\begin{equation}
 \vzero\leq\vx_{r'}^*\leq\vx_r^*\leq\vx_0^*,
 \qquad \vomega^\trans\vx_r^*+r\vol(\mathcal S_r^*)\leq1.
 \label{eq:rppr-order-mass}
\end{equation}
In particular $\vol(\mathcal S_r^*)\leq1/r$.
\end{lemma}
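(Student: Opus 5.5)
The argument is a comparison principle for the Stieltjes matrix $\mQ$, in the spirit of the least-supersolution characterization of obstacle problems.

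\textbf{Minimality.} Let $\vz\geq\vzero$ with $\mQ\vz\geq\vc_\rho$, and write $\vx^*=\vx_\rho^*$. Set $\vy:=\min\{\vx^*,\vz\}$, taken entrywise, and show that $\vy$ satisfies the KKT system \cref{eq:obstacle-kkt} on the support of $\vx^*$.

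Fix $i$ with $x^*_i>0$. Since $x^*_i>0$, we have $(\mQ\vx^*)_i=(\vc_\rho)_i$. There are two cases.
\begin{itemize}[leftmargin=*]
\item If $y_i=x^*_i$, then
$(\mQ\vy)_i=Q_{ii}x^*_i+\sum_{j\neq i}Q_{ij}y_j\geq(\mQ\vx^*)_i=(\vc_\rho)_i$,
because $Q_{ij}\leq0$ and $y_j\leq x^*_j$.
\item If $y_i=z_i$, the same argument against $\vz$ gives $(\mQ\vy)_i\geq(\mQ\vz)_i\geq(\vc_\rho)_i$.
\end{itemize}
Hence $(\mQ(\vx^*-\vy))_i\leq0$ for every $i$ with $x^*_i>0$.

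Now let $\vu:=\vx^*-\vy\geq\vzero$. Its support $\gU$ lies inside $\supp\vx^*$, and on $\gU$ we have $\mQ_{\gU\gU}\vu_\gU\leq(\mQ\vu)_\gU\leq\vzero$. The first inequality holds because the off-diagonal terms with $j\notin\gU$ vanish, since $u_j=0$ there. Multiplying by $\mQ_{\gU\gU}^{-1}\geq\vzero$, which holds by the principal-submatrix remark above, gives $\vu_\gU\leq\vzero$. So $\vu=\vzero$ and $\vx^*\leq\vz$. Since $\vx^*$ itself satisfies $\vx^*\geq\vzero$ and $\mQ\vx^*\geq\vc_\rho$ by KKT, it is the least such vector.

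\textbf{Ordering.} For $0<r\leq r'$, take $\vz=\vx_r^*$. Then $\mQ\vx_r^*\geq\vc_r\geq\vc_{r'}$, because $\vc$ decreases in the regularization parameter. Minimality for $r'$ gives $\vx_{r'}^*\leq\vx_r^*$. For the upper bound, take $\vz=\vx_0^*=\mQ^{-1}\vb$, which is nonnegative since $\mQ^{-1}\geq\vzero$, and satisfies $\mQ\vz=\vb\geq\vc_r$. This gives $\vx_r^*\leq\vx_0^*$. Nonnegativity comes from \cref{thm:obstacle-reduction}. The same ordering also follows from \cref{prop:rppr-ppr-bridge}.

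\textbf{Mass bound.} Pair the KKT slack with $\vomega$, using $\mQ\vomega=\alpha\vomega$ and symmetry:
\[
\vomega^\trans(\mQ\vx_r^*-\vc_r)=\alpha\vomega^\trans\vx_r^*-\vomega^\trans\vb+\alpha r\,\vomega^\trans\vomega .
\]
This pairing needs care, because $\vomega^\trans\vomega=\vol(\mathcal V)$ is not local. It is cleaner to restrict to $S=\mathcal S_r^*$. On $S$, stationarity reads
\[
\mQ_{SS}\vx_S=\vb_S-\alpha r\vomega_S .
\]
Pair this with $\vomega_S$. Since $\vomega_S\geq\vzero$, the off-diagonal entries $\mQ_{S,S^c}\leq0$, and $\vomega_{S^c}\geq0$, we get
\[
(\mQ_{SS}\vomega_S)=(\mQ\vomega)_S-\mQ_{S,S^c}\vomega_{S^c}\geq\alpha\vomega_S .
\]
By symmetry of $\mQ_{SS}$ it follows that
\[
\alpha\vomega_S^\trans\vx_S\leq\vomega_S^\trans\mQ_{SS}\vx_S=\vomega^\trans\vb-\alpha r\,\vomega_S^\trans\vomega_S ,
\]
where the last step uses $v\in S$, so that $\vomega^\trans\vb=\vomega_S^\trans\vb_S$. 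Now $\vomega^\trans\vb=\alpha\sqrt{d_v}/\sqrt{d_v}=\alpha$ and $\vomega_S^\trans\vomega_S=\vol(S)$. Dividing by $\alpha$ gives $\vomega^\trans\vx_r^*+r\vol(\mathcal S_r^*)\leq1$. Since $\vomega^\trans\vx_r^*\geq0$, this yields $\vol(\mathcal S_r^*)\leq1/r$. If $\vx_r^*=\vzero$, the bound is trivial.

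The main point to get right is the restriction to the support. Global pairing with $\vomega$ produces the nonlocal term $\vol(\mathcal V)$, whereas restricting to $S$ and using the one-sided inequality $\mQ_{SS}\vomega_S\geq\alpha\vomega_S$ gives the correct inequality direction.
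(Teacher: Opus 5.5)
Your proof is correct and follows the paper's argument: the comparison principle via nonnegative principal inverses on the set where $\vx_\rho^*>\vz$ (your $\supp(\vx^*-\vz)_+$, which is the paper's $\gI$), the same choices $\vz=\vx_r^*$ and $\vz=\vx_0^*$ for the ordering, and a pairing with $\vomega$ for the mass bound. The only difference is cosmetic: the paper pairs the source deficit $\vb-\mQ\vx_r^*$, which is nonnegative on all of $\gV$, globally with $\vomega$ and discards its off-support part. That discarded part is exactly the term $-\vomega_{S^c}^\trans\mQ_{S^cS}\vx_S\geq0$ that you drop via $\mQ_{SS}\vomega_S\geq\alpha\vomega_S$, so the nonlocal $\vol(\gV)$ term arises only from pairing the slack $\vw$, and no restriction to the support is needed.
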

\begin{proof}
For the first assertion let $\gI=\{i:(\vx_\rho^*)_i>z_i\}$ and
$\ve=\vx_\rho^*-\vz$. If $\gI$ is nonempty, stationarity on
its positive optimum coordinates gives
$\mQ_{\gI\gI}\ve_{\gI}
 \leq-\mQ_{\gI\gI^c}\ve_{\gI^c}\leq\vzero$.
The last inequality uses $\ve_{\gI^c}\leq\vzero$ and Stieltjes signs.
Multiplication by the nonnegative principal inverse contradicts
$\ve_{\gI}>\vzero$. Since
$\mQ\vx_r^*\geq\vc_r\geq\vc_{r'}$ and
$\mQ\vx_0^*=\vb\geq\vc_r$, the first assertion proves the order.

The source deficit $\vb-\mQ\vx_r^*$ is nonnegative: it equals
$\alpha r\omega_i$ on the positive support, and at a zero coordinate
the off-diagonal signs give $(\mQ\vx_r^*)_i\leq0\leq b_i$.
Its weighted mass is
$\alpha(1-\vomega^\trans\vx_r^*)$, by
$\mQ\vomega=\alpha\vomega$ and $\vomega^\trans\vb=\alpha$.
Its contribution on the positive support alone is
$\alpha r\vol(\mathcal S_r^*)$, proving the mass inequality.
\end{proof}

\subsection{Safe active-set expansion}
\label{subsec:reachable-faces}

We now describe the exact active-set steps underlying the randomized
method. Assume the nonzero point-source regime in
\cref{eq:nonzero-rppr-regime}. For an active set $\gU\subseteq[n]$, define
the restricted linear-system solution, extended by zero, and its slack by
\begin{equation*}
 x_i^\gU=
 \begin{cases}
  (\mQ_{\gU\gU}^{-1}(\vc_\rho)_\gU)_i,&i\in \gU,\\
  0,&i\notin \gU,
 \end{cases}
 \qquad
 \vw^\gU:=\mQ\vx^\gU-\vc_\rho.
\end{equation*}
Set $\vx^\emptyset=\vzero$. For an arbitrary $\gU$, the restricted
solution need not be nonnegative. An active set is \emph{reachable} if it
is obtained from $\{v\}$ by repeatedly adding one or more coordinates
whose exact current slack is negative. The next theorem specializes the
restricted-minimizer geometry of
\citet[Proposition~2 and Section~3]{martinezrubio2023accelerated}.
We give a direct proof using Stieltjes inverses and Schur complements.

\begin{theorem}[Safe active-set expansion]
\label{thm:safe-batched-pivots}
For every reachable active set $\gU$,
\begin{equation*}
 \gU\subseteq\mathcal S_\rho^*,
 \qquad
 \vzero<\vx_\gU^\gU\leq(\vx_\rho^*)_\gU.
\end{equation*}
Every $j\notin \gU$ with $w_j^\gU<0$ lies in
$\mathcal S_\rho^*\setminus \gU$.  If $\gB$ is any nonempty collection of such
coordinates, then $\gU\cup \gB$ is reachable and
\begin{equation*}
 \vx_{\gU\cup \gB}^{\gU\cup \gB}>\vzero,
 \qquad
 \vx_\gU^{\gU\cup \gB}\geq\vx_\gU^\gU.
\end{equation*}
If no outside slack is negative, $\vx^\gU=\vx_\rho^*$.
\end{theorem}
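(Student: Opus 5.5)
We argue by induction on the number of expansion steps that produce a reachable $\gU$, carrying the invariant
\[
 \gU\subseteq\mathcal S_\rho^*,\qquad \vzero<\vx_\gU^\gU\leq(\vx_\rho^*)_\gU .
\]

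\emph{Base case.} For $\gU=\{v\}$ we have $x_v^{\{v\}}=(\vc_\rho)_v/Q_{vv}>0$ in the regime \cref{eq:nonzero-rppr-regime}, and \cref{thm:obstacle-reduction} gives $v\in\mathcal S_\rho^*$. The comparison $\vx^\gU_\gU\leq(\vx_\rho^*)_\gU$ for any $\gU\subseteq\mathcal S_\rho^*$ is a restricted form of \cref{lem:rppr-order-mass}. On $\gU$, KKT gives
\[
 \mQ_{\gU\gU}(\vx_\rho^*)_\gU=(\vc_\rho)_\gU-\mQ_{\gU\gU^c}(\vx_\rho^*)_{\gU^c}\geq(\vc_\rho)_\gU,
\]
because the off-diagonal entries are nonpositive and $\vx_\rho^*\geq\vzero$. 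Multiplying by $\mQ_{\gU\gU}^{-1}\geq\vzero$ yields $(\vx_\rho^*)_\gU\geq\vx^\gU_\gU$.

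\emph{Outside negative slack.} Take $j\notin\gU$ with $w_j^\gU<0$. The invariant gives $\vx^\gU\leq\vx_\rho^*$ entrywise, since $\vx^\gU$ vanishes off $\gU$. If $(\vx_\rho^*)_j=0$, then
\[
 w_j^*=\sum_{i\neq j}Q_{ji}(\vx_\rho^*)_i-(\vc_\rho)_j\leq\sum_{i\neq j}Q_{ji}x_i^\gU-(\vc_\rho)_j=w_j^\gU<0,
\]
using $Q_{ji}\leq0$ and $\vx_\rho^*\geq\vx^\gU$. This contradicts \cref{eq:obstacle-kkt}, so $j\in\mathcal S_\rho^*$. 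Hence for every such batch $\gB$ we get $\gU\cup\gB\subseteq\mathcal S_\rho^*$, and the upper comparison for $\gU\cup\gB$ follows from the base-case argument above.

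\emph{Positivity and monotonicity after expansion.} This is the main step. Set $\gU'=\gU\cup\gB$ and $\vy=\vx^{\gU'}-\vx^\gU$. On $\gU'$,
\[
 \mQ_{\gU'\gU'}\vy_{\gU'}=-(\vw^\gU)_{\gU'}.
\]
Since $\vw^\gU$ vanishes on $\gU$ and is negative on $\gB$, the right-hand side is zero on $\gU$ and strictly positive on $\gB$. Then $\vy_{\gU'}=\mQ_{\gU'\gU'}^{-1}(\text{nonnegative vector})\geq\vzero$, which gives $\vx^{\gU'}_\gU\geq\vx^\gU_\gU>\vzero$.

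For strict positivity on $\gB$, use the Schur complement. The $\gB$-block of $\mQ_{\gU'\gU'}^{-1}$ equals
\[
 \mS^{-1},\qquad \mS=\mQ_{\gB\gB}-\mQ_{\gB\gU}\mQ_{\gU\gU}^{-1}\mQ_{\gU\gB}.
\]
The matrix $\mS$ is again a Stieltjes matrix: it is positive definite, and $\mQ_{\gB\gU}\mQ_{\gU\gU}^{-1}\mQ_{\gU\gB}\geq\vzero$ entrywise, so its off-diagonals stay nonpositive. Its inverse is therefore nonnegative with positive diagonal (diagonal positivity holds for any SPD inverse). Hence $\vy_\gB=-\mS^{-1}\vw^\gU_\gB\geq -\diag(\mS^{-1})\vw^\gU_\gB>\vzero$ entrywise, and $x^{\gU'}_\gB=\vy_\gB>\vzero$.

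This verifies the invariant for $\gU'$, so $\gU'$ is reachable and the induction closes. The delicate points I expect are exactly the Schur-complement sign preservation and the strictness on $\gB$.

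\emph{Termination at the optimum.} Suppose no outside slack is negative. Then $\vx^\gU\geq\vzero$, $\vw^\gU=\vzero$ on $\gU$, and $\vw^\gU\geq\vzero$ off $\gU$, so the KKT system \cref{eq:obstacle-kkt} holds. Uniqueness from \cref{thm:obstacle-reduction} then gives $\vx^\gU=\vx_\rho^*$.
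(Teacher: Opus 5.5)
Your proposal is correct and follows the paper's route: induction over expansions, the inverse-positivity comparison on $\gU\subseteq\mathcal S_\rho^*$, the sign argument ruling out a negative outside slack at a zero optimum coordinate, the Schur complement $\mS$ for strict positivity on $\gB$, and KKT uniqueness at termination. The only cosmetic difference is that you obtain $\vx_\gU^{\gU\cup\gB}\geq\vx_\gU^\gU$ from $\mQ_{\gU'\gU'}^{-1}\geq\vzero$ applied to $-\vw^\gU_{\gU'}$, not from the paper's back-substitution formula $\vx_\gU^{\gU'}-\vx_\gU^\gU=-\mQ_{\gU\gU}^{-1}\mQ_{\gU\gB}\vx_\gB^{\gU'}$; also, reachability of $\gU\cup\gB$ holds by definition rather than as a consequence of the invariant.
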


\begin{proof}
The initial seed slack is negative, so the first scalar solution is
positive. Suppose the claim holds at $\gU$. Restricted stationarity and the
optimal equations give
\begin{equation*}
 \mQ_{\gU\gU}\bigl((\vx_\rho^*)_\gU-\vx_\gU^\gU\bigr)
 =-\mQ_{\gU,\mathcal S_\rho^*\setminus \gU}
   (\vx_\rho^*)_{\mathcal S_\rho^*\setminus \gU}\geq\vzero.
\end{equation*}
Inverse positivity proves the coordinatewise comparison.  If
$j\notin\mathcal S_\rho^*$, then
\begin{align*}
 w_j^\gU-w_j^*
 &=\mQ_{j\gU}\bigl(\vx_\gU^\gU-(\vx_\rho^*)_\gU\bigr)
   -\mQ_{j,\mathcal S_\rho^*\setminus \gU}
     (\vx_\rho^*)_{\mathcal S_\rho^*\setminus \gU}\geq0.
\end{align*}
Hence a negative violator cannot be a false support coordinate.

For $\gU'=\gU\cup \gB$, block elimination gives
\begin{equation*}
 \mS_\gB\vx_\gB^{\gU'}=-\vw_\gB^\gU,
 \qquad
 \mS_\gB:=\mQ_{\gB\gB}-\mQ_{\gB\gU}\mQ_{\gU\gU}^{-1}\mQ_{\gU\gB}.
\end{equation*}
Moreover,
\begin{equation*}
 \vx_\gU^{\gU'}-\vx_\gU^\gU=-\mQ_{\gU\gU}^{-1}\mQ_{\gU\gB}\vx_\gB^{\gU'}.
\end{equation*}
The Schur complement is Stieltjes and has a nonnegative inverse with
positive diagonal.  Since
$-\vw_\gB^\gU>\vzero$, the new coordinates are positive; the second display and
the off-diagonal sign make every old correction nonnegative.  This closes the
induction.  If no outside slack is negative, inside stationarity, positivity,
and outside nonnegativity satisfy the global KKT system, whose solution is
unique.
\end{proof}

Thus $\vx^\gU$ minimizes $\phi_\rho$ over nonnegative vectors supported
on a reachable $\gU$. We also call this restricted computation a
\emph{face solve}.

For any $\vx\geq\vzero$, set $\vw=\mQ\vx-\vc_\rho$ and
$y_i=x_i/\sqrt{d_i}$. At a coordinate $j\neq v$ with $x_j=0$,
\begin{equation}
 \frac{w_j}{\sqrt{d_j}}
 =\alpha\rho-\frac{1-\alpha}{2d_j}
   \sum_{i\in\mathcal N(j)\cap\supp(\vx)}y_i.
 \label{eq:boundary-slack}
\end{equation}
This expression makes support discovery local.

\begin{corollary}[Boundary-only discovery]
\label{cor:boundary-only-discovery}
At a reachable active set $\gU$, every negative outside slack lies in
$\partial\gU$. Exposing each admitted vertex's adjacency list once costs
$\vol(\gU)$ entry inspections and creates at most $\vol(\gU)$ boundary
incidences. No adjacency list outside $\gU$ is needed to evaluate the
boundary slacks.
\end{corollary}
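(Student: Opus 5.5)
The argument rests on the boundary-slack identity \cref{eq:boundary-slack} together with \cref{thm:safe-batched-pivots}, which says negative outside slacks are exactly the admissible coordinates. The plan has three steps.

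\emph{Step 1: negative outside slacks lie in $\partial\gU$.} Take a reachable $\gU$ and $j\notin\gU$.

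First suppose $j\neq v$. Then $x^\gU_j=0$ and $(\vc_\rho)_j=-\alpha\rho\sqrt{d_j}$. We have $\supp(\vx^\gU)\subseteq\gU$; in fact equality holds, since \cref{thm:safe-batched-pivots} gives $\vx^\gU_\gU>\vzero$. Hence \cref{eq:boundary-slack} reads
\[
 \frac{w_j^\gU}{\sqrt{d_j}}=\alpha\rho-\frac{1-\alpha}{2d_j}\sum_{i\in\mathcal N(j)\cap\gU}y_i .
\]
If $j\notin\partial\gU$, then $\mathcal N(j)\cap\gU=\emptyset$, the sum is empty, and $w_j^\gU=\alpha\rho\sqrt{d_j}>0$. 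So a negative slack forces $j\in\mathcal N(\gU)\setminus\gU=\partial\gU$.

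The seed needs separate handling. Reachable sets are built from $\{v\}$, so $v\in\gU$ whenever $\gU$ is nonempty. For $\gU=\emptyset$ the only negative slack is at $v$. That initial case is handled by the seed itself rather than by the boundary, and I would note it explicitly.

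\emph{Step 2: the slacks need only the lists of $\gU$.} The displayed slack depends only on the values $y_i$ for $i\in\gU$, on $d_j$, and on the incidences between $j$ and $\gU$. Every such incidence appears as the entry $j$ in the adjacency list of some $i\in\gU$. So each boundary slack can be accumulated by scanning only the lists of vertices in $\gU$, adding $\tfrac{1-\alpha}{2}y_i$ into an accumulator keyed by $j$. Only one degree query per boundary vertex is needed, and no adjacency list outside $\gU$ is read.

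\emph{Step 3: cost.} Each admitted vertex has its list exposed once when it enters. The total inspection count is therefore $\sum_{i\in\gU}d_i=\vol(\gU)$. Each inspected entry creates at most one boundary incidence record, so there are at most $\vol(\gU)$ such records.

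There is no real obstacle. The only subtlety is the strict positivity $\vx^\gU_\gU>\vzero$, which makes $\supp(\vx^\gU)=\gU$, together with the sign of $(\vc_\rho)_j$ for $j\neq v$. Both are already supplied by \cref{thm:safe-batched-pivots} and the definition of $\vc_\rho$.
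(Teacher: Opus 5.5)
Your proposal is correct and follows essentially the same argument as the paper: for $j\notin\gU\cup\partial\gU$ with $j\neq v$, the neighbor sum is empty, so $w_j^\gU=-(\vc_\rho)_j=\alpha\rho\sqrt{d_j}>0$; every reachable set contains the seed; and each exposed incidence is charged to its endpoint in $\gU$. One small correction: the subtlety you single out is not needed, because $\supp\vx^\gU\subseteq\gU$ already empties the sum without strict positivity $\vx^\gU_\gU>\vzero$, and the case $\gU=\emptyset$ is moot since $\emptyset$ is not reachable by definition.
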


\begin{proof}
If $j\notin \gU\cup\partial \gU$ and $j\neq v$, sparsity gives
$w_j^\gU=-(\vc_\rho)_j=\alpha\rho\sqrt{d_j}>0$.  The seed is already active.
The incidence count charges each exposed edge to its endpoint in $\gU$.
Boundary degrees can be queried without opening their adjacency lists.
Any later rescan is charged again under \cref{eq:total-work}.
\end{proof}

\subsection{Local objective certification}

We next turn slack values into a stopping test for
\cref{eq:rppr-output-target}. Let $I_{\R_+^n}$ be the indicator function,
equal to zero on $\R_+^n$ and $+\infty$ elsewhere.
For feasible $\vx\geq\vzero$, let $\vw=\mQ\vx-\vc_\rho$ and define
\begin{equation}
 g_i(\vx)=
 \begin{cases}
  w_i,&x_i>0,\\
  \min\{w_i,0\},&x_i=0.
 \end{cases}
 \label{eq:minimal-kkt-subgradient}
\end{equation}

\begin{lemma}[KKT subgradient certificate]
\label{lem:kkt-gap-certificate}
The vector $\vg(\vx)$ has the smallest Euclidean norm among the
subgradients of $\phi_\rho+I_{\R_+^n}$ at $\vx$, and
\begin{equation*}
 F_\rho(\vx)-F_\rho(\vx_\rho^*)
 \leq\frac{\norm{\vg(\vx)}_2^2}{2\alpha}.
\end{equation*}
If $\vx$ is supported on a known set $\gU$ containing $v$, the certificate
and its squared norm can be computed with $\vol(\gU)$ adjacency-entry
inspections and $\softO(\vol(\gU))$ total work under
\cref{eq:total-work}.
\end{lemma}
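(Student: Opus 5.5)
The proof has three parts. First we identify the subdifferential and show $\vg(\vx)$ is its minimum-norm element. Then we get the gap bound from strong convexity. Finally we count the work needed to evaluate the certificate.

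\emph{Minimum-norm subgradient.} Since $\phi_\rho$ is smooth, the subdifferential of $\phi_\rho+I_{\R_+^n}$ at $\vx\geq\vzero$ is $\vw+N_{\R_+^n}(\vx)$. Here $\vw=\nabla\phi_\rho(\vx)=\mQ\vx-\vc_\rho$, and the normal cone $N_{\R_+^n}(\vx)$ is the product of $\{0\}$ on coordinates with $x_i>0$ and $(-\infty,0]$ on coordinates with $x_i=0$. The set is a product of intervals, so minimizing the Euclidean norm separates by coordinate. On $x_i>0$ the only choice is $w_i$. On $x_i=0$ we minimize $|w_i+t|$ over $t\leq0$. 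If $w_i\leq0$ the minimum is $|w_i|$, attained at $t=0$. If $w_i>0$ the minimum is $0$, attained at $t=-w_i$. In both cases the minimizer is $\min\{w_i,0\}$, which is exactly \cref{eq:minimal-kkt-subgradient}.

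\emph{Gap bound.} Let $h:=\phi_\rho+I_{\R_+^n}$. Since $\mQ\succeq\alpha\mI$, $h$ is $\alpha$-strongly convex. For any subgradient $\vg$ at $\vx$ and any $\vy$,
\[
h(\vy)\geq h(\vx)+\inner{\vg}{\vy-\vx}+\tfrac{\alpha}{2}\norm{\vy-\vx}_2^2 .
\]
Minimizing the right-hand side over $\vy$ gives $h(\vx)-\min h\leq\norm{\vg}_2^2/(2\alpha)$. By \cref{thm:obstacle-reduction}, $F_\rho=h$ on the nonnegative orthant, and $\min h=F_\rho(\vx_\rho^*)$. Taking $\vg=\vg(\vx)$ gives the stated inequality.

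\emph{Local computation.} This is the only step needing care; the obstacle is to make sure the coordinates outside $\gU\cup\partial\gU$ contribute nothing, so that no global scan is required.
\begin{itemize}[leftmargin=*]
\item Outside $\gU\cup\partial\gU$, excluding the seed: we have $x_j=0$ and $(\mQ\vx)_j=0$, so $w_j=-(\vc_\rho)_j=\alpha\rho\sqrt{d_j}>0$. Hence $g_j=0$ there, as in \cref{cor:boundary-only-discovery}.
\item The seed lies in $\gU$ by assumption, so it is handled below.
\item For $i\in\gU$: compute $(\mQ\vx)_i=\tfrac{1+\alpha}{2}x_i-\tfrac{1-\alpha}{2}\sum_{j\in\mathcal N(i)\cap\gU}x_j/\sqrt{d_id_j}$ by scanning $\mathcal N(i)$. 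This uses $d_i$ inspections plus degree queries.
\item For boundary coordinates $j\in\partial\gU$: accumulate $\sum_{i\in\mathcal N(j)\cap\gU}y_i$ into a dictionary keyed by $j$ during the same scans. Then evaluate $w_j$ through \cref{eq:boundary-slack} (with $x_j=0$) and set $g_j=\min\{w_j,0\}$.
\end{itemize}

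In total, each adjacency list of $\gU$ is read once, giving $\vol(\gU)$ inspections. There are at most $\vol(\gU)$ boundary incidences and at most $|\gU|+\vol(\gU)$ degree queries. Dictionary operations and arithmetic are $\mathcal{O}(1)$ each, up to the logarithmic factors hidden in $\softO$ for the chosen dictionary implementation. Summing the squares of the at most $|\gU|+|\partial\gU|\leq2\vol(\gU)$ nonzero entries gives the squared norm within the same $\softO(\vol(\gU))$ work bound.
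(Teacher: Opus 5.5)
Your proposal is correct and follows the paper's proof essentially step for step. You take the coordinatewise minimum-norm element of $\vw+N_{\R_+^n}(\vx)$, then apply the strong-convexity inequality (minimizing its right side over $\vy$ is equivalent to the paper's substitution $\vy=\vx_\rho^*$), and then do one scan of the rows of $\gU$ that accumulates $\mQ\vx$ on $\gU\cup\partial\gU$, where every farther coordinate has slack $\alpha\rho\sqrt{d_j}>0$ and hence a zero certificate entry. The paper adds only two details: the logarithmic factor comes from deterministic balanced search trees rather than expected-time hashing, and degree densities let the computation avoid square roots.
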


\begin{proof}
At $x_i>0$ the normal cone is zero.  At $x_i=0$ it is $(-\infty,0]$, so
\cref{eq:minimal-kkt-subgradient} is the closest point to zero in the
coordinate subdifferential.  Strong convexity gives, for every subgradient
$\vg$,
\[
 (\phi_\rho+I_{\R_+^n})(\vy)\geq(\phi_\rho+I_{\R_+^n})(\vx)
 +\vg^\trans(\vy-\vx)+\frac\alpha2\norm{\vy-\vx}_2^2.
\]
Taking $\vy=\vx_\rho^*$ and bounding the last two terms below by
$-\norm{\vg}_2^2/(2\alpha)$ proves the objective bound.

One scan of the rows in $\gU$ accumulates $\mQ\vx$ on
$\gU\cup\partial\gU$. Every farther coordinate has positive slack by
\cref{eq:boundary-slack}, so its certificate entry is zero.
There are at most $|\gU|+\vol(\gU)\leq2\vol(\gU)$ records.
Degree queries, diagonal and source contributions, clipping, the squared
norm, and output take $\mathcal{O}(\vol(\gU))$ elementary operations.
Accumulating entries in a balanced search tree adds a factor
$\mathcal{O}(\log(2+\vol(\gU)))$, giving the stated total work.
Degree densities permit the same computation without square roots:
the squared norm is a degree-weighted sum of squared certificate densities.
\end{proof}

In particular, $\norm{\vg(\vx)}_2^2\leq2\alpha\epsobj$ certifies the
requested RPPR accuracy using only the active set and its boundary.

\section{Deterministic Accelerated Continuation}\label{sec:deterministic-method}
\subsection{Algorithm and stage invariants}\label{sec:deterministic-algorithm}
The deterministic method uses a changing regularization parameter to make
each remaining correction diffuse. It then accelerates that correction on
an explicit convex set. The set contains the unknown correction optimum,
but its definition requires only the current source and the degrees.
\Cref{alg:deterministic-rppr} gives the complete procedure after the stage
recurrence and its repair map are defined.

\subsubsection{A diffuse-source stage}

Write
\begin{equation*}
 \vomega:=\mD^{1/2}\one,
 \qquad \lambda_r:=\alpha r.
\end{equation*}
Here $r\geq\rho$ is the stage regularizer; $\rho$ remains the final target.
At the start of a stage we have a sparse baseline $\overline\vx$ satisfying
\begin{equation*}
 \vzero\leq\overline\vx\leq\vx_r^*,\qquad
 \vh:=\vb-\mQ\overline\vx,\qquad
 \vzero\leq\vh\leq4\lambda_r\vomega.
\end{equation*}
The vector $\vh$ is a residual source, distinct from the seed distribution
$\vs$. Its weighted mass is
\begin{equation*}
 m_h:=\vomega^\trans\vh
      =\alpha(1-\vomega^\trans\overline\vx),\qquad
 m_r:=m_h/\alpha\leq1.
\end{equation*}
We use two comparison vectors only in the analysis:
\begin{equation*}
 \vxi^*:=\vx_r^*-\overline\vx,
 \qquad \vt:=\mQ^{-1}\vh=\vx_0^*-\overline\vx.
\end{equation*}
The vector $\vxi^*$ minimizes the correction objective below; $\vt$
provides the comparison point for the second energy controlling local work.
The algorithm computes neither inverse nor optimum.

\begin{lemma}[An explicit correction domain]
\label{lem:det-correction-domain}
Both $\vxi^*$ and $\vt$ belong to
\begin{equation*}
 \gK_r:=\{\vu:\vzero\leq\vu\leq4r\vomega,
                    \ \vomega^\trans\vu\leq m_r\}.
\end{equation*}
In fact $\vzero\leq\vxi^*\leq\vt\leq4r\vomega$ and
$\vomega^\trans\vt=m_r$.
\end{lemma}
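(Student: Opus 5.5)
The plan is to prove the chain $\vzero\leq\vxi^*\leq\vt\leq4r\vomega$ together with the mass identity $\vomega^\trans\vt=m_r$. Membership of both vectors in $\gK_r$ then follows at once, because $\vomega\geq\vzero$ turns $\vxi^*\leq\vt$ into $\vomega^\trans\vxi^*\leq\vomega^\trans\vt=m_r$. Every step rests on the Stieltjes structure already used in \cref{prop:rppr-ppr-bridge}: inverse positivity $\mQ^{-1}\geq\vzero$ and the eigenvector identity $\mQ^{-1}\vomega=\vomega/\alpha$.

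The lower bound and the comparison come first. The stage invariant $\overline\vx\leq\vx_r^*$ gives $\vxi^*=\vx_r^*-\overline\vx\geq\vzero$. \Cref{lem:rppr-order-mass} gives $\vx_r^*\leq\vx_0^*$, so subtracting $\overline\vx$ from both sides yields $\vxi^*\leq\vx_0^*-\overline\vx$. This right-hand side equals $\vt$, because $\mQ(\vx_0^*-\overline\vx)=\vb-\mQ\overline\vx=\vh$; this is the identity $\vt=\mQ^{-1}\vh$ used in the definition.

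For the upper bound, the invariant $\vzero\leq\vh\leq4\lambda_r\vomega=4\alpha r\vomega$ is multiplied on the left by $\mQ^{-1}\geq\vzero$. This preserves the entrywise order and gives $\vzero\leq\vt\leq4\alpha r\,\mQ^{-1}\vomega=4r\vomega$. The factor $\alpha$ cancels exactly, and this cancellation is the only quantitative point in the proof.

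For the mass identity, symmetry of $\mQ$ gives $\vomega^\trans\mQ^{-1}=\vomega^\trans/\alpha$, so $\vomega^\trans\vt=\vomega^\trans\vh/\alpha=m_h/\alpha=m_r$. The formula $m_h=\alpha(1-\vomega^\trans\overline\vx)$ follows from $\vomega^\trans\vb=\alpha$ for the point source together with $\vomega^\trans\mQ\overline\vx=\alpha\vomega^\trans\overline\vx$, although this formula is not needed for the identity itself.

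I do not expect a real obstacle. The only care required is that the stage invariants, namely $\overline\vx\leq\vx_r^*$ and $\vzero\leq\vh\leq4\lambda_r\vomega$, are taken as hypotheses here rather than proved. Those invariants are established by the algorithm's repair map.
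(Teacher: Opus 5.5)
Your proof is correct and follows the paper's argument. The chain $\vzero\leq\vxi^*\leq\vt$ comes from the baseline invariant and the order $\vx_r^*\leq\vx_0^*$; the paper cites the equivalent bias bound \cref{eq:rppr-bias} for this. The upper bound comes from multiplying $\vh\leq4\lambda_r\vomega$ by $\mQ^{-1}\geq\vzero$ and using $\mQ^{-1}\vomega=\vomega/\alpha$, and the mass identity from pairing $\mQ\vt=\vh$ with $\vomega$. You also spell out the step $\vomega^\trans\vxi^*\leq\vomega^\trans\vt=m_r$ that gives membership in $\gK_r$, which the paper leaves implicit.
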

\begin{proof}
The order follows from \cref{eq:rppr-bias}, the baseline invariant,
inverse positivity, and
$\mQ^{-1}\vomega=\vomega/\alpha$:
$\vt\leq4\alpha r\mQ^{-1}\vomega=4r\vomega$.
Multiplying $\mQ\vt=\vh$ by $\vomega^\trans$ proves the mass identity.
\end{proof}

The correction objective is the smooth quadratic
\begin{equation*}
 J_r(\vxi):=\tfrac12\vxi^\trans\mQ\vxi
                 -(\vh-\lambda_r\vomega)^\trans\vxi.
\end{equation*}
On nonnegative corrections it satisfies
$J_r(\vxi)=F_r(\overline\vx+\vxi)-F_r(\overline\vx)$.
Thus $\vxi^*$ minimizes $J_r$ on $\gK_r$, and correction gaps are
exactly the desired gaps for the full candidate.

Choose a dyadic $\theta$ by halving $1/2$ until $\theta^2\leq\alpha$, and set
\begin{equation}
 \mu_{\mathrm{c}}:=\theta^2,\qquad \chi:=1-\theta,
 \qquad \alpha/4\leq\mu_{\mathrm{c}}\leq\alpha.
 \label{eq:det-acceleration-parameters}
\end{equation}
The parameter $\mu_{\mathrm{c}}$ is the curvature used by the accelerated
recurrence. Start with $\vxi_0=\vz_0=\vzero$ and iterate
\begin{equation}
 \begin{aligned}
 \vy_k&=\frac{\vxi_k+\theta\vz_k}{1+\theta},\\
 \vq_k&=\chi\vz_k+\theta\vy_k
       -\frac{\mQ\vy_k-\vh+\lambda_r\vomega}{\theta},\\
 \vz_{k+1}&=\operatorname{Proj}_{\gK_r}(\vq_k),\\
 \vxi_{k+1}&=\chi\vxi_k+\theta\vz_{k+1}.
 \end{aligned}
 \label{eq:det-iteration}
\end{equation}
Projection is Euclidean. We call $\vz_k$ the \emph{kinetic vector} to
distinguish the coordinates updated by the projection from the longer-lived
primal correction $\vxi_k$. Projection keeps $\vz_k$ in $\gK_r$, and
convexity then keeps $\vxi_k$ there.

\paragraph{Why both constraints are present.}
The upper box controls the sign of a projection normal at a saturated
coordinate. The mass cap controls the common normal introduced by the
weighted sum constraint. Together they allow the \emph{same} Euclidean
projection to satisfy a comparison inequality in the $\mQ$ metric
(\cref{lem:det-sector}). That second comparison is the additional ingredient
needed to bound local work. Neither constraint requires knowing the support
of $\vx_r^*$, and the kinetic support need not be contained in it.

\paragraph{Explicit projection.}
There is a scalar $\gamma\geq0$ such that
\begin{equation}
 \frac{z_{k+1,i}}{\omega_i}
 =\min\{4r,(q_{k,i}/\omega_i-\gamma)_+\}.
 \label{eq:det-waterfill}
\end{equation}
If the right side at $\gamma=0$ has weighted mass at most $m_r$, use zero;
otherwise choose $\gamma$ so that $\vomega^\trans\vz_{k+1}=m_r$.
These are exactly the projection KKT conditions. The finite threshold search
and the cost of reporting its positive coordinates are proved in
\cref{sec:deterministic-implementation}.

\subsubsection{A certified stage handoff}

A small objective gap does not by itself make the candidate a subsolution.
We use one projected-gradient step and a downward density correction at the
\emph{end} of a stage. Intermediate iterations remain precisely
\cref{eq:det-iteration}.

\begin{lemma}[Safe repair]
\label{lem:det-repair}
Let $0<\delta\leq\lambda_r/2$. Suppose a nonnegative candidate
$\widetilde\vx$ has $F_r(\widetilde\vx)-F_r(\vx_r^*)\leq\tau$, where
$\tau\leq\alpha\delta^2/8$. Let the old baseline satisfy
$\vzero\leq\overline\vx\leq\vx_r^*$ and
$\mQ\overline\vx\leq\vb$; it may be zero. Define
\begin{equation}
 \vp=[\widetilde\vx-(\mQ\widetilde\vx-\vb+\lambda_r\vomega)]_+,
 \qquad \vu=[\vp-\delta\vomega]_+.
 \label{eq:det-repair-map}
\end{equation}
Then $\overline\vx^+:=\max\{\overline\vx,\vu\}$ satisfies
\begin{equation}
 \begin{gathered}
 \vzero\leq\overline\vx\leq\overline\vx^+\leq\vx_r^*,\qquad
 \vzero\leq\vb-\mQ\overline\vx^+\leq2\lambda_r\vomega,\\
 \vol(\supp\overline\vx^+)\leq1/r,\qquad
 F_r(\overline\vx^+)-F_r(\vx_r^*)\leq2\delta^2/r.
 \end{gathered}
 \label{eq:det-repaired-invariant}
\end{equation}
The coordinatewise maximum can be omitted if monotone baselines are not
needed.
\end{lemma}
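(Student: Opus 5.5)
The plan is to compare everything with the exact optimum $\vx_r^*$. First I turn the objective gap into a small $\ell_\infty$ error for $\vp$. Then I show the downward shift by $\delta\vomega$ places $\vu$ below $\vx_r^*$ with a controlled deficit $\ve:=\vx_r^*-\overline\vx^+$. All four claims in \cref{eq:det-repaired-invariant} should then follow from $\vzero\leq\ve\leq\tfrac32\delta\vomega$ together with $\supp\ve\subseteq\mathcal S_r^*$.

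\textbf{Step 1 (distance).} By strong convexity of $F_r$ with modulus $\alpha$, $\norm{\widetilde\vx-\vx_r^*}_2\leq\sqrt{2\tau/\alpha}\leq\delta/2$. The map $\vx\mapsto[\vx-(\mQ\vx-\vb+\lambda_r\vomega)]_+$ is the projected-gradient step with unit step for \cref{eq:obstacle-problem}, since $\vc_r=\vb-\lambda_r\vomega$. It fixes $\vx_r^*$ by the KKT system \cref{eq:obstacle-kkt}. It is nonexpansive because $\vzero\preceq\mI-\mQ\preceq(1-\alpha)\mI$ and projection onto $\R_+^n$ is nonexpansive. Hence $\norm{\vp-\vx_r^*}_\infty\leq\delta/2\leq\tfrac12\delta\omega_i$ coordinatewise, using $\omega_i=\sqrt{d_i}\geq1$.

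\textbf{Step 2 (order, support, deficit).} From Step 1, $\vp-\delta\vomega\leq\vx_r^*$, and $\vx_r^*\geq\vzero$, so $\vu\leq\vx_r^*$. If $u_i>0$, then $(\vx_r^*)_i\geq p_i-\delta/2>\delta\omega_i-\delta/2>0$; hence $\supp\vu\subseteq\mathcal S_r^*$. Since $\overline\vx\leq\vx_r^*$, the maximum satisfies $\overline\vx\leq\overline\vx^+\leq\vx_r^*$, and $\supp\overline\vx^+\subseteq\mathcal S_r^*$. Then \cref{lem:rppr-order-mass} gives the volume bound $1/r$. For the deficit I use $e_i\leq(\vx_r^*)_i-u_i$. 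If $u_i>0$, this is at most $(\vx_r^*)_i-p_i+\delta\omega_i\leq\tfrac32\delta\omega_i$. If $u_i=0$, then $p_i\leq\delta\omega_i$, so $(\vx_r^*)_i\leq\tfrac32\delta\omega_i$. Also $e$ vanishes off $\mathcal S_r^*$.

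\textbf{Step 3 (residual and gap).} Write $\vb-\mQ\overline\vx^+=(\vb-\mQ\vx_r^*)+\mQ\ve$. The first term lies in $[\vzero,\lambda_r\vomega]$: by the KKT system it equals $\lambda_r\vomega-\vw^*$, and nonnegativity was shown in \cref{lem:rppr-order-mass}. For the upper bound, Stieltjes signs give $(\mQ\ve)_i\leq Q_{ii}e_i\leq e_i\leq\tfrac32\delta\omega_i\leq\tfrac34\lambda_r\omega_i$, so the total is at most $2\lambda_r\vomega$. The lower bound (the subsolution property) is the delicate step, and I argue case by case.

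If $\overline x^+_i=0$, then $(\mQ\overline\vx^+)_i\leq0\leq b_i$. If $\overline x^+_i=\overline x_i>0$, raising the other coordinates from $\overline\vx$ only decreases row $i$, so $(\mQ\overline\vx^+)_i\leq(\mQ\overline\vx)_i\leq b_i$. If $\overline x^+_i=u_i>0$, then $i\in\mathcal S_r^*$ and $(\vb-\mQ\vx_r^*)_i=\lambda_r\omega_i$. Moreover, $(\mQ\ve)_i\geq-\sum_j|Q_{ij}|e_j\geq-\tfrac32\delta(|\mQ|\vomega)_i=-\tfrac32\delta\omega_i\geq-\tfrac34\lambda_r\omega_i$, so the residual stays positive.

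Finally, $F_r(\overline\vx^+)-F_r(\vx_r^*)=\phi_r(\vx_r^*-\ve)-\phi_r(\vx_r^*)=-(\vw^*)^\trans\ve+\tfrac12\ve^\trans\mQ\ve$. The linear term vanishes because $\vw^*=\vzero$ on $\mathcal S_r^*\supseteq\supp\ve$. Using $\mQ\preceq\mI$, the gap is at most $\tfrac12\sum_{i\in\mathcal S_r^*}\tfrac94\delta^2d_i\leq\tfrac98\delta^2/r\leq2\delta^2/r$. When the maximum is omitted, $\overline\vx^+=\vu$, and the same argument runs with only the third case of the residual analysis. The main obstacle is the subsolution lower bound on newly raised coordinates. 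There the identity $|\mQ|\vomega=\vomega$ and the margin $\delta\leq\lambda_r/2$ are exactly what make the constants close.
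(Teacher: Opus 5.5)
Your proof is correct, but it establishes the key step, the subsolution property, by a different mechanism than the paper.

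\textbf{The paper's route.} It works through the projected-gradient point $\vp$. Projection optimality and $\mQ\preceq\mI$ give $F_r(\widetilde\vx)-F_r(\vp)\geq\tfrac12\norm{\vdelta}_2^2$ and the subgradient $(\mI-\mQ)\vdelta$ at $\vp$ in \eqref{eq:det-pg-certificate}, whose norm is at most $\sqrt{2\tau}\leq\delta/2$. This shows $\mQ\vp\leq\vb$. At positive coordinates of $\vu$, the inequality $(\mQ\vu)_i\leq(\mQ\vp)_i-\alpha\delta\omega_i$ transfers the property to $\vu$. Closure of nonnegative subsolutions under coordinatewise maxima then finishes, and this is the only place $\mQ\overline\vx\leq\vb$ is used.

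\textbf{Your route.} You get the distance bound from the fixed-point property and nonexpansiveness of the projected-gradient map, rather than from descent plus strong convexity. You then check the subsolution inequality by perturbing around $\vx_r^*$. On $\mathcal S_r^*$ the exact residual equals $\lambda_r\omega_i$, and $|\mQ|\vomega=\vomega$ together with $\vzero\leq\ve\leq\tfrac32\delta\vomega$ limits the loss to $\tfrac34\lambda_r\omega_i$.

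\textbf{What each buys.} Your route avoids the subgradient certificate and the max-closure fact. It also sharpens the constants: deficit $\tfrac32\delta\vomega$ and gap $\tfrac98\delta^2/r$. Moreover, it shows the hypothesis $\mQ\overline\vx\leq\vb$ is superfluous. Since $\supp\overline\vx^+\subseteq\mathcal S_r^*$, your third-case estimate already covers every positive coordinate of $\overline\vx^+$, so your second case is not needed. The paper's route is more modular, because it certifies the intermediate point $\vp$ itself. Its stationarity bound $\sqrt{2\tau}$ is also smaller than the distance bound $\sqrt{2\tau/\alpha}$ by a factor $\sqrt\alpha$. That advantage is immaterial here, since both routes need the distance bound to get the order $\vu\leq\vx_r^*$.

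One small slip: when the maximum is omitted, you still need the zero-coordinate case as well as the third case.
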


\subsubsection{The complete deterministic algorithm}

At each stage, \cref{alg:deterministic-rppr} halves the regularization
level, accelerates the remaining correction, and repairs the result to
obtain the next safe baseline. At the final level it chooses the repair
accuracy to meet the requested objective tolerance.

\begin{algorithm}[htbp]
\caption{Deterministic accelerated local RPPR}
\label{alg:deterministic-rppr}
\begin{algorithmic}[1]
\REQUIRE Degree and adjacency-list access, seed $v$, $\alpha,\rho,\epsobj>0$
\STATE Query $d_v$. If $\rho d_v\geq1$ or $\epsobj\geq\alpha/2$, return $\vzero$.
\STATE If $\alpha=1$, return the single coordinate
       $\widehat x_v=(1-\rho d_v)/\sqrt{d_v}$.
\STATE Choose $\theta,\mu_{\mathrm{c}},\chi$ as in
       \cref{eq:det-acceleration-parameters}; set
       $r_{\mathrm{old}}=1/d_v$ and $\overline\vx=\vzero$.
\REPEAT
  \STATE $r\gets\max\{\rho,r_{\mathrm{old}}/2\}$; set $\delta\gets\alpha r/2$.
  \IF{$r=\rho$}
    \STATE Halve $\delta$ until $2\delta^2/r\leq\epsobj$.
  \ENDIF
  \STATE Assemble $\vh=\vb-\mQ\overline\vx$ locally; compute $m_r$.
  \STATE Set $\vxi=\vz=\vzero$, $a=1$, and $\tau=\alpha\delta^2/8$.
  \WHILE{$a>\tau$}
    \STATE Apply \cref{eq:det-iteration} using the sparse threshold reporter.
    \STATE $a\gets\chi a$.
  \ENDWHILE
  \STATE Materialize $\widetilde\vx=\overline\vx+\vxi$; apply
         \cref{eq:det-repair-map} and set
         $\overline\vx\gets\max\{\overline\vx,\vu\}$.
  \STATE $r_{\mathrm{old}}\gets r$.
\UNTIL{$r=\rho$}
\RETURN The sparse coordinate list of $\overline\vx$.
\end{algorithmic}
\end{algorithm}
\FloatBarrier

The scalar $a$ is a computable upper bound on the current comparison
energy, as proved in \cref{sec:deterministic-analysis}. In particular, the stopping test uses no optimum,
global gradient, or unknown complementarity margin.
The first stage has $r\geq1/(2d_v)$, so its source obeys
$\vb\leq2\alpha r\vomega$. At subsequent stages,
\cref{eq:det-repaired-invariant} and $r\geq r_{\mathrm{old}}/2$ imply
$\vh\leq4\alpha r\vomega$. Monotonicity of $\vx_r^*$ as $r$ decreases
preserves baseline safety. These facts close the continuation induction.

\begin{proof}[Proof of \cref{lem:det-repair}]
Write $\vdelta=\widetilde\vx-\vp$. Projection optimality and
$\mQ\preceq\mI$ give
\begin{equation}
 F_r(\widetilde\vx)-F_r(\vp)\geq\tfrac12\norm{\vdelta}_2^2,
 \qquad (\mI-\mQ)\vdelta\in
 \partial(\phi_r+I_{\R_+^n})(\vp).
 \label{eq:det-pg-certificate}
\end{equation}
For the second assertion, the projection normal is
$\vdelta-\nabla\phi_r(\widetilde\vx)$; adding
$\nabla\phi_r(\vp)$ gives $(\mI-\mQ)\vdelta$.
Strong convexity and descent imply
\[
 \norm{\vp-\vx_r^*}_2\leq\sqrt{2\tau/\alpha}\leq\delta/2,
 \qquad \norm{(\mI-\mQ)\vdelta}_2\leq\sqrt{2\tau}\leq\delta/2.
\]
At every positive coordinate of $\vp$ the normal cone vanishes. Hence
$(\mQ\vp-\vb)_i\leq\delta/2-\lambda_r\omega_i\leq0$.
At a zero coordinate, $(\mQ\vp-\vb)_i\leq0$ follows directly from
the Stieltjes signs.
Clipping gives $\vzero\leq\vu\leq\vx_r^*$ and
$\vzero\leq\vx_r^*-\vu\leq2\delta\vomega$, using $\omega_i\geq1$.
If $u_i>0$, then $u_i=p_i-\delta\omega_i$ and
$\vu\geq\vp-\delta\vomega$. Therefore
$(\mQ\vu)_i\leq(\mQ\vp)_i-\alpha\delta\omega_i\leq b_i$;
zero coordinates again follow by the off-diagonal sign.

The maximum of two nonnegative subsolutions is a subsolution: on row $i$,
choose the vector that attains the maximum there; the diagonal term is
unchanged and every other coordinate can only increase. Both vectors are
below $\vx_r^*$, and the maximum retains the displayed error bound.
The identity $|\mQ|\vomega=\vomega$ then gives
\[
 \vb-\mQ\overline\vx^+
 \leq\vb-\mQ\vx_r^*+2\delta\vomega
 \leq(\lambda_r+2\delta)\vomega\leq2\lambda_r\vomega.
\]
Support containment gives the volume bound and makes the linear KKT term
in the objective expansion zero. Thus the gap is
$\tfrac12\norm{\overline\vx^+-\vx_r^*}_{\mQ}^2
 \leq\tfrac12(2\delta)^2\vol(\mathcal S_r^*)\leq2\delta^2/r$.
\end{proof}

\subsection{Two energies and cumulative local work}\label{sec:deterministic-analysis}
Fix one stage and suppress its regularizer in the iteration indices.
The analysis has three steps. An ordinary accelerated energy bounds the
objective gap. The same comparison argument in a different metric bounds
the squared response of the actual projected trajectory. Finally, a
signed-flow inequality converts that response bound into cumulative degree
volume. The second and third steps account for locality.

\subsubsection{One comparison identity, two metrics}

\begin{lemma}[Accelerated comparison]
\label{lem:det-comparison}
In a Hilbert inner product, let $\varphi$ be one-smooth and
$\mu_{\mathrm{c}}$-strongly convex, where $\mu_{\mathrm{c}}=\theta^2$ and
$0<\theta<1$. Write $\chi=1-\theta$ and set
\[
 \vy=\frac{\vx+\theta\vz}{1+\theta},\quad
 \vg=\nabla\varphi(\vy),\quad
 \vq=\chi\vz+\theta\vy-\vg/\theta,\quad
 \vx^+=\chi\vx+\theta\vp.
\]
Here the gradient and norms use the Hilbert metric. If a comparison vector
$\vu$ satisfies
\begin{equation}
 \inner{\vp-\vu}{\vq-\vp}\geq0,
 \label{eq:det-comparison-sector}
\end{equation}
then
\begin{align*}
 &\varphi(\vx^+)-\varphi(\vu)
       +\frac{\mu_{\mathrm{c}}}2\norm{\vp-\vu}^2\\
 &\quad\leq\chi\left(\varphi(\vx)-\varphi(\vu)
       +\frac{\mu_{\mathrm{c}}}2\norm{\vz-\vu}^2\right)
       -\frac{\mu_{\mathrm{c}}\theta(1-\mu_{\mathrm{c}})}2
        \norm{\vz-\vy}^2.
\end{align*}
The vector $\vu$ need not minimize $\varphi$, and the quantity in
parentheses need not be nonnegative.
\end{lemma}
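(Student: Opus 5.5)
The plan is the standard one-step Nesterov estimate-sequence argument. The sector condition \cref{eq:det-comparison-sector} replaces the usual prox-optimality inequality, and $\vu$ is an arbitrary comparison point rather than a minimizer. Throughout I will use the auxiliary vector $\vw:=\chi\vz+\theta\vy$, so that $\vq=\vw-\vg/\theta$, i.e.\ $\vg=\theta(\vw-\vq)$.

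\textbf{Step 1: affine identities.} From $(1+\theta)\vy=\vx+\theta\vz$ and $\chi(1+\theta)=1-\theta^2$ I will derive three identities:
\[
 \chi\vx=(1-\theta^2)\vy-\theta\chi\vz,\qquad
 \vx^+-\vy=\theta(\vp-\vw),\qquad
 \chi\vx+\theta\vu-\vy=\theta(\vu-\vw),
\]
together with $\vx-\vy=\theta(\vy-\vz)$.

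\textbf{Step 2: upper and lower bounds at $\vy$.} One-smoothness at $\vy$ gives
\[
 \varphi(\vx^+)\leq\varphi(\vy)+\theta\inner{\vg}{\vp-\vw}+\tfrac{\theta^2}{2}\norm{\vp-\vw}^2 .
\]
I will then apply $\mu_{\mathrm c}$-strong convexity at $\vy$ twice, once toward $\vx$ with weight $\chi$ and once toward $\vu$ with weight $\theta$. This is important: using only convexity toward $\vx$ would give the weaker coefficient $\theta(1-\theta)$ instead of $\theta(1-\theta^2)$. The weighted sum reads
\[
 \chi\varphi(\vx)+\theta\varphi(\vu)\geq\varphi(\vy)+\theta\inner{\vg}{\vu-\vw}
 +\tfrac{\chi\mu_{\mathrm c}\theta^2}{2}\norm{\vy-\vz}^2+\tfrac{\theta\mu_{\mathrm c}}{2}\norm{\vu-\vy}^2 .
\]
Subtracting the two bounds, the $\varphi(\vy)$ terms cancel and the gradient terms combine into $\theta\inner{\vg}{\vp-\vu}$.

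\textbf{Step 3: the sector step.} This is where \cref{eq:det-comparison-sector} enters. Writing $\vg=\theta(\vw-\vq)$ and using $\inner{\vq-\vp}{\vp-\vu}\geq0$, I get
\[
 \theta\inner{\vg}{\vp-\vu}\leq\theta^2\inner{\vw-\vp}{\vp-\vu}.
\]
Adding $\tfrac{\theta^2}2\norm{\vp-\vw}^2$ and completing the square gives $\tfrac{\theta^2}{2}\bigl(\norm{\vw-\vu}^2-\norm{\vp-\vu}^2\bigr)$. Since $\mu_{\mathrm c}=\theta^2$, the term $-\tfrac{\mu_{\mathrm c}}2\norm{\vp-\vu}^2$ can be moved to the left side, which is exactly where it appears in the claim.

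\textbf{Step 4: the convex-combination identity.} Because $\vw-\vu=\chi(\vz-\vu)+\theta(\vy-\vu)$ with $\chi+\theta=1$,
\[
 \norm{\vw-\vu}^2=\chi\norm{\vz-\vu}^2+\theta\norm{\vy-\vu}^2-\chi\theta\norm{\vz-\vy}^2 .
\]
Multiplying by $\theta^2/2$, the $\theta^3\norm{\vy-\vu}^2/2$ term cancels exactly against the strong-convexity term toward $\vu$. This leaves $\chi\tfrac{\mu_{\mathrm c}}2\norm{\vz-\vu}^2$ plus the negative terms
\[
 -\tfrac{\mu_{\mathrm c}\theta\chi}{2}\norm{\vz-\vy}^2-\tfrac{\mu_{\mathrm c}\theta^2\chi}{2}\norm{\vz-\vy}^2
 =-\tfrac{\mu_{\mathrm c}\theta(1-\theta^2)}{2}\norm{\vz-\vy}^2 ,
\]
which is the claimed dissipation term. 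Finally, writing $\varphi(\vx^+)-\chi\varphi(\vx)-\theta\varphi(\vu)=\bigl(\varphi(\vx^+)-\varphi(\vu)\bigr)-\chi\bigl(\varphi(\vx)-\varphi(\vu)\bigr)$ puts the inequality in the stated form.

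No step requires $\vu$ to be a minimizer or the bracketed quantity to be nonnegative: only smoothness, strong convexity, and the sector inequality are used. The main obstacle is bookkeeping. One must keep the $\chi$-weighted strong-convexity term toward $\vx$, since it supplies the extra $\theta^2\chi$ in the coefficient, and one must check that the affine identities of Step 1 make the two gradient inner products collapse exactly.
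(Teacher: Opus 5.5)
Your proof is correct and follows the paper's route. Both arguments use one-smoothness at $\vy$, $\mu_{\mathrm{c}}$-strong convexity toward $\vx$ and $\vu$ with weights $\chi,\theta$, the sector inequality, the convex-combination identity for $\chi(\vz-\vu)+\theta(\vy-\vu)$, and $\chi(1+\theta)=1-\mu_{\mathrm{c}}$. The only difference is the order of the algebra: the paper writes $\vx^+=\vy-\vg+\theta(\vp-\vq)$ and applies the sector condition in the three-point form $\norm{\vp-\vu}^2\leq\norm{\vq-\vu}^2-\norm{\vp-\vq}^2$ before the gradient terms cancel, whereas you collapse the gradients to $\theta\inner{\vg}{\vp-\vu}$ first and apply the sector condition in inner-product form via $\vg=\theta(\vw-\vq)$.
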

\begin{proof}
The update satisfies $\vx^+=\vy-\vg+\theta(\vp-\vq)$.
Smoothness and expansion of its squared displacement imply
\[
 \varphi(\vx^+)\leq\varphi(\vy)-\tfrac12\norm{\vg}^2
                     +\tfrac{\mu_{\mathrm{c}}}2\norm{\vp-\vq}^2.
\]
The sector inequality gives
$\norm{\vp-\vu}^2\leq\norm{\vq-\vu}^2-\norm{\vp-\vq}^2$.
Add the two inequalities and expand
$\vq-\vu=\chi(\vz-\vu)+\theta(\vy-\vu)-\vg/\theta$.
Strong convexity at $\vx$ and $\vu$, with weights $\chi$ and $\theta$,
respectively, gives
\begin{align*}
 \chi\varphi(\vx)+\theta\varphi(\vu)
 &\geq\varphi(\vy)
    +\inner{\vg}{\chi(\vx-\vy)+\theta(\vu-\vy)}\\
 &\quad+\tfrac{\mu_{\mathrm{c}}}2
       \bigl(\chi\norm{\vx-\vy}^2+\theta\norm{\vu-\vy}^2\bigr).
\end{align*}
The gradient terms cancel because
\[
 \chi(\vx-\vy)+\theta(\vu-\vy)
 =-\theta\{\chi(\vz-\vu)+\theta(\vy-\vu)\}.
\]
Use $\vx-\vy=\theta(\vy-\vz)$ and the convex-combination identity
\[
 \norm{\chi(\vz-\vu)+\theta(\vy-\vu)}^2
 =\chi\norm{\vz-\vu}^2+\theta\norm{\vy-\vu}^2
       -\chi\theta\norm{\vz-\vy}^2.
\]
The remaining decrease is
$-\mu_{\mathrm{c}}\chi\theta(1+\theta)\norm{\vz-\vy}^2/2$.
Since $\chi(1+\theta)=1-\mu_{\mathrm{c}}$, this proves the claim.
\end{proof}

\Needspace{5\baselineskip}
\begin{proposition}[Accelerated objective convergence]
\label{prop:det-stage-rate}
The exact stage recurrence satisfies
\begin{equation}
 E_{k+1}\leq\chi E_k,\qquad
 E_k:=J_r(\vxi_k)-J_r(\vxi^*)
          +\tfrac{\mu_{\mathrm{c}}}2\norm{\vz_k-\vxi^*}_2^2,
 \qquad E_0\leq1.
 \label{eq:det-first-energy}
\end{equation}
Consequently $K\leq1+\theta^{-1}\log(1/\tau)$ steps suffice for
correction gap at most $\tau\in(0,1)$.
\end{proposition}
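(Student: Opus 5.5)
We apply \cref{lem:det-comparison} in the Euclidean metric with $\varphi=J_r$, $\vx=\vxi_k$, $\vz=\vz_k$, $\vp=\vz_{k+1}$, and comparison vector $\vu=\vxi^*$. The preconditions check out directly. The Hessian of $J_r$ is $\mQ$, and $\alpha\mI\preceq\mQ\preceq\mI$, so $J_r$ is one-smooth. It is also $\mu_{\mathrm{c}}$-strongly convex, because $\mu_{\mathrm{c}}=\theta^2\leq\alpha$ by \cref{eq:det-acceleration-parameters}. The recurrence \cref{eq:det-iteration} matches the lemma's update exactly: $\vq_k=\chi\vz_k+\theta\vy_k-\nabla J_r(\vy_k)/\theta$, since $\nabla J_r(\vy)=\mQ\vy-\vh+\lambda_r\vomega$, and $\vxi_{k+1}=\chi\vxi_k+\theta\vz_{k+1}$.

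The sector inequality \cref{eq:det-comparison-sector} holds with $\vu=\vxi^*$. The point $\vz_{k+1}$ is the Euclidean projection of $\vq_k$ onto the closed convex set $\gK_r$, and $\vxi^*\in\gK_r$ by \cref{lem:det-correction-domain}. The variational characterization of projection then gives $\inner{\vz_{k+1}-\vxi^*}{\vq_k-\vz_{k+1}}\geq0$.

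With these facts, the lemma yields
\[
E_{k+1}\leq\chi E_k-\tfrac{\mu_{\mathrm{c}}\theta(1-\mu_{\mathrm{c}})}2\norm{\vz_k-\vy_k}_2^2 .
\]
The last term is nonpositive because $\mu_{\mathrm{c}}<1$. Since $\vxi^*$ minimizes $J_r$ on $\gK_r$ and every $\vxi_k\in\gK_r$, each $E_k$ is nonnegative, so the recursion contracts a genuine energy.

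For $E_0$, take $\vxi_0=\vz_0=\vzero$. Then
\[
E_0=J_r(\vzero)-J_r(\vxi^*)+\tfrac{\mu_{\mathrm{c}}}2\norm{\vxi^*}_2^2=-J_r(\vxi^*)+\tfrac{\mu_{\mathrm{c}}}2\norm{\vxi^*}_2^2 .
\]
Since $\vxi^*\geq\vzero$, we have $-J_r(\vxi^*)\leq(\vh-\lambda_r\vomega)^\trans\vxi^*\leq\vh^\trans\vxi^*$. By \cref{lem:det-correction-domain}, $\vxi^*\leq\vt\leq4r\vomega$ and $\vomega^\trans\vxi^*\leq m_r$. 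Combined with $\vh\leq4\lambda_r\vomega$, this gives $\vh^\trans\vxi^*\leq4\alpha r\,m_r$. For the quadratic term, $\norm{\vxi^*}_2^2\leq\max_i(\xi^*_i/\omega_i)\,\vomega^\trans\vxi^*$ because $\omega_i\geq1$, so $\norm{\vxi^*}_2^2\leq4r\,m_r$. Hence
\[
E_0\leq 4\alpha r\,m_r+2\mu_{\mathrm{c}}r\,m_r\leq 6\alpha r .
\]

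The main obstacle is that this crude bound is too weak: we need $E_0\leq1$, which requires $r\leq1/(6\alpha)$. The crude bound suffices whenever $r\leq1/6$, which holds when $d_v\geq3$ because $r\leq1/(2d_v)$. In general the constant must be tightened. First I would bound $-J_r(\vxi^*)$ exactly by $\tfrac12(\vh-\lambda_r\vomega)^\trans\vxi^*$, which holds because the gradient vanishes in the direction of $\vxi^*$ at the face optimum, and use $r\leq1/(2d_v)\leq1/2$ to shave constants. If that is still insufficient, a cleaner route is $\vh^\trans\vxi^*\leq\vh^\trans\vt=\vh^\trans\mQ^{-1}\vh\leq\norm{\vh}_2^2/\alpha$ together with a sparse bound on $\norm{\vh}_2^2$, or to note that $E_0\leq1$ is only used up to constants.

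For the iteration count, unroll the recursion to $E_K\leq\chi^K E_0\leq(1-\theta)^K\leq e^{-\theta K}$. The correction gap $J_r(\vxi_K)-J_r(\vxi^*)$ is at most $E_K$, so it is at most $\tau$ once $K\geq\theta^{-1}\log(1/\tau)$. Rounding up to an integer gives the stated bound $K\leq1+\theta^{-1}\log(1/\tau)$. This also justifies the scalar $a=\chi^k$ in \cref{alg:deterministic-rppr} as an upper bound on $E_k$.
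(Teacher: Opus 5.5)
The contraction step in your proposal matches the paper: you apply the comparison lemma in the Euclidean metric with $\vu=\vxi^*$, take the sector inequality from projection onto $\gK_r\ni\vxi^*$, and drop the nonpositive last term. The gap is $E_0\leq1$, which your proposal does not establish.

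Your estimates $\vh^\trans\vxi^*\leq4\alpha r m_r$ and $\norm{\vxi^*}_2^2\leq4r m_r$ both go through the box $\vxi^*\leq4r\vomega$. They stay above $1$ unless $\alpha r$ is small.
\begin{itemize}
\item Halving the linear term with your identity still leaves about $2\alpha r m_r+2\mu_{\mathrm{c}}r m_r$.
\item The $\norm{\vh}_2^2/\alpha$ route gives the same $4\alpha r m_r$.
\item Your reduction to small $r$ rests on a false premise. The first stage has $r=\max\{\rho,1/(2d_v)\}\geq1/(2d_v)$, not $r\leq1/(2d_v)$. When $\rho>1/(2d_v)$ there is a single stage with $r=\rho$, which can be arbitrarily close to $1/d_v$, hence close to $1$ if $d_v=1$.
\end{itemize}
You also cannot fall back on ``$E_0\leq1$ up to constants''. \cref{alg:deterministic-rppr} initializes $a=1$ as a certified upper bound on $E_k$, and \cref{lem:det-repair} needs a candidate gap at most $\tau=\alpha\delta^2/8$ exactly, not $C\tau$.

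The missing step is to use the mass cap squared instead of box times mass. Set $\vell_r:=\mQ\vxi^*-\vh+\lambda_r\vomega=\mQ\vx_r^*-\vb+\lambda_r\vomega$. This is the KKT slack of $\vx_r^*$, so it is nonnegative and vanishes on $\supp\vx_r^*\supseteq\supp\vxi^*$. Hence $\vell_r^\trans\vxi^*=0$, which is the identity you gestured at. It gives $(\vh-\lambda_r\vomega)^\trans\vxi^*=\norm{\vxi^*}_{\mQ}^2$, so $J_r(\vxi^*)=-\tfrac12\norm{\vxi^*}_{\mQ}^2$ and
\[
E_0=\tfrac12\norm{\vxi^*}_{\mQ}^2+\tfrac{\mu_{\mathrm{c}}}{2}\norm{\vxi^*}_2^2\leq\tfrac{1+\mu_{\mathrm{c}}}{2}\norm{\vxi^*}_2^2,
\]
using $\mQ\preceq\mI$. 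Since $\vxi^*\geq\vzero$ and $\omega_i\geq1$, we have $\norm{\vxi^*}_2\leq\norm{\vxi^*}_1\leq\vomega^\trans\vxi^*\leq m_r\leq1$. Therefore $E_0\leq1$ for every $r$ and $\alpha$.

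This is the paper's argument. With it, your unrolling $E_K\leq\chi^K\leq e^{-\theta K}$ and the iteration count go through unchanged.
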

\begin{proof}
The Hessian of $J_r$ lies between $\alpha\mI$ and $\mI$, and
$\mu_{\mathrm{c}}\leq\alpha$. Euclidean projection onto $\gK_r$ supplies
\cref{eq:det-comparison-sector} with $\vu=\vxi^*$.
Apply \cref{lem:det-comparison} and discard its nonpositive last term.
The correction slack
\begin{equation*}
 \vell_r:=\mQ\vxi^*-\vh+\lambda_r\vomega
         =\mQ\vx_r^*-\vb+\lambda_r\vomega
\end{equation*}
is nonnegative and satisfies $\vell_r^\trans\vxi^*=0$.
Indeed $\vxi^*$ can be positive only where $\vx_r^*$ is positive.
It follows that
\[
 E_0=\tfrac12\norm{\vxi^*}_{\mQ}^2
               +\tfrac{\mu_{\mathrm{c}}}2\norm{\vxi^*}_2^2\leq1,
\]
because $\vomega^\trans\vxi^*\leq1$, $\omega_i\geq1$, and
$\mQ\preceq\mI$. Finally $\chi^K\leq\exp(-\theta K)$.
\end{proof}

\subsubsection{The projection inequality in the PageRank metric}

Euclidean projection onto a box with a mass cap is not generally
nonexpansive in an unrelated matrix metric. We need only the following
inequality for the particular inverse-source comparator $\vt$.

\begin{lemma}[The second projection comparison]
\label{lem:det-sector}
For any $\vq$, put $\vp=\operatorname{Proj}_{\gK_r}(\vq)$ and
$\vn=\vq-\vp$. Then
\begin{equation*}
 \inner{\vp-\vt}{\mQ\vn}
       =(\mQ\vp-\vh)^\trans\vn\geq0.
\end{equation*}
\end{lemma}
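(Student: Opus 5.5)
The identity is immediate: since $\mQ\vt=\vh$ and $\mQ$ is symmetric, $\inner{\vp-\vt}{\mQ\vn}=(\mQ\vp-\mQ\vt)^\trans\vn=(\mQ\vp-\vh)^\trans\vn$. The real content is the sign. The plan is to read off the structure of the normal $\vn$ from the explicit projection \cref{eq:det-waterfill}. Then we pair each piece of $\vn$ with a sign of $\mQ\vp-\vh$ that holds on the matching coordinate class.

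\emph{Step 1: decompose the normal.} Let $\gamma\geq0$ be the scalar in \cref{eq:det-waterfill}. Split the coordinates into three classes: zero coordinates ($p_i=0$), interior coordinates ($0<p_i<4r\omega_i$), and saturated coordinates ($p_i=4r\omega_i$). The waterfill formula gives
\begin{itemize}[leftmargin=*]
\item $n_i=\gamma\omega_i$ on interior coordinates;
\item $n_i\geq\gamma\omega_i$ on saturated coordinates;
\item $n_i\leq\gamma\omega_i$ on zero coordinates, since there $q_i/\omega_i-\gamma\leq0$.
\end{itemize}
Hence $\vn=\gamma\vomega+\va-\vc$, where $\va\geq\vzero$ is supported on the saturated coordinates and $\vc\geq\vzero$ on the zero coordinates. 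This is just the KKT decomposition of the projection onto $\gK_r$: the mass-cap multiplier, plus the upper-box multipliers, minus the lower-bound multipliers. We then expand
\[
 (\mQ\vp-\vh)^\trans\vn=\gamma(\mQ\vp-\vh)^\trans\vomega+\sum_i a_i(\mQ\vp-\vh)_i-\sum_i c_i(\mQ\vp-\vh)_i
\]
and show that each term is nonnegative.

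\emph{Step 2: the mass term.} By $\mQ\vomega=\alpha\vomega$ and $\vomega^\trans\vh=m_h=\alpha m_r$,
\[
 (\mQ\vp-\vh)^\trans\vomega=\alpha(\vomega^\trans\vp-m_r).
\]
If $\gamma>0$, complementary slackness forces $\vomega^\trans\vp=m_r$, so this term is zero. If $\gamma=0$, it vanishes trivially.

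\emph{Step 3: the zero coordinates.} If $p_i=0$, then $(\mQ\vp)_i=\sum_{j\neq i}Q_{ij}p_j\leq0$ by the Stieltjes signs and $\vp\geq\vzero$. Together with $h_i\geq0$ from the stage invariant, this gives $(\mQ\vp-\vh)_i\leq0$. Hence $-c_i(\mQ\vp-\vh)_i\geq0$.

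\emph{Step 4: the saturated coordinates.} This is the one step where the upper box $4r\vomega$ and the residual bound $\vh\leq4\lambda_r\vomega$ must match exactly, so it is where I would expect the argument to hinge. If $p_i=4r\omega_i$, use $Q_{ij}\leq0$ and $p_j\leq4r\omega_j$ for $j\neq i$:
\[
 (\mQ\vp)_i\geq Q_{ii}\,4r\omega_i+\sum_{j\neq i}Q_{ij}\,4r\omega_j=4r(\mQ\vomega)_i=4\alpha r\omega_i=4\lambda_r\omega_i\geq h_i.
\]
Hence $a_i(\mQ\vp-\vh)_i\geq0$.

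Summing the three nonnegative contributions from Steps 2--4 proves the lemma. Note that neither the mass identity $\vomega^\trans\vt=m_r$ nor the order $\vt\leq4r\vomega$ from \cref{lem:det-correction-domain} is needed directly. They enter only through the constants $m_r$ and $4r$ built into $\gK_r$, together with the eigen-identity $\mQ\vomega=\alpha\vomega$.
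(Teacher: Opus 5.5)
Your proof is correct and follows the paper's argument essentially step for step. The paper also decomposes the Euclidean normal as $\gamma\vomega+\vn^{\mathrm{up}}-\vn^{\mathrm{low}}$ with complementary slackness, and checks that $\mQ\vp-\vh$ pairs nonnegatively with each piece: at lower faces via Stieltjes signs and $\vh\geq\vzero$, at upper faces via $4r(\mQ\vomega)_i=4\lambda_r\omega_i\geq h_i$, and on the mass normal via tightness when $\gamma>0$. Your only addition is deriving that decomposition explicitly from the waterfill formula, which is a harmless elaboration.
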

\begin{proof}
The Euclidean normal has a decomposition
$\vn=\gamma\vomega+\vn^{\mathrm{up}}-\vn^{\mathrm{low}}$,
where all multipliers are nonnegative and obey complementary slackness.
At a lower face $p_i=0$, Stieltjes signs give
$(\mQ\vp-\vh)_i\leq0$. At an upper face $p_i=4r\omega_i$,
the upper box and the same signs give
$(\mQ\vp)_i\geq4r(\mQ\vomega)_i=4\lambda_r\omega_i\geq h_i$.
Finally, if $\gamma>0$, the mass constraint is tight and
\[
 \vomega^\trans(\mQ\vp-\vh)
 =\alpha\vomega^\trans\vp-m_h=\alpha m_r-m_h=0.
\]
Each normal component has nonnegative pairing with $\mQ\vp-\vh$.
\end{proof}

To bound local work, we need a response estimate that retains the source's
diffuseness. We choose an auxiliary quadratic whose gradient in the $\mQ$
metric is the same update direction as for $J_r$. Comparing it with $\vt$
will make its initial energy proportional to $\lambda_r m_h$. Define
\begin{equation}
 \begin{aligned}
 \mathcal A_r(\vxi)&:=\tfrac12\norm{\mQ\vxi-\vh}_2^2
                  +\alpha\lambda_r\vomega^\trans\vxi,\\
 B_k&:=\mathcal A_r(\vxi_k)-\mathcal A_r(\vt)
                  +\tfrac{\mu_{\mathrm{c}}}2\norm{\vz_k-\vt}_{\mQ}^2.
 \end{aligned}
 \label{eq:det-second-energy}
\end{equation}
Its gradient in the $\mQ$ inner product is
\begin{equation}
 \nabla_{\mQ}\mathcal A_r(\vxi)
 =\mQ^{-1}\{\mQ(\mQ\vxi-\vh)+\alpha\lambda_r\vomega\}
 =\mQ\vxi-\vh+\lambda_r\vomega.
 \label{eq:det-metric-gradient}
\end{equation}
This is exactly the vector already used in \cref{eq:det-iteration}.
The metric Hessian is $\mQ$, so its eigenvalues are in $[\alpha,1]$.

\begin{lemma}[Uniform squared-response bound]
\label{lem:det-response}
For the actual projected trajectory, at every iteration,
\begin{equation*}
 B_{k+1}\leq\chi B_k,\qquad
 \norm{\mQ(\vxi_k-\vxi^*)}_2^2
       \leq18\lambda_r m_h\leq18\alpha^2r.
\end{equation*}
\end{lemma}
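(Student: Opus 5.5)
The plan is to apply \cref{lem:det-comparison} a second time, now in the $\mQ$ inner product $\inner{\va}{\vb}_{\mQ}:=\va^\trans\mQ\vb$, with objective $\varphi=\mathcal A_r$ and comparison vector $\vu=\vt$. This is the only step needing care, so I check the hypotheses first.

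\textbf{Hypotheses of \cref{lem:det-comparison}.}
\begin{itemize}[leftmargin=*]
\item \emph{Curvature.} By \cref{eq:det-metric-gradient}, the $\mQ$-metric gradient of $\mathcal A_r$ at $\vy_k$ is $\mQ\vy_k-\vh+\lambda_r\vomega$. This is exactly the vector divided by $\theta$ in \cref{eq:det-iteration}. The metric Hessian is $\mQ^{-1}\mQ^2=\mQ$, whose spectrum lies in $[\alpha,1]$. Hence $\mathcal A_r$ is one-smooth and $\mu_{\mathrm c}$-strongly convex in this metric, since $\mu_{\mathrm c}\leq\alpha$.
\item \emph{Same recurrence.} The vectors $\vy_k$ and $\vq_k$ defined in \cref{lem:det-comparison} therefore coincide with those of \cref{eq:det-iteration}. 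We take $\vp=\vz_{k+1}=\operatorname{Proj}_{\gK_r}(\vq_k)$, which is still the Euclidean projection.
\item \emph{Sector condition.} With $\vn=\vq_k-\vp$, the required inequality \cref{eq:det-comparison-sector} in the $\mQ$ metric reads $\inner{\vp-\vt}{\vq_k-\vp}_{\mQ}=\inner{\vp-\vt}{\mQ\vn}\geq0$. This is precisely \cref{lem:det-sector}.
\end{itemize}
\cref{lem:det-comparison} then gives $B_{k+1}\leq\chi B_k$ after discarding its nonpositive last term. No sign of $B_k$ is needed for this step, and iterating gives $B_k\leq\chi^kB_0$.

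\textbf{Initial energy.} Since $\vxi_0=\vz_0=\vzero$, and using $\mQ\vt=\vh$ together with $\vomega^\trans\vt=m_r$ from \cref{lem:det-correction-domain}, we have $\mathcal A_r(\vt)=\alpha\lambda_r m_r=\lambda_r m_h$. Because $\vzero\leq\vh\leq4\lambda_r\vomega$, we get $\tfrac12\norm{\vh}_2^2\leq2\lambda_r m_h$, so $\mathcal A_r(\vzero)-\mathcal A_r(\vt)\leq\lambda_r m_h$. Also $\norm{\vt}_{\mQ}^2=\vt^\trans\vh\leq4r\,\vomega^\trans\vh=4rm_h$, so $\tfrac{\mu_{\mathrm c}}2\norm{\vt}_{\mQ}^2\leq2\alpha rm_h=2\lambda_r m_h$. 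Adding these gives $B_0\leq3\lambda_r m_h$.

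\textbf{From energy to response.} Since $\mQ\vt=\vh$, expanding gives
\[
\mathcal A_r(\vxi)-\mathcal A_r(\vt)=\tfrac12\norm{\mQ(\vxi-\vt)}_2^2+\alpha\lambda_r(\vomega^\trans\vxi-m_r).
\]
The iterate $\vxi_k$ lies in $\gK_r$, so $\vomega^\trans\vxi_k\geq0$, and the second term is at least $-\lambda_r m_h$. Dropping the nonnegative kinetic term of $B_k$ yields
\[
\tfrac12\norm{\mQ(\vxi_k-\vt)}_2^2\leq B_k+\lambda_r m_h\leq4\lambda_r m_h.
\]

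\textbf{Comparing $\vt$ with $\vxi^*$.} We have $\mQ(\vt-\vxi^*)=\vb-\mQ\vx_r^*\in[\vzero,\lambda_r\vomega]$, by the KKT argument of \cref{prop:rppr-ppr-bridge}. Hence
\[
\norm{\mQ(\vt-\vxi^*)}_2^2\leq\lambda_r\vomega^\trans(\vb-\mQ\vx_r^*)=\lambda_r\alpha(1-\vomega^\trans\vx_r^*)\leq\lambda_r m_h,
\]
where the last step uses $\vx_r^*\geq\overline\vx$.

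\textbf{Conclusion.} By $\norm{\va+\vb}^2\leq2\norm{\va}^2+2\norm{\vb}^2$,
\[
\norm{\mQ(\vxi_k-\vxi^*)}_2^2\leq2(8+1)\lambda_r m_h=18\lambda_r m_h.
\]
Finally $m_h\leq\alpha$ gives $18\lambda_r m_h\leq18\alpha^2r$.
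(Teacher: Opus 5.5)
Your proposal is correct and follows the paper's proof essentially step for step: you apply \cref{lem:det-comparison} in the $\mQ$ metric with comparator $\vt$, using \cref{lem:det-sector} for the sector condition, and then bound $B_0\leq3\lambda_r m_h$. You then drop the nonnegative kinetic and linear terms to get $\norm{\mQ\vxi_k-\vh}_2^2\leq8\lambda_r m_h$, and finish with the squared triangle inequality using $\norm{\vb-\mQ\vx_r^*}_2^2\leq\lambda_r m_h$, exactly as the paper does (your one unstated step, $\chi^kB_0\leq3\lambda_r m_h$ even when $B_0<0$, holds because $3\lambda_r m_h\geq0$).
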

\begin{proof}
Apply \cref{lem:det-comparison} in the $\mQ$ metric, using
\cref{lem:det-sector} and \cref{eq:det-metric-gradient}.
The comparator $\vt$ need not minimize $\mathcal A_r$; the comparison
lemma explicitly permits this.
Diffuseness and \cref{lem:det-correction-domain} give
\[
 \norm{\vh}_2^2\leq4\lambda_r m_h,
 \quad \vt^\trans\mQ\vt=\vt^\trans\vh\leq4r m_h,
 \quad \mathcal A_r(\vt)=\lambda_r m_h.
\]
Thus $B_0\leq3\lambda_r m_h$ and
$B_k\leq\chi^k B_0\leq3\lambda_r m_h$, even if $B_0$ is negative.
The linear and kinetic terms in \cref{eq:det-second-energy} are nonnegative,
which yields $\norm{\mQ\vxi_k-\vh}_2^2\leq8\lambda_r m_h$.
Meanwhile
\[
 \vq^*:=\vh-\mQ\vxi^*=\vb-\mQ\vx_r^*
 \quad\hbox{satisfies}\quad
 \vzero\leq\vq^*\leq\lambda_r\vomega,
 \quad \vomega^\trans\vq^*\leq m_h.
\]
Hence $\norm{\vq^*}_2^2\leq\lambda_r m_h$.
The squared triangle inequality gives the factor $2(8+1)=18$.
\end{proof}

\subsubsection{An analytical set with a uniform outside margin}

Let $\gC_r:=\supp(\vx_{r/2}^*)$. This set is used only in the proof.
By \cref{lem:rppr-order-mass},
\begin{equation}
 \vol(\gC_r)\leq2/r,\qquad
 i\notin\gC_r\quad\Longrightarrow\quad
 (\vx_r^*)_i=0,\quad (\vell_r)_i\geq\lambda_r\omega_i/2.
 \label{eq:det-comparison-margin}
\end{equation}
To verify the margin, both optima vanish at such $i$.
The order $\vx_{r/2}^*\geq\vx_r^*$ and the nonpositive off-diagonals
imply $(\mQ\vx_r^*)_i\geq(\mQ\vx_{r/2}^*)_i$.
Subtracting the two regularization loads and applying KKT at $r/2$
proves the claim. Thus even when complementarity is arbitrarily weak on
the optimal boundary, every excursion outside $\gC_r$ pays a known margin.

\begin{theorem}[Cumulative kinetic volume]
\label{thm:det-kinetic-work}
For every horizon $K\geq1$ of an exact stage,
\begin{equation}
 \sum_{k=0}^{K-1}\vol(\supp\vz_{k+1})\leq\frac{76K}{r}.
 \label{eq:det-kinetic-work}
\end{equation}
The sum includes all repeated visits; no confinement assumption is imposed
on the kinetic supports.
\end{theorem}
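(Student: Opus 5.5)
The plan is to split each kinetic support into its part inside the analytical set $\gC_r=\supp(\vx_{r/2}^*)$ and its part outside. The inside part is cheap: by \cref{eq:det-comparison-margin}, $\vol(\supp\vz_{k+1}\cap\gC_r)\leq\vol(\gC_r)\leq2/r$ at every iteration. Summed over the horizon, this contributes at most $2K/r$. The whole difficulty is the outside part, where I will use the uniform margin $(\vell_r)_i\geq\lambda_r\omega_i/2$. The idea is a Markov/Chebyshev-type charging: each outside activation must be paid for by a large entry of a vector whose squared norm is controlled by one of the two energies.

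\textbf{Step 1 (activation inequality).} By \cref{eq:det-waterfill} with $\gamma\geq0$, $z_{k+1,i}>0$ forces $q_{k,i}>0$. Write the gradient at $\vy_k$ as
\[
\mQ\vy_k-\vh+\lambda_r\vomega=\mQ(\vy_k-\vxi^*)+\vell_r .
\]
Then for $i\notin\gC_r$ the condition $q_{k,i}>0$ gives
\[
\chi z_{k,i}+\theta y_{k,i}+\theta^{-1}\bigl(\mQ(\vxi^*-\vy_k)\bigr)_i>\frac{\lambda_r\omega_i}{2\theta}.
\]
Next I expand $\vy_k=(\vxi_k+\theta\vz_k)/(1+\theta)$. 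This turns the response term into a combination of $\mQ(\vxi^*-\vxi_k)$, which is controlled by \cref{lem:det-response}, and $\theta\,\mQ(\vz_k-\vt)$ plus comparator terms. For the latter, the kinetic part of $B_k$ gives $\norm{\mQ(\vz_k-\vt)}_2^2\leq\norm{\vz_k-\vt}_{\mQ}^2\leq 2B_k/\mu_{\mathrm c}\lesssim\lambda_r m_h/\theta^2$. The factor $\theta$ in front then cancels the $\theta^{-2}$.

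\textbf{Step 2 (charging the response terms).} If a vector $\ve$ satisfies $|e_i|\geq c\lambda_r\omega_i$ on a set $\gS$, then
\[
\vol(\gS)=\sum_{i\in\gS}\omega_i^2\leq\frac{\norm{\ve}_2^2}{c^2\lambda_r^2}.
\]
With $\norm{\ve}_2^2\leq 18\lambda_r m_h\leq18\alpha^2 r$ (and the analogous bound for the $\vz_k$ term), each such set has volume $O(1/r)$ per iteration. Summing gives $O(K/r)$. The constant $76$ should come from splitting the threshold $\lambda_r\omega_i/(2\theta)$ into a few equal shares and adding the $2/r$ from $\gC_r$. I will fix the split only after seeing which terms survive.

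\textbf{Step 3 (main obstacle: the kinetic/primal terms $\chi z_{k,i}+\theta y_{k,i}$).} A naive mass count fails here. Using $\vomega^\trans\vz_k\leq m_r$ and charging each outside coordinate with $z_{k,i}\gtrsim\lambda_r\omega_i/\theta$ yields only $O(1/(\theta r))$ per step, losing exactly the $\sqrt\alpha$ we want. My first attempt will be to avoid separating $\vz_k$ from the gradient. I will rewrite
\[
\vq_k-\vz_k=-\theta(\vz_k-\vy_k)-\theta^{-1}\bigl(\mQ\vy_k-\vh+\lambda_r\vomega\bigr).
\]
The point is that $\vz_k$ itself lies in $\gK_r$, and for $i\notin\gC_r$ the comparators vanish or are small there ($\xi^*_i=0$). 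So an outside coordinate can stay active only if the increment pushes it, or if it was already active with $z_{k,i}$ large. I would then charge persistent outside mass through the $\mQ$-kinetic energy $\norm{\vz_k-\vt}_{\mQ}^2$ measured against $\vt$, together with the dissipation term $\tfrac{\mu_{\mathrm c}\theta(1-\mu_{\mathrm c})}{2}\norm{\vz_k-\vy_k}^2$ retained from \cref{lem:det-comparison} applied to $B_k$. This is the step I expect to be delicate. If the direct pointwise charging fails, I would fall back to a signed-flow/telescoping argument: bound $\sum_k\vomega^\trans(\vz_{k+1}-\vz_k)_+$ restricted to $\gC_r^c$ via the margin, so that each unit of outside weighted mass costs at least $\lambda_r\omega_i/\theta$ of the response budget, accumulated over the horizon.
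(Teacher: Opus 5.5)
\textbf{Verdict: genuine gap.} Step~3, the carry-over term $\chi z_{k,i}$, is the entire content of the theorem, and your proposal leaves it open.

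\textbf{What is already sound.} Steps~1--2 are fine as far as they go. The part of each kinetic support inside $\gC_r$ costs at most $2K/r$. The coordinates where the primal or kinetic \emph{response} exceeds a fixed fraction of $\lambda_r\omega_i$ have $\mathcal{O}(1/r)$ volume per iteration. The $\theta y_{k,i}$ term is also harmless: charging it requires $y_{k,i}\geq c\,r\omega_i$ (since $\theta^2\leq\alpha$), while $\vomega^\trans\vy_k\leq1$. One slip: the $\mathcal A_r$-gap in $B_k$ can be negative, so the correct kinetic bound is $\norm{\vz_k-\vt}_{\mQ}^2\leq2(B_k+\lambda_r m_h)/\mu_{\mathrm{c}}\leq8\lambda_r m_h/\mu_{\mathrm{c}}$. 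This is the same order and does not affect the argument.

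\textbf{Why your first route for $\chi z_{k,i}$ fails.} The kinetic part of $B_k$ vanishes at $\vz_k=\vt\in\gK_r$. But $\vt=\mQ^{-1}\vh$ is strictly positive on the whole connected graph, so that energy says nothing about where outside kinetic mass sits. Euclidean energies measured against $\vxi^*$ lose powers of $\theta$. The dissipation term is another non-local $\ell_2$ quantity. More basically, any charge made within one iteration sees only the kinetic mass present at that iteration. That mass is limited only by the mass cap, which is exactly your $\mathcal{O}(1/(\theta r))$. The saving must come from accounting \emph{across} iterations: carried-over mass was injected by forcing earlier, and it is used up by slack while it persists.

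\textbf{The missing mechanism.} That accounting is the paper's selected-flow argument. Work in mass coordinates $\vv_k=\theta\mD^{1/2}\vz_k$ and $\vv^*=\theta\mD^{1/2}\vxi^*$. The identity $\theta\chi\vz_k-\theta(\mQ-\mu_{\mathrm{c}}\mI)\vz_k/(1+\theta)=\theta(\mI-\mQ)\vz_k/(1+\theta)$ merges $\chi\vz_k$ with the kinetic part of the gradient. It gives the exact recurrence
\[
 \theta\mD^{1/2}\vq_k-\vv^*
 =\beta_0\mK_0(\vv_k-\vv^*)+\vh_k^{\mathrm f}-\mD^{1/2}\vell_r,
 \qquad
 \vh_k^{\mathrm f}:=-\frac{\mD^{1/2}(\mQ-\mu_{\mathrm{c}}\mI)(\vxi_k-\vxi^*)}{1+\theta}.
\]
Here $\mK_0=(\mI+\mA\mD^{-1})/2$ is nonnegative and column stochastic, and $\beta_0=(1-\alpha)/(1+\theta)\leq\chi$. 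So the carry-over creates no mass, and only the primal response acts as forcing.

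Let $\gI_k=\{i:v_{k+1,i}>v_i^*\}$. On $\gI_k$ the lower projection multiplier vanishes and the other multipliers only subtract. Take $\ve_k=(\vv_k-\vv^*)_+$ over \emph{all} coordinates, since $\mK_0$ carries excess out of $\gC_r$. Then
\[
 \norm{\ve_{k+1}}_1+\sum_{i\in\gI_k}\omega_i(\vell_r)_i
 \leq\beta_0\norm{\ve_k}_1+\sum_{i\in\gI_k}h_{k,i}^{\mathrm f}.
\]
Telescoping from $\ve_0=\vzero$ bounds the cumulative slack paid on selected coordinates by the cumulative selected forcing.

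Every outside kinetic-positive coordinate is selected. It pays $\omega_i(\vell_r)_i\geq\lambda_r d_i/2$ in \emph{each} iteration it is active. This is what counts repeated visits. Neither your increments $(\vz_{k+1}-\vz_k)_+$ nor a per-unit-mass charge would do this: a coordinate that stays active without growing contributes volume but no increment.

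To finish, let $V_{\mathrm{out}}=\sum_{k<K}\vol(\supp\vz_{k+1}\setminus\gC_r)$ and $Y=V_{\mathrm{out}}+2K/r$, the selected volume. Weighted Cauchy--Schwarz against $Y$, the bound $\sum_{k<K}\norm{\mD^{-1/2}\vh_k^{\mathrm f}}_2^2\leq18\alpha^2rK$ from \cref{lem:det-response}, and Young's inequality give $Y\leq76K/r$.

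\textbf{Note on your split.} If you keep your split instead, the carry-over is simply $\chi(\vv_k-\vv^*)$, and your Step~1 bound on the kinetic response enters as extra forcing. The same telescoping then closes, but with a constant larger than $76$.
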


Thus $K$ accelerated iterations incur $\mathcal{O}(K/r)$ total degree volume,
counting every repeated visit. The sparse implementation in
\cref{sec:deterministic-implementation} accounts for the remaining
state and reporting operations.
\begin{proof}
Put $\mK_0=(\mI+\mP)/2$, where $\mP=\mA\mD^{-1}$ is nonnegative
and column stochastic. In mass coordinates set
\[
 \vv_k=\theta\mD^{1/2}\vz_k,\quad
 \vv^*=\theta\mD^{1/2}\vxi^*,\quad
 \beta_0=\frac{1-\alpha}{1+\theta}\leq\chi.
\]
The inequality $\beta_0\leq\chi$ follows from $\theta^2\leq\alpha$.
Let $\vv_k^{\mathrm{raw}}:=\theta\mD^{1/2}\vq_k$ be the proposed mass
before projection. Substitution into the raw update gives the exact identity
\begin{equation*}
 \begin{aligned}
 \vv_k^{\mathrm{raw}}-\vv^*
   &=\beta_0\mK_0(\vv_k-\vv^*)+\vh_k^{\mathrm{f}}-\vr_{\mathrm{f}}^*,\\
 \vh_k^{\mathrm{f}}&=-\frac{\mD^{1/2}(\mQ-\mu_{\mathrm{c}}\mI)
                         (\vxi_k-\vxi^*)}{1+\theta},\qquad
 \vr_{\mathrm{f}}^*=\mD^{1/2}\vell_r.
 \end{aligned}
\end{equation*}
For clarity, the coefficient of $\vv_k$ is
$(\mI-\mD^{1/2}\mQ\mD^{-1/2})/(1+\theta)=\beta_0\mK_0$;
the constant comparison terms cancel by the optimum's stationarity with
slack $\vell_r$.

Select $\gI_k=\{i:v_{k+1,i}>v_i^*\}$ and let
$\ve_k=(\vv_k-\vv^*)_+$. On selected coordinates the projected value is
positive, so lower-face multipliers vanish. Upper-face and mass multipliers
only subtract nonnegative mass. Column stochasticity therefore gives
\begin{equation}
 \norm{\ve_{k+1}}_1+\sum_{i\in \gI_k}r_{{\mathrm{f}},i}^*
 \leq\beta_0\norm{\ve_k}_1+\sum_{i\in \gI_k}h_{k,i}^{\mathrm{f}}.
 \label{eq:det-selected-flow}
\end{equation}
The forcing sum is restricted to the selected coordinates $\gI_k$.
Their degree volume will enter the Cauchy--Schwarz estimate below;
summing all positive forcing over the graph would lose this connection.
Since $\ve_0=\vzero$, summing leaves the nonnegative term
$\norm{\ve_K}_1+(1-\beta_0)\sum_{k=1}^{K-1}\norm{\ve_k}_1$ on the left.
Dropping it bounds cumulative selected slack by cumulative signed forcing.
Every positive kinetic coordinate outside $\gC_r$ is selected because
$\vxi^*$ vanishes there. With
\[
 V_{\mathrm{out}}:=\sum_{k<K}\vol(\supp\vz_{k+1}\setminus\gC_r),
 \qquad H_2:=\sum_{k<K}\norm{\mD^{-1/2}\vh_k^{\mathrm{f}}}_2^2,
\]
we obtain
\begin{equation}
 \tfrac{\lambda_r}2 V_{\mathrm{out}}
 \leq\sum_{k<K}\sum_{i\in \gI_k}h_{k,i}^{\mathrm{f}}
 \leq\sqrt{(V_{\mathrm{out}}+2K/r)H_2}.
 \label{eq:det-selected-cauchy}
\end{equation}
The last inequality is weighted Cauchy--Schwarz: the total selected
degree volume is at most $V_{\mathrm{out}}+2K/r$.
Since $\mQ-\mu_{\mathrm{c}}\mI$ commutes with $\mQ$ and has eigenvalues
between zero and those of $\mQ$, \cref{lem:det-response} implies
$H_2\leq18\alpha^2rK$.
Writing $Y=V_{\mathrm{out}}+2K/r$, \cref{eq:det-selected-cauchy} and
Young's inequality give
\[
 Y\leq\frac{2K}{r}+\frac{2\sqrt{YH_2}}{\lambda_r}
 \leq\frac{2K}{r}+\frac Y2+\frac{2H_2}{\lambda_r^2}.
\]
Thus $Y\leq4K/r+4H_2/\lambda_r^2\leq76K/r$.
The total kinetic volume is at most $Y$.
\end{proof}

An accelerated iteration bound and a small final support would not prove
\cref{eq:det-kinetic-work}. The bound concerns the actual projected
trajectory, including every excursion and revisit. Its usefulness depends
on implementing each iteration in time proportional to kinetic updates
and sparse source records, rather than scanning the historical primal
support. We give that implementation next.

\subsection{Sparse implementation}\label{sec:deterministic-implementation}
We now realize \cref{eq:det-iteration} without scanning the entire primal
history at every step. All maps, sets, and order statistics below use
deterministic balanced comparison trees. Bounds therefore include their
logarithmic operations; no expected-time hashing assumption is used.

\subsubsection{Common scaling and sparse response updates}

The primal update multiplies every old coordinate by $\chi$. One common
scale represents this decay; only new kinetic contributions require
individual updates. Let $U=4r$ and write the degree densities of the primal
correction as
\begin{equation*}
 \frac{\xi_{k,i}}{\omega_i}=\sigma_k X_{k,i},\qquad
 \sigma_k=\chi^k.
\end{equation*}
Initially $\sigma_0=1$, and all primal and kinetic response records are
zero. The fixed source is stored separately.
Maintain normalized response densities and the current kinetic response
\begin{equation*}
 \mathscr R_{k,i}:=
   [(\mQ-\mu_{\mathrm{c}}\mI)\vxi_k]_i/(\sigma_k\omega_i),
 \qquad
 \mathscr T_{k,i}:=
   [\mD^{-1/2}(\mQ-\mu_{\mathrm{c}}\mI)\vz_k]_i.
\end{equation*}
The primal averaging step changes normalized records only where the new
kinetic vector is nonzero:
\begin{equation*}
 \sigma_{k+1}=\chi\sigma_k,\qquad
 X_{k+1,i}=X_{k,i}+
       \frac{\theta z_{k+1,i}}{\sigma_{k+1}\omega_i}.
\end{equation*}
Opening the original adjacency list of each such vertex updates its diagonal
response and scatters the change to its neighbors. The work is proportional
to $\vol(\supp\vz_{k+1})$, including every repeated scan.

The raw projection density can be represented by a key $k_i$:
\begin{equation}
 \begin{aligned}
 k_i={}&-\frac{\mathscr R_{k,i}}{\theta(1+\theta)}
       +\frac1{\sigma_k}\left(
          \frac{\chi z_{k,i}}{\omega_i}
          -\frac{\mathscr T_{k,i}}{1+\theta}
          +\frac{h_i}{\theta\omega_i}\right),\\
 \frac{q_{k,i}}{\omega_i}={}&\sigma_k k_i-\lambda_r/\theta.
 \end{aligned}
 \label{eq:det-raw-key}
\end{equation}
The first term is an ordinary key. The bracketed term is an exception only
on the source records and on the current kinetic support and its boundary.
When the common scale changes, remove the old exceptions, update responses
on the newly scanned rows and their neighbors, and install the new
exceptions. This touches only the old and new kinetic incidence sets and
the fixed source records. Stored old keys permit exact deletion even when
their new values have changed. No threshold query is made during a partial
update.

The fixed source is assembled by scanning the baseline support. Its record
set lies in
\begin{equation*}
 \gF_r:=\{v\}\cup\supp\overline\vx\cup\partial(\supp\overline\vx),
 \qquad |\gF_r|=\mathcal{O}(1+1/r).
\end{equation*}
Indeed the baseline is inside $\mathcal S_r^*$, and the number of boundary
records is at most its degree volume. Refreshing source exceptions every
iteration is therefore affordable, including records that are inactive and
whose adjacency lists are never opened.

\subsubsection{An exact finite threshold reporter}

Store each represented key in a balanced tree ordered by $(k_i,i)$, with
subtree cardinalities, degree sums, and weighted key sums. Vertex labels
break ties only; mass comparisons group equal numerical keys correctly.
For a real threshold $t$, define
\begin{equation*}
 D(t)=\sum_{k_i>t}d_i,\quad
 S(t)=\sum_{k_i>t}d_i k_i,\quad
 T(t)=S(t)-tD(t)=\sum_i d_i(k_i-t)_+.
\end{equation*}
Each tail query costs $\mathcal{O}(\log(N+2))$ operations when $N$ records are stored.
Writing $t=(\lambda_r/\theta+\gamma)/\sigma_k$, the projected mass is
\begin{equation}
 M(t)=\sigma_k\{T(t)-T(t+U/\sigma_k)\}.
 \label{eq:det-mass-query}
\end{equation}
This formula includes both the lower truncation and upper clipping in
\cref{eq:det-waterfill}.

\begin{lemma}[Deterministic projection reporting]
\label{lem:det-reporter}
Given the maintained state, the exact multiplier and every positive
coordinate of $\operatorname{Proj}_{\gK_r}(\vq_k)$ can be obtained in
\begin{equation*}
 \mathcal{O}\bigl(\log^2(N+2)+|\supp\vz_{k+1}|\bigr)
\end{equation*}
operations, in addition to the charged point updates and row scans.
The search uses finitely many order statistics, with no dependence on a
smallest positive coordinate or on a complementarity margin.
\end{lemma}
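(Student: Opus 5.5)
The plan is to locate the multiplier $\gamma$ (equivalently the threshold $t=(\lambda_r/\theta+\gamma)/\sigma_k$) by searching over the breakpoints of the piecewise-linear mass function $M(t)$ in \cref{eq:det-mass-query}, and then to enumerate the positive coordinates by an in-order traversal of the tree. First check $\gamma=0$: evaluate $M(t_0)$ at $t_0=\lambda_r/(\theta\sigma_k)$ with two tail queries. If $M(t_0)\le m_r$, use $\gamma=0$ and go directly to reporting. Otherwise $M$ is continuous and nonincreasing in $t$. It has breakpoints only at stored keys $k_i$ and at shifted keys $k_i-U/\sigma_k$. It vanishes for $t$ at or above the largest key. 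Hence a unique $t^\star>t_0$ with $M(t^\star)=m_r$ exists and lies in an interval between consecutive breakpoints, where $M$ is affine.

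To find that interval I would run two nested binary searches using order statistics. An outer search over rank $j$ selects the $j$th largest key $k_{(j)}$ and evaluates $M(k_{(j)})$. Each evaluation costs $\mathcal O(\log(N+2))$ tail queries, via $T(t)$ and $T(t+U/\sigma_k)$ from the stored subtree degree and weighted-key sums. After $\mathcal O(\log(N+2))$ steps this brackets $t^\star$ between two consecutive keys. Inside that bracket the shifted breakpoints $k_i-U/\sigma_k$ still matter. Since they are the same ordered keys translated, I would run a second binary search over ranks evaluated at $k_{(j)}-U/\sigma_k$, restricted to the bracket. This gives an interval free of both kinds of breakpoints, again with $\mathcal O(\log(N+2))$ evaluations. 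On that interval $D(t)$ and $D(t+U/\sigma_k)$ are constant. So $M(t)=\sigma_k\{S(t)-S(t+U/\sigma_k)-t(D(t)-D(t+U/\sigma_k))\}-\sigma_k U D(t+U/\sigma_k)$ is affine, and solving $M(t)=m_r$ is one division. That gives the exact $\gamma$. The total is $\mathcal O(\log^2(N+2))$ operations, uses only comparisons of stored keys, and never needs a minimum gap or margin. Ties are harmless because mass comparisons group equal numerical keys.

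For reporting, every positive coordinate of the projection has $k_i>t^\star$ by \cref{eq:det-waterfill}. An in-order traversal of the keys exceeding $t^\star$, starting from the maximum, visits exactly $|\supp\vz_{k+1}|$ records. Each value $z_{k+1,i}=\omega_i\min\{U,\sigma_k(k_i-t^\star)\}$ is written in $\mathcal O(1)$. Locating the start of the traversal costs $\mathcal O(\log(N+2))$.

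The main obstacle is representing keys under the common scale $\sigma_k$, with exceptions, so that the tree really holds the current $q_{k,i}/\omega_i$. This is handled by the maintenance in \cref{eq:det-raw-key}, whose cost the statement excludes. What remains delicate is the second search: the clipped tail must be queried at translated thresholds, and that is where I would check the breakpoint bookkeeping carefully.
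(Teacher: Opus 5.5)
Your proposal is correct and follows the paper's proof essentially step for step. It has the same $\gamma=0$ test at $t_0$, then successive rank-based binary searches over the key list and the shifted-key list, one division on the resulting breakpoint-free affine piece, and a successor traversal of the keys strictly above the root.

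Two small slips need fixing. First, on that piece the saturated contribution is $+U\,D(t+U/\sigma_k)$, not $-\sigma_k U\,D(t+U/\sigma_k)$. Second, the root need not be unique: $M$ can be flat at height $m_r$ when every coordinate is zero or saturated. This is harmless if you keep bracket endpoints $t_\ell<t_u$ with $M(t_\ell)>m_r\geq M(t_u)$, or, as the paper does, return a breakpoint on exact equality, so the final division never has zero slope.
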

\begin{proof}
Evaluate \cref{eq:det-mass-query} at $t_0=\lambda_r/(\theta\sigma_k)$.
If $M(t_0)\leq m_r$, then $\gamma=0$ is valid. Otherwise the desired root
lies between $t_0$ and the maximum stored key, where the mass is zero.
The only breakpoints of $M$ are the two sorted lists
\begin{equation*}
 \{k_i\}_{i=1}^N,\qquad \{k_i-U/\sigma_k\}_{i=1}^N.
\end{equation*}
Binary search the first list by rank, evaluating $M$ at each queried
breakpoint and maintaining a bracket around the target mass. If equality
holds, return that breakpoint. Otherwise stop when no breakpoint from the
first list lies strictly inside the bracket. Repeat for the second list.
Narrowing a bracket cannot restore an excluded breakpoint from the first
list. Each search uses $\mathcal{O}(\log(N+2))$ rank selections and mass queries,
each of logarithmic cost. The remaining interval has no breakpoint, so
$M$ is affine there and one division gives a valid root. Endpoint equality
handles a flat interval; a constant affine function cannot strictly
straddle the target. If a zero cap occurs, zero is returned directly.

Finally enumerate keys strictly above the root. They are exactly the
positive projected coordinates because $U>0$; each value follows from
\cref{eq:det-waterfill}. A tree successor traversal takes one logarithmic
initial search and linear output work. Exact equalities produce zero
coordinates and do not trigger an adjacency scan.
\end{proof}

\subsubsection{Discovery, completion, and the full ledger}

Before every projection, each vertex with positive primal or kinetic state
has had its row scanned, and all its neighbors have been exposed. Every
source coordinate is also represented. A vertex never exposed has zero
state, zero source, and zero neighbor response. Its raw density is therefore
$-\lambda_r/\theta<0$, so it cannot be selected. The finite reporter thus
agrees exactly with the full-space projection. Degree replies are requested
only for the seed or a discovered neighbor and are cached; their number is
bounded by the same incidences. Inactive boundary records incur scalar
work but no recursive row scan.

Let $K$ be the current stage's iteration count and set
$V_k^{\mathrm{row}}=\vol(\supp\vz_k)$, with $V_0^{\mathrm{row}}=0$.
Removing old exceptions and installing new ones costs
$\mathcal{O}(|\gF_r|+V_k^{\mathrm{row}}+V_{k+1}^{\mathrm{row}})$ point operations up to
logarithms. Each new primal response and kinetic response is assembled from
the emitted rows at the same charge. Thus \cref{lem:det-reporter} and
\cref{thm:det-kinetic-work} account for the entire iteration loop.

At completion, materialize the baseline and the union of kinetic supports.
Their cumulative row volume is already charged. One scan of this candidate
support computes the exact projected-gradient input on the candidate and
its boundary. Apply the downward correction before opening any newly
positive neighbor's row. The repaired support lies in $\mathcal S_r^*$,
so its next source assembly costs at most $1/r$ additional incidences.
At each stage handoff, discard obsolete records after constructing the new
baseline and source; charge disposal to the number of discarded records.
These facts
give the stage bound
\begin{equation}
 \mathcal{O}\!\left(
  \left[K(1+|\gF_r|)+\sum_{k=1}^KV_k^{\mathrm{row}}
              +\vol(\supp\overline\vx)+1\right]
  \log^2(N+2)\right)
 =\mathcal{O}\!\left(\frac Kr\log^2(N+2)\right).
 \label{eq:det-stage-ledger}
\end{equation}
Here $N=\mathcal{O}(1+K/r)$ suffices for the current stage's records and cached
adjacencies. In particular, lazy arithmetic does not hide a dense
materialization or a global terminal certificate.

\begin{proof}[Proof of \cref{thm:deterministic-rppr}]
The constant-output branches follow from the zero-solution criterion and
$F_\rho(\vzero)-F_\rho(\vx_\rho^*)\leq\alpha/2$.
For all other branches, \cref{prop:det-stage-rate} supplies the candidate
accuracy and \cref{lem:det-repair} closes the stage invariant. The final
choice of $\delta$ gives the requested objective gap and safe support.

For work, let $L_{\mathrm{par}}=\log(2/(\alpha\rho\min\{1,\epsobj\}))$ as in
\cref{eq:main-parameter-log}. Every stage uses
$K=\mathcal{O}(L_{\mathrm{par}}/\sqrt\alpha)$ iterations. Indeed its prescribed tolerance is
$\tau=\alpha\delta^2/8$, where halving from $\alpha r/2$ makes
$\delta^2$ within an absolute constant of
$\min\{\alpha^2r^2,\epsobj r\}$ at the final stage; intermediate
stages use the first value. Thus $\log(1/\tau)=\mathcal{O}(L_{\mathrm{par}})$.
The halving schedule has
\begin{equation}
 \sum_j\frac1{r_j}\leq\frac4\rho.
 \label{eq:det-geometric-work}
\end{equation}
Summing \cref{eq:det-stage-ledger} therefore gives
\[
 \mathcal{O}\!\left(\frac{L_{\mathrm{par}}}{\rho\sqrt\alpha}
       \log^2\!\left(2+\frac{L_{\mathrm{par}}}{\rho\sqrt\alpha}\right)\right).
\]
The same history bounds peak state by $\mathcal{O}(1+L_{\mathrm{par}}/(\rho\sqrt\alpha))$ words
and output by $1/\rho$ coordinates. All preprocessing is confined to
discovered rows and included in this charge.
\end{proof}

\paragraph{Arithmetic representation.}
The exact algorithm can use rational degree densities when its numerical
inputs are rational: the matrix acting on densities is
$\mD^{-1/2}\mQ\mD^{1/2}=((1+\alpha)/2)\mI-((1-\alpha)/2)\mD^{-1}\mA$.
An output stores $\widehat f_i:=\widehat x_i/\omega_i$ and $d_i$,
representing $\widehat x_i=\widehat f_i\sqrt{d_i}$ exactly. This observation
alone gives no bound on growing rational denominators.
\Cref{app:bounded-arithmetic} supplies directed rounding, controlled
neighbor-response error, scalar rebasing, and a separate encoding analysis.

\section{Randomized Threshold Batching}\label{sec:randomized-method}
\subsection{Algorithm and guarantee}\label{sec:threshold-batch-algorithm}
The algorithm keeps a discovered active set, solves the corresponding
principal system, and checks a local residual certificate. It then scans
the boundary and admits all vertices whose violations exceed a common
threshold. The certificate makes these admissions safe; the depth theorem
in \cref{sec:threshold-batch-depth} bounds how often the cycle repeats.
We first give the parameters and complete procedure, then prove the depth
bound and combine it with the numerical and graph-access costs.

Remove the constant-output cases first. For
$0<\epsobj<\alpha/2$, $0<\rho<1/d_v$, and $0<\alpha<1$, choose
$\vartheta$ by halving one until
\begin{equation}
 \frac1{16}\sqrt{\alpha\rho\epsobj}<\vartheta
       \leq\frac18\sqrt{\alpha\rho\epsobj},\qquad
 \nu:=\vartheta/4.
 \label{eq:algorithm-threshold}
\end{equation}
Equivalently, stop when $\vartheta^2\leq\alpha\rho\epsobj/64$;
no square root is computed. Choose a dyadic $\theta_{\mathrm{b}}$ as in
\cref{eq:det-acceleration-parameters}. Fix a failure budget
$\zeta\in(0,1)$ and define the integer caps
\begin{equation}
 \begin{aligned}
 T_{\mathrm{b}}&=1/\theta_{\mathrm{b}},&
 q_{\mathrm{acc}}&=\min\{j\in\sN:2^j\geq64/\epsobj\},&
 J&=T_{\mathrm{b}}q_{\mathrm{acc}},\\
 N_{\mathrm{try}}&=\min\{j\in\sN:4^j\geq(J+1)/\zeta\}.&&&
 \end{aligned}
 \label{eq:algorithm-parameters}
\end{equation}
All these choices use halving, integer multiplication, and comparisons.
Each iteration constructs a fresh supplied-face solver. The depth bound
in \cref{thm:threshold-batch-depth} controls the cost of these repeated
solves.

\begin{algorithm}[htbp]
\caption{Threshold-batch local RPPR solver}
\label{alg:threshold-batch-rppr}
\begin{algorithmic}[1]
\REQUIRE Local graph access, seed $v$, $\alpha,\rho,\epsobj>0$, $\zeta\in(0,1)$
\STATE Query $d_v$. If $\rho d_v\geq1$ or $\epsobj\geq\alpha/2$, return zero.
\STATE If $\alpha=1$, return the exact seed coordinate.
\STATE Set the parameters in \cref{eq:algorithm-threshold,eq:algorithm-parameters}.
\STATE $\gU\gets\{v\}$; expose its row and cache the discovered degrees.
\FOR{$k=0,1,\ldots,J$}
  \STATE Make at most $N_{\mathrm{try}}$ independent local SDD calls at relative energy
         tolerance $\nu$.  After each call, scan the active rows and accept
         the first $\vz$ satisfying
         $\norm{\mQ_{\gU\gU}\vz_\gU-(\vc_\rho)_\gU}_2\leq\nu\sqrt\alpha$.
  \STATE If no call is certified, \textbf{return} an explicit failure flag.
  \STATE Extend $\vz$ by zero outside $\gU$.
  \STATE Scan all rows in $\gU$ once and accumulate
         $\widehat r_i=(\vc_\rho)_i-(\mQ\vz)_i$ for every boundary candidate.
  \IF{$k=J$}
    \STATE \textbf{return} $[\vz]_+$ and $\gU$.
  \ENDIF
  \STATE $\gB\gets\{i\in\partial \gU:\widehat r_i>\vartheta/2\}$.
  \IF{$\gB=\emptyset$}
    \STATE \textbf{return} $[\vz]_+$ and $\gU$.
  \ENDIF
  \STATE Expose the adjacency lists of $\gB$ and set $\gU\gets \gU\cup \gB$.
\ENDFOR
\end{algorithmic}
\end{algorithm}
\FloatBarrier

\begin{theorem}[Graph-uniform local RPPR]
\label{thm:local-rppr}
On every point-source instance in the nontrivial regime specified above,
\cref{alg:threshold-batch-rppr} reports failure with probability at most
$\zeta$ and otherwise returns a nonnegative $\widehat\vx$ satisfying
\begin{equation}
 F_\rho(\widehat\vx)-F_\rho(\vx_\rho^*)\leq\epsobj.
 \label{eq:local-rppr-gap}
\end{equation}
Using independent capped trials of a randomized nearly-linear SDD solver on
each exposed face, its expected fully charged work is
\begin{equation}
 \softO\!\left(
   \frac{1}{\rho\sqrt\alpha}
   \log\frac1{\epsobj}\log\frac2\zeta
 \right).
 \label{eq:local-rppr-work}
\end{equation}
The bound includes construction of every principal system, all adjacency and
candidate incidences, every fresh solver state, repeated active-row and
boundary scans, certification, projection, materialization, and output.  It
uses no supplied support or ambient preprocessing.  If
$\epsobj\geq\alpha/2$, returning zero is sufficient; if
$\rho\geq1/d_v$, one seed-degree query identifies the exact zero solution.
\end{theorem}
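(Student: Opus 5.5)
The proof splits into three parts: correctness of every returned vector, the failure probability, and expected work. The batch-depth bound of \cref{thm:threshold-batch-depth} is treated as the key external input.

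\emph{Correctness.} A certified solve gives $\norm{\mQ_{\gU\gU}\vz_\gU-(\vc_\rho)_\gU}_2\leq\nu\sqrt\alpha$. Let $\vx^\gU$ be the exact face solution. Then $\norm{\vz-\vx^\gU}_2\leq\nu/\sqrt\alpha$, because $\mQ_{\gU\gU}\succeq\alpha\mI$. We first show by induction, in the spirit of \cref{thm:safe-batched-pivots}, that every active set stays reachable. If $i\in\partial\gU$ has $\widehat r_i>\vartheta/2$, then the exact slack satisfies $-w_i^\gU\geq\widehat r_i-|(\mQ(\vz-\vx^\gU))_i|$. The perturbation is bounded by $\norm{\vz-\vx^\gU}_2\leq\nu/\sqrt\alpha$. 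Here I would argue with a sharper, row-local bound coming from $|\mQ|\vomega=\vomega$ and the choice $\nu=\vartheta/4$, so that the exact slack is still negative. The factor $1/\sqrt\alpha$ has to be absorbed carefully; if this fails, I would instead certify admissions by evaluating the boundary residual against the certified error in the $\mQ$-norm. With admissions safe, \cref{cor:boundary-only-discovery} keeps every violator inside $\partial\gU$.

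At termination there are two cases. If $\gB=\emptyset$, every outside slack of $\vx^\gU$ is at least $-\vartheta/2-\nu$. I would bound the objective gap of $[\vz]_+$ with \cref{lem:kkt-gap-certificate}. The certificate's squared norm consists of an inside residual of order $\nu^2\alpha$, which is at most $\alpha\epsobj$ by the choice of $\vartheta$, plus boundary violations of order $\vartheta^2$ per boundary vertex. The boundary vertices number at most $\vol(\gU)\leq1/\rho$, so their total is $\lesssim\vartheta^2/\rho\leq\alpha\epsobj/64$. This gives a gap of at most $\epsobj$, and the nonnegative part can only lower $F_\rho$ relative to $\vz$, since $\vx^\gU>\vzero$ and the clipped error is small. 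If instead $k=J$, I would invoke \cref{thm:threshold-batch-depth}: after $T_{\mathrm{b}}$ batches, the face-solution gap to $\vx_\rho^*$ contracts by a constant factor. After $J=T_{\mathrm{b}}q_{\mathrm{acc}}$ batches it is therefore at most $2^{-q_{\mathrm{acc}}}\leq\epsobj/64$, starting from an initial gap of at most $1$.

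\emph{Failure probability.} Each SDD call succeeds with probability at least $3/4$, and success is checked deterministically. A round therefore fails with probability at most $4^{-N_{\mathrm{try}}}\leq\zeta/(J+1)$. A union bound over the $J+1$ rounds gives total failure probability at most $\zeta$.

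\emph{Work.} Reachable sets lie in $\mathcal S_\rho^*$, so $\vol(\gU)\leq1/\rho$. Each round then costs an expected $\softO(\vol(\gU)\log(1/\nu))$ for constructing the principal system, running the solver calls, and certifying, plus $\mathcal O(\vol(\gU)\log)$ for the boundary scan, by \cref{lem:kkt-gap-certificate}. The number of rounds is at most $J+1=\mathcal O(\alpha^{-1/2}\log(1/\epsobj))$. Multiplying yields \cref{eq:local-rppr-work}, where $N_{\mathrm{try}}$ contributes the $\log(2/\zeta)$ factor; each failed call is also charged at most $\softO(1/\rho)$, and expected calls per round are $\mathcal O(1)$. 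Exposure, output and cleanup are each $\mathcal O(1/\rho)$ per round.

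\emph{Main obstacle.} The delicate step is reconciling the approximate solve with safe admission. A residual error of $\nu\sqrt\alpha$ in $\ell_2$ must translate into a per-coordinate boundary error below $\vartheta/2$ without a hidden $1/\sqrt\alpha$ loss. This likely needs the threshold-depth theorem to be phrased for the perturbed sequence of sets, or a monotonicity argument showing that $\vx^\gU$ dominates the certified iterate in the relevant coordinates.
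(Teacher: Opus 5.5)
\textbf{Gap: admission safety.} Your decomposition matches the paper's: face-solve wrapper, batch validity, two termination cases, union bound, and per-face charge times $J+1$. But the step you flag yourself is a genuine gap, and it is the crux of the proof. The bound $\norm{\vz-\vx^\gU}_2\leq\nu/\sqrt\alpha$ is too lossy. The row-local route via $|\mQ|\vomega=\vomega$ is worse: it only bounds $|(\mQ\ve)_i|$ by $\sqrt{d_i}\,\norm{\mD^{-1/2}\ve}_\infty$. Your fallback of working with the $\mQ$-norm is the right instinct, and it closes the step in two lines. Write $\ve=\vz-\vx^\gU$ and $\vr_\gU=\mQ_{\gU\gU}\ve_\gU$. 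Then
\[
 \norm{\ve}_{\mQ}^2=\vr_\gU^\trans\mQ_{\gU\gU}^{-1}\vr_\gU\leq\frac{\norm{\vr_\gU}_2^2}{\alpha}\leq\nu^2,
 \qquad
 |(\mQ\ve)_i|\leq\norm{\mQ\ve}_2\leq\norm{\ve}_{\mQ}\leq\nu=\frac\vartheta4,
\]
where the second chain uses $\mQ^2\preceq\mQ$. This gives both directions you need.
\begin{itemize}
\item Every admitted vertex has exact residual above $\vartheta/4>0$, so it is safe by \cref{thm:safe-batched-pivots}.
\item Every unadmitted boundary vertex has exact residual at most $3\vartheta/4<\vartheta$.
\end{itemize}
Your proposal never checks the second direction. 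It is exactly the $\vartheta$-batched hypothesis of \cref{thm:threshold-batch-depth}, so without it you cannot invoke the depth bound at the cap. No perturbed depth theorem or monotonicity argument is needed.

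\textbf{Two further steps fail as written.}

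First, you misquote \cref{thm:threshold-batch-depth}. It is not a constant-factor contraction per $T_{\mathrm{b}}$ batches. It gives $8q_\alpha^{2J}+\vartheta^2/(\alpha\rho)$, and the threshold term is a floor that never decays. Using $\log(1/q_\alpha)\geq\sqrt{2\alpha}$ and $\theta_{\mathrm{b}}\leq\sqrt\alpha$, you get $q_\alpha^{2J}\leq2^{-q_{\mathrm{acc}}}\leq\epsobj/64$. The exact face gap is therefore at most $\epsobj/8+\epsobj/64$, not $\epsobj/64$.

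Second, in the empty-batch case, \cref{lem:kkt-gap-certificate} does not apply to $\vz$, which may have negative entries. At $[\vz]_+$ the inside residual is not of order $\nu\sqrt\alpha$, because clipping is nonexpansive only in the Euclidean norm. You only get $\norm{[\vz]_+-\vx^\gU}_2\leq\nu/\sqrt\alpha$. That gives a gap bound of order $\nu^2/\alpha^2\approx\rho\epsobj/\alpha$, which is useless when $\rho\gg\alpha$.

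Instead, certify the exact face solution $\vx^\gU$. Its inside certificate vanishes, so only the at most $1/\rho$ boundary entries, each of size at most $3\vartheta/4$, contribute. This gives a gap of at most $9\vartheta^2/(32\alpha\rho)$. Then add the exact identity
\[
 \phi_\rho([\vz]_+)-\phi_\rho(\vx^\gU)=\tfrac12\norm{[\vz]_+-\vx^\gU}_{\mQ}^2\leq\tfrac12\norm{\ve}_2^2\leq\frac{\nu^2}{2\alpha}.
\]
It holds because $[\vz]_+$ is supported on $\gU$, where $\nabla\phi_\rho(\vx^\gU)$ vanishes. The same identity is needed in the cap case, where your proposal does not account for the difference between $[\vz]_+$ and $\vx^\gU$ at all.

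Your failure-probability and work accounting are fine once these pieces are in place.
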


The returned vector is supported inside $\mathcal S_\rho^*$.
The deterministic repair in \cref{cor:randomized-safe-repair} additionally
makes it a subsolution below $\vx_\rho^*$, with the same asymptotic work bound.

The next subsection establishes the batch bound used in this theorem.
Its complete correctness and work proof follows in
\cref{sec:threshold-batch-analysis}.

\subsection{Threshold-batch depth}\label{sec:threshold-batch-depth}
Write the outside face residual as
\begin{equation*}
 \vr^\gU:=\vc_\rho-\mQ\vx^\gU=-\vw^\gU.
\end{equation*}
Fix $\vartheta\geq0$.  A nested sequence
$\gU_k=\gB_0\mathbin{\dot\cup}\cdots\mathbin{\dot\cup}\gB_k$ is
\emph{$\vartheta$-batched} if $\gB_0=\{v\}$, every later $\gB_k$ is nonempty
and satisfies $r_i^{\gU_{k-1}}>0$ for every $i\in\gB_k$, and, immediately
before admitting $\gB_k$,
\[
 r_i^{\gU_{k-1}}\leq\vartheta
 \qquad\text{for every }i\in\gS_\rho^*\setminus\gU_k.
\]
The condition includes coordinates not yet assigned to a later block.
Any finite such sequence can be extended to partition $\gS_\rho^*$ by
repeatedly admitting all exact positive residuals. After each added batch,
every remaining coordinate had nonpositive residual on the preceding face,
so the threshold condition is preserved. This is an analytical condition;
the algorithm need not know $\gS_\rho^*$. The residual certificate will
establish it for all unreported coordinates in
\cref{sec:threshold-batch-analysis}.

\begin{lemma}[Chebyshev inverse approximation]
\label{lem:chebyshev-inverse}
Let $\alpha\mI\preceq\mM\preceq b_0\mI$, with $b_0\geq\alpha>0$.
For each $j\geq1$, there is a polynomial $p_{j-1}$ of degree at most
$j-1$ such that
\begin{equation*}
 \norm{\mM^{-1}-p_{j-1}(\mM)}_2
 \leq\frac2\alpha
       \left(\frac{\sqrt{b_0/\alpha}-1}
                  {\sqrt{b_0/\alpha}+1}\right)^j.
\end{equation*}
The corresponding residual polynomial has absolute value at most twice
the displayed geometric factor on $[\alpha,b_0]$.
\end{lemma}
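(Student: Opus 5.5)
The plan is the classical Chebyshev construction. Since $\mM$ is symmetric, the spectral theorem reduces the operator-norm bound to a scalar bound,
\[
 \norm{\mM^{-1}-p_{j-1}(\mM)}_2=\max_{\lambda\in\mathrm{spec}(\mM)}\bigl|\lambda^{-1}-p_{j-1}(\lambda)\bigr|
 \le\sup_{\lambda\in[\alpha,b_0]}\frac{|1-\lambda p_{j-1}(\lambda)|}{\lambda}.
\]
I therefore build a residual polynomial $R_j(\lambda):=1-\lambda p_{j-1}(\lambda)$ of degree $j$ with $R_j(0)=1$. Every such polynomial has this form, because $1-R_j$ vanishes at zero and can be divided by $\lambda$ to recover $p_{j-1}$. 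Since $\lambda\ge\alpha$ on the interval, the inverse error is at most $\alpha^{-1}\sup_{[\alpha,b_0]}|R_j|$. This also accounts for the two claims in the statement: the residual bound carries the factor $2$, and the inverse bound carries $2/\alpha$.

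For the construction, take the shifted and normalized Chebyshev polynomial
\[
 R_j(\lambda)=\frac{T_j\!\left(\frac{b_0+\alpha-2\lambda}{b_0-\alpha}\right)}{T_j\!\left(\frac{b_0+\alpha}{b_0-\alpha}\right)},
\]
which requires $b_0>\alpha$. On $[\alpha,b_0]$ the argument of the numerator lies in $[-1,1]$, so the numerator has absolute value at most $1$. For the denominator, write $\kappa=b_0/\alpha$ and $s=(\kappa+1)/(\kappa-1)>1$. The identity $T_j(s)=\tfrac12\bigl[(s+\sqrt{s^2-1})^j+(s-\sqrt{s^2-1})^j\bigr]$ gives $T_j(s)\ge\tfrac12(s+\sqrt{s^2-1})^j$. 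A direct computation gives $s+\sqrt{s^2-1}=(\sqrt\kappa+1)/(\sqrt\kappa-1)$. Combining these, $\sup_{[\alpha,b_0]}|R_j|\le2\bigl((\sqrt\kappa-1)/(\sqrt\kappa+1)\bigr)^j$, and the reduction in the first paragraph then yields the displayed inverse bound.

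The degenerate case $b_0=\alpha$ is handled separately. There $\mM=\alpha\mI$, so the constant polynomial $p_{j-1}\equiv1/\alpha$ is exact. The geometric factor is then $0^j=0$, which is consistent with the bound for every $j\ge1$.

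The only step needing genuine computation is the identity $s+\sqrt{s^2-1}=(\sqrt\kappa+1)/(\sqrt\kappa-1)$. It follows from $s^2-1=4\kappa/(\kappa-1)^2$, which gives $s+\sqrt{s^2-1}=(\kappa+1+2\sqrt\kappa)/(\kappa-1)=(\sqrt\kappa+1)^2/\bigl((\sqrt\kappa-1)(\sqrt\kappa+1)\bigr)$. Everything else is standard bookkeeping.
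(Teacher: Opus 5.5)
Your proposal is correct and matches the paper's proof essentially step for step. Both arguments use the same shifted, normalized Chebyshev residual polynomial with value $1$ at $t=0$, bound $T_j$ from below at $(b_0+\alpha)/(b_0-\alpha)$ by its explicit formula, divide by $t\geq\alpha$, diagonalize, and treat $b_0=\alpha$ separately with $p_{j-1}\equiv 1/\alpha$. The only addition is that you write out the identity $s+\sqrt{s^2-1}=(\sqrt\kappa+1)/(\sqrt\kappa-1)$, which the paper leaves as ``substitution.''
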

\begin{proof}
If $b_0=\alpha$, take $p_{j-1}(t)=1/\alpha$; both errors vanish.
Otherwise $b_0>\alpha$.
Let $T_j$ be the first-kind Chebyshev polynomial and set
\[
 r_j(t)=
 \frac{T_j((b_0+\alpha-2t)/(b_0-\alpha))}
      {T_j((b_0+\alpha)/(b_0-\alpha))}.
\]
Its numerator has absolute value at most one on the interval.
For $u>1$,
$T_j(u)=((u+\sqrt{u^2-1})^j+(u-\sqrt{u^2-1})^j)/2$.
Substitution bounds $|r_j(t)|$ by twice the stated geometric factor.
Also $r_j(0)=1$, so $p_{j-1}(t)=(1-r_j(t))/t$ is a polynomial of
degree at most $j-1$ and
$|1/t-p_{j-1}(t)|=|r_j(t)|/t\leq|r_j(t)|/\alpha$.
Diagonalizing the symmetric matrix proves the operator-norm bound.
\end{proof}

\begin{theorem}[Threshold-batch energy-depth bound]
\label{thm:threshold-batch-depth}
Let $\gU_0\subset \gU_1\subset\cdots$ be a $\vartheta$-batched reachable
sequence, extended to the full optimal support.  Put
\begin{equation*}
 q_\alpha:=
 \frac{\sqrt{2/\alpha}-1}{\sqrt{2/\alpha}+1}.
\end{equation*}
Then every included face $\gU_J$ satisfies
\begin{equation}
 \phi_\rho(\vx^{\gU_J})-\phi_\rho(\vx_\rho^*)
 \leq8q_\alpha^{2J}+\frac{\vartheta^2}{\alpha\rho}.
 \label{eq:batch-depth-main}
\end{equation}
For $0<\epsobj<1$, the decaying term is at most $\epsobj/2$ after
$\mathcal{O}(\alpha^{-1/2}\log(2/\epsobj))$ batches. If the sequence ends sooner,
its final face solution is exact. The estimate is independent of
$|\gS_\rho^*|$, graph diameter, and any complementarity margin.
\end{theorem}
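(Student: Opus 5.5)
The plan is to reduce the face gap to a best-approximation problem on $\gS_\rho^*$ and then exhibit a good approximant supported on $\gU_J$. Write $\gS=\gS_\rho^*$. By \cref{thm:safe-batched-pivots}, every face $\gU_J$ is contained in $\gS$ and $\vx^{\gU_J}\geq\vzero$. The optimum satisfies $\mQ_{\gS\gS}(\vx_\rho^*)_\gS=(\vc_\rho)_\gS$, and the face solution satisfies the Galerkin equations $\mQ_{\gU_J\gU_J}\vx^{\gU_J}_{\gU_J}=(\vc_\rho)_{\gU_J}$. Hence $\vx^{\gU_J}$ is the $\mQ_{\gS\gS}$-orthogonal projection of $\vx_\rho^*$ onto vectors supported on $\gU_J$. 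The linear KKT term vanishes because both vectors live on $\gS$, so
\[
 \phi_\rho(\vx^{\gU_J})-\phi_\rho(\vx_\rho^*)
 =\tfrac12\norm{\vx^{\gU_J}-\vx_\rho^*}_{\mQ}^2
 =\min_{\supp\vy\subseteq\gU_J}\tfrac12\norm{\vy-\vx_\rho^*}_{\mQ}^2 .
\]
It therefore suffices to construct one $\vy$ supported on $\gU_J$ with $\tfrac12\norm{\vy-\vx_\rho^*}_\mQ^2\leq 8q_\alpha^{2J}+\vartheta^2/(\alpha\rho)$. If the sequence ends before $J$ batches, the final face is the whole support and the gap is zero, by the last sentence of \cref{thm:safe-batched-pivots}.

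To build $\vy$, I would use the block ordering $\gB_0,\gB_1,\ldots$ of $\gS$ and a block-Cholesky (Schur complement) viewpoint. Split $\mQ_{\gS\gS}=\widetilde\mQ+\mE$. Here $\widetilde\mQ$ keeps only the couplings between each block and its immediate successor together with the diagonal blocks, so it is block tridiagonal in the batch order. The remainder $\mE$ collects the couplings from $\gU_k$ to $\gS\setminus\gU_{k+1}$. For a block tridiagonal operator, a polynomial of degree $j-1$ applied to a vector supported on $\gU_0=\{v\}$ stays supported on $\gU_{j-1}\subseteq\gU_J$. This is the locality that lets Chebyshev acceleration achieve depth $\sqrt{1/\alpha}$ instead of $1/\alpha$.

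Dropping Stieltjes off-diagonals preserves diagonal dominance, so $\widetilde\mQ$ should still satisfy $\alpha\mI\preceq\widetilde\mQ\preceq2\mI$. This is where the constant $b_0=2$ in $q_\alpha$ comes from. I would take $\vy=p_{J-1}(\widetilde\mQ)$ applied to the seed part of the load, using \cref{lem:chebyshev-inverse} with $b_0=2$. The $\mQ$-energy of the polynomial error is then at most a constant times $q_\alpha^{2J}$, because the seed load has unit-scale norm and $\mQ\preceq\mI$. I would tune constants to reach $8q_\alpha^{2J}$.

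The remaining error comes from the diffuse part of the load and from the dropped couplings $\mE$. Both should be controlled through the threshold condition: immediately before $\gB_k$ is admitted, every coordinate of $\gS\setminus\gU_k$ has face residual at most $\vartheta$. I would telescope $\vx_\rho^*-\vy$ over the batches. The goal is to show that the total residual not captured by the block-tridiagonal propagation is entrywise at most $\vartheta$ on a set of volume at most $\vol(\gS)\leq1/\rho$, as given by \cref{lem:rppr-order-mass}. Weighting by $\omega_i\geq1$, its squared Euclidean norm is then at most $\vartheta^2/\rho$. The bound $\norm{\ve}_\mQ^2\leq\norm{\mQ\ve}_2^2/\alpha$ converts this into the $\vartheta^2/(\alpha\rho)$ term.

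The main obstacle is this last step. Residuals on $\gS\setminus\gU_k$ are small only against the face solution $\vx^{\gU_{k-1}}$, not against the Chebyshev iterate. I also have to avoid the small per-batch errors accumulating over $J$ batches. My first attempt would use the entrywise monotonicity $\vx^{\gU_k}\uparrow$ from \cref{thm:safe-batched-pivots} and inverse positivity. The aim is to bound the total dropped coupling by the single final residual vector, which is dominated entrywise by $\vartheta\vomega$ on the unadmitted part, rather than by a sum over batches. Once \cref{eq:batch-depth-main} holds, the stated batch count follows immediately from $\log(1/q_\alpha)\geq\sqrt{\alpha/2}$.
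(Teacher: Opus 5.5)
Your Galerkin reduction is correct, and it matches the paper's starting point. If $\mC$ is the scalar Cholesky factor of $\mQ_{\gS\gS}$ in the batch order, then $\norm{\vy-\vx_\rho^*}_{\mQ}=\norm{\mC^\trans(\vy-\vx_\rho^*)_\gS}_2$ with $\mC^\trans$ block upper triangular. So your minimum is half the squared block tail of $\vz:=\mC^\trans(\vx_\rho^*)_\gS=\mC^{-1}(\vc_\rho)_\gS$, which is the paper's tail identity. The gap lies in your approximant and in how you use the threshold hypothesis.

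Your $\vy=p_{J-1}(\widetilde\mQ)(\vc_\rho)_v\eunit{v}$ sees only the seed load, but $\vx_\rho^*$ also carries the negative load $-\alpha\rho\vomega$ on $\gS\setminus\{v\}$. That error cannot be ``controlled through the threshold condition'': at $\vartheta=0$ the condition gives nothing. Yet with no dropped couplings (e.g.\ singleton BFS batches on a path seeded at an end, where $\widetilde\mQ=\mQ_{\gS\gS}$), your $\vy$ converges to $\mQ_{\gS\gS}^{-1}(\vc_\rho)_v\eunit{v}$. The error you must bound then contains the fixed vector $\mQ_{\gS\gS}^{-1}(\vc_\rho)_{\gS\setminus\{v\}}$. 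Its energy is at least $\alpha^2\rho^2(\vol(\gS)-d_v)$, which neither decays with $J$ nor scales with $\vartheta$.

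As you suspected, the hypothesis also does not transfer to $\vx_\rho^*$ or to a polynomial iterate. It bounds $r_i^{\gU}=(\vc_\rho)_i-\mQ_{i\gU}\vx_\gU^\gU$ from above. Since $-\mQ_{i\gU}\geq\vzero$ and $\vx_\gU^\gU\leq(\vx_\rho^*)_\gU$, inserting the larger vector only increases this quantity, so monotonicity runs the wrong way. Nor is there one ``final residual'' holding the information: on the last face the residual vanishes on $\gS$, and the hypothesis lives in the whole family of intermediate Schur residuals. Your error $\vx_\rho^*-\vy$ is signed, and your plan would need two-sided control of it, which nothing provides.

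The missing idea is to truncate the Cholesky factor rather than $\mQ$, and to work with the nonnegative vector $\vz$. Blockwise $\vz_k=\mR_k^{-1}\vg_k\geq\vzero$, because $\vg_k$ is the positive residual of $\gB_k$ at admission. Block elimination identifies $(\widehat\mC\vz)_k=\vg_k+\mL_{k,k-1}\vg_{k-1}$ with the residual of $\gB_k$ on the face $\gU_{k-2}$ (for $k=1$, the raw load, which is nonpositive). Here $\widehat\mC$ keeps only the diagonal blocks and first block subdiagonal of $\mC$. The threshold hypothesis is therefore exactly the linear system $(\widehat\mC\vz)_0=\vg_0$, $(\widehat\mC\vz)_k\leq\vartheta\one$ for $k\geq1$. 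Since $\widehat\mC$ is a triangular M-matrix, $\widehat\mC^{-1}\geq\vzero$, hence $\vzero\leq\vz\leq\widehat\mC^{-1}[\vg_0;\vartheta\one]$. This one-sided domination absorbs the negative load and all long-range couplings at once.

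The threshold part then costs at most $\vartheta^2|\gS|/\alpha$, via $\norm{\widehat\mC^{-1}}_2\leq\norm{\mC^{-1}}_2\leq\alpha^{-1/2}$, which follows from $\mC\leq\widehat\mC$ entrywise. The seed part is handled by Chebyshev applied to the block-tridiagonal $\mH_{\mathrm{ch}}=\widehat\mC\widehat\mC^\trans$, using $\widehat\mC^{-1}=\widehat\mC^\trans\mH_{\mathrm{ch}}^{-1}$. This is also where $b_0=2$ comes from: $\norm{\widehat\mC}_2\leq\sqrt2$ because each input block occurs in at most two retained rows. Your truncated $\widetilde\mQ$ in fact satisfies $\alpha\mI\preceq\widetilde\mQ\preceq\mI$, but that spectral fact is not what is missing.
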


The proof has three steps. Block Cholesky elimination expresses the
objective gap as a tail energy. A block-tridiagonal comparison system
preserves spectral bounds and encodes delayed admissions. Chebyshev
approximation then separates the decaying seed contribution from the
threshold error. These factors are analytical objects and are not
constructed by the local algorithm.

\begin{proof}
Let $\gS=\gS_\rho^*$ and order it by the blocks
$\gB_0,\gB_1,\ldots,\gB_K$.  Block elimination gives
\begin{equation*}
 \mQ_{\gS\gS}=\mL\mDelta\mL^\trans,
 \qquad
 \mDelta=\operatorname{diag}(\mDelta_0,\ldots,\mDelta_K),
 \qquad
 \vg=\mL^{-1}(\vc_\rho)_\gS.
\end{equation*}
Here $\mL$ is unit block lower triangular, every pivot $\mDelta_k$ is
Stieltjes, and $\vg_k$ is the positive Schur residual on $\gB_k$ when it is
admitted. Let $\mR_k$ be its lower Cholesky factor, so
$\mDelta_k=\mR_k\mR_k^\trans$, and set
\begin{equation*}
 \vz_k:=\mR_k^{-1}\vg_k.
\end{equation*}
Each $\mR_k$ is a triangular nonsingular M-matrix, so $\vz_k\geq\vzero$.
Completion of squares in the block elimination gives the exact identity
\begin{equation}
 2\bigl(\phi_\rho(\vx^{\gU_J})-\phi_\rho(\vx_\rho^*)\bigr)
 =\sum_{k>J}\norm{\vz_k}_2^2.
 \label{eq:batch-exact-gap-tail}
\end{equation}

Put
$\mC=\mL\operatorname{diag}(\mR_0,\ldots,\mR_K)$, so that
$\mQ_{\gS\gS}=\mC\mC^\trans$.  This is the scalar lower Cholesky factor in the
block-respecting order.  Its recursion shows that every off-diagonal entry is
nonpositive.  Let $\widehat{\mC}$ retain only the diagonal blocks and first
block subdiagonal.  Both factors are triangular nonsingular M-matrices and
$\mC\leq\widehat{\mC}$ entrywise.  Hence
\begin{equation}
 \vzero\leq\widehat{\mC}^{-1}\leq\mC^{-1},
 \qquad
 \norm{\widehat{\mC}^{-1}}_2
 \leq\norm{\mC^{-1}}_2\leq\frac1{\sqrt\alpha}.
 \label{eq:batch-bidiagonal-inverse}
\end{equation}
The inverse ordering follows directly from
\[
 \mC^{-1}-\widehat{\mC}^{-1}
 =\mC^{-1}(\widehat{\mC}-\mC)\widehat{\mC}^{-1}\geq\vzero.
\]
The spectral-norm comparison is valid because both inverses are nonnegative:
$|\widehat{\mC}^{-1}\vz|\leq\mC^{-1}|\vz|$ for every $\vz$.

There is also a graph-independent upper bound.  The $k$th block row of
$\mC$ satisfies
\begin{equation*}
 \sum_{i\leq k}\mC_{ki}\mC_{ki}^\trans
 =\mQ_{\gB_k\gB_k}\preceq\mI.
\end{equation*}
Deleting all but two input blocks cannot increase its row-map norm, and each
input block occurs in at most two retained rows.  Therefore
\begin{equation*}
 \norm{\widehat{\mC}}_2\leq\sqrt2.
\end{equation*}
Consequently
$\mH_{\mathrm{ch}}:=\widehat{\mC}\widehat{\mC}^\trans$ is block tridiagonal and
\begin{equation*}
 \alpha\mI\preceq\mH_{\mathrm{ch}}\preceq2\mI.
\end{equation*}

We next encode causality.  For $k\geq2$, immediately before $\gB_{k-1}$ is
admitted, the still-future block $\gB_k$ has residual at most
$\vartheta\one$; this is its residual after elimination through $\gB_{k-2}$.
For $k=1$, the same bound follows from
$(\vc_\rho)_i=-\alpha\rho\sqrt{d_i}<0$ away from the seed.  Eliminating
$\gB_{k-1}$ subtracts $\mL_{k,k-1}\vg_{k-1}$, so
\begin{equation*}
 \vg_k+\mL_{k,k-1}\vg_{k-1}\leq\vartheta\one
 \qquad(k\geq1).
\end{equation*}
Using
$\mC_{k,k-1}=\mL_{k,k-1}\mR_{k-1}$, this becomes
\begin{equation*}
 (\widehat{\mC}\vz)_0=\vg_0,
 \qquad
 (\widehat{\mC}\vz)_k\leq\vartheta\one\quad(k\geq1).
\end{equation*}
Since $\widehat{\mC}^{-1}\geq\vzero$,
\begin{equation}
 \vzero\leq\vz\leq
 \widehat{\mC}^{-1}
 \begin{bmatrix}\vg_0\\ \vartheta\one_{\gS\setminus \gB_0}\end{bmatrix}.
 \label{eq:batch-chain-domination}
\end{equation}

The distributed threshold source is bounded by
\begin{equation}
 \norm{\widehat{\mC}^{-1}[\vzero;\vartheta\one]}_2^2
 \leq\frac{\vartheta^2|\gS|}{\alpha}
 \leq\frac{\vartheta^2}{\alpha\rho},
 \label{eq:batch-threshold-source}
\end{equation}
where the last step uses
$|\gS|\leq\vol(\gS)\leq1/\rho$.

For the seed source and $J\geq1$, use \cref{lem:chebyshev-inverse} to take a
degree-$(J-1)$ polynomial
$p_{J-1}$ for $1/t$ on $[\alpha,2]$, with
\begin{equation*}
 \norm{\mH_{\mathrm{ch}}^{-1}-p_{J-1}(\mH_{\mathrm{ch}})}_2
 \leq\frac2\alpha q_\alpha^J.
\end{equation*}
Use $\widehat{\mC}^{-1}=\widehat{\mC}^{\trans}\mH_{\mathrm{ch}}^{-1}$.
Because $\mH_{\mathrm{ch}}$ is block tridiagonal,
$p_{J-1}(\mH_{\mathrm{ch}})[\vg_0;\vzero]$ reaches only blocks through
$J-1$. Multiplication by the block upper-bidiagonal
$\widehat{\mC}^\trans$ does not increase the largest block index.  If
$\mPi_{>J}$ projects onto the block tail, then
\begin{align}
 \norm{\mPi_{>J}\widehat{\mC}^{-1}[\vg_0;\vzero]}_2
 &\leq\norm{\widehat{\mC}}_2
       \norm{\mH_{\mathrm{ch}}^{-1}-p_{J-1}(\mH_{\mathrm{ch}})}_2
       \norm{\vg_0}_2\notag\\
 &\leq2\sqrt2q_\alpha^J.
 \label{eq:batch-seed-tail}
\end{align}
Here
$\vg_0=(\vc_\rho)_v=\alpha(1-\rho d_v)/\sqrt{d_v}\leq\alpha$.
For $J=0$, the same bound follows from
\cref{eq:batch-bidiagonal-inverse} and $\alpha\leq1$.
Combining
\cref{eq:batch-exact-gap-tail,eq:batch-chain-domination,eq:batch-threshold-source,eq:batch-seed-tail}
with $\norm{\va+\vb}_2^2/2\leq\norm{\va}_2^2+\norm{\vb}_2^2$ proves
\cref{eq:batch-depth-main}.
\end{proof}

The two terms have different meanings.  The seed term decays spectrally down
the block chain.  The threshold term charges every delayed coordinate once,
at the block where it is eventually eliminated.  A vertex can therefore
remain below threshold for many faces without generating a repeated,
uncharged error.

\subsection{Correctness and total local work}\label{sec:threshold-batch-analysis}
\begin{proof}[Proof of \cref{thm:local-rppr}]
We first verify the face-solve wrapper.  On a reachable face,
$(\vc_\rho)_\gU\leq\vb_\gU$ and $\mQ_{\gU\gU}^{-1}\geq\vzero$, so
\begin{equation*}
 \vzero\leq\vx_\gU^\gU\leq\mQ_{\gU\gU}^{-1}\vb_\gU,
 \qquad
 \norm{\vx^\gU}_{\mQ}^2
 =(\vc_\rho)_\gU^\trans\vx_\gU^\gU
 \leq\vb_\gU^\trans\mQ_{\gU\gU}^{-1}\vb_\gU\leq\alpha.
\end{equation*}
The last inequality uses
$\norm{\vb}_2^2=\alpha^2/d_v\leq\alpha^2$ and
$\mQ_{\gU\gU}\succeq\alpha\mI$.

The solver is applied in degree coordinates, where the supplied matrix is
\begin{equation*}
 \mH_{\gU\gU}:=\mD_\gU^{1/2}\mQ_{\gU\gU}\mD_\gU^{1/2}
 =\frac{1+\alpha}{2}\mD_\gU-\frac{1-\alpha}{2}\mA_{\gU\gU}.
\end{equation*}
Its diagonal dominates its absolute off-diagonal row sum. Solving
$\mH_{\gU\gU}\vy_\gU=\mD_\gU^{1/2}(\vc_\rho)_\gU$ and returning
$\vz_\gU=\mD_\gU^{1/2}\vy_\gU$ preserves the energy norm.
Vectors on $\gU$ are extended by zero outside $\gU$.

Use the supplied-matrix solver of
\citet[Lemma~4.5 and Theorem~4.6]{koutis2011nearly}, with a fresh
chain having success probability at least $3/4$ on each call. All of its
construction and iteration work is charged on the supplied local matrix.  On a good call at relative energy tolerance $\nu$, the error
$\ve=\vz-\vx^\gU$ satisfies
$\norm{\ve}_{\mQ}\leq\nu\norm{\vx^\gU}_{\mQ}\leq\nu\sqrt\alpha$ and hence
passes the residual test.  Conversely, every accepted call, whether or not
its random chain was good, obeys the deterministic bound
\begin{equation*}
 \norm{\vz-\vx^\gU}_{\mQ}^2
 =\vr_\gU^\trans\mQ_{\gU\gU}^{-1}\vr_\gU
 \leq\frac1\alpha\norm{\vr_\gU}_2^2
 \leq\nu^2,
 \qquad
 \vr_\gU:=\mQ_{\gU\gU}\vz_\gU-(\vc_\rho)_\gU.
\end{equation*}
With constant call-failure probability at most $1/4$, $N_{\mathrm{try}}$ fresh trials
leave a phase uncertified with probability at most
$4^{-N_{\mathrm{try}}}\leq\zeta/(J+1)$.  The face depends on earlier accepted calls. Conditional on that entire
history, each fresh call has the stated success guarantee for the now fixed
supplied matrix. The conditional failure bound and a union bound over at
most $J+1$ faces prove the reported-failure claim.

The boundary residual is uniformly accurate.  Since
$\mQ^2\preceq\mQ$,
\begin{equation*}
 |(\mQ\ve)_i|\leq\norm{\mQ\ve}_2
 \leq\norm{\ve}_{\mQ}\leq\nu=\vartheta/4.
\end{equation*}
Every admitted coordinate therefore has exact positive residual greater than
$\vartheta/4$ and is support-safe by
\cref{thm:safe-batched-pivots}.  Every unreported boundary coordinate has
exact residual at most $3\vartheta/4<\vartheta$.  Thus the nonempty batches
form a $\vartheta$-batched sequence and every face satisfies
\begin{equation}
 \gU\subseteq\mathcal S_\rho^*,
 \qquad
 \vol(\gU)\leq\vol(\mathcal S_\rho^*)\leq1/\rho.
 \label{eq:algorithm-face-volume}
\end{equation}

If the loop reaches its cap, \cref{thm:threshold-batch-depth} and
\cref{eq:algorithm-threshold,eq:algorithm-parameters} give
\begin{equation*}
 \phi_\rho(\vx^\gU)-\phi_\rho(\vx_\rho^*)
 \leq8q_\alpha^{2J}+\frac{\vartheta^2}{\alpha\rho}
 \leq\frac{\epsobj}{8}+\frac{\epsobj}{64}.
\end{equation*}
If $\gB=\emptyset$, every exact positive outside residual is at most
$3\vartheta/4$.  Only boundary coordinates can be positive and
$|\partial \gU|\leq\vol(\gU)\leq1/\rho$, so
\cref{lem:kkt-gap-certificate} gives
\begin{equation*}
 \phi_\rho(\vx^\gU)-\phi_\rho(\vx_\rho^*)
 \leq\frac{9\vartheta^2}{32\alpha\rho}
 \leq\frac{9\epsobj}{2048}.
\end{equation*}

Finally, $\vx^\gU\geq\vzero$, so Euclidean orthant projection is nonexpansive
relative to $\vx^\gU$.  Restricted stationarity and
$\alpha\mI\preceq\mQ\preceq\mI$ imply
\begin{equation*}
 \phi_\rho([\vz]_+)-\phi_\rho(\vx^\gU)
 =\frac12\norm{[\vz]_+-\vx^\gU}_{\mQ}^2
 \leq\frac{\nu^2}{2\alpha}
 \leq\frac{\rho\epsobj}{2048}
 \leq\frac{\epsobj}{2048}.
\end{equation*}
The last inequality uses $\rho<1/d_v\leq1$. Combining either exact-face
gap bound with the projection-error bound proves \cref{eq:local-rppr-gap}.

It remains to charge work. The supplied matrix $\mH_{\gU\gU}$ has
$\mathcal{O}(\vol(\gU))$ stored nonzeros.  The requested relative
accuracy is
\begin{equation*}
 \eta_{\mathrm{sdd}}=\nu
 \in\left(\frac1{64}\sqrt{\alpha\rho\epsobj},
          \frac1{32}\sqrt{\alpha\rho\epsobj}\right].
\end{equation*}
The supplied-matrix theorem of \citet{koutis2011nearly} is nearly linear in
the local nonzeros and logarithmic in $1/\eta_{\mathrm{sdd}}$.  Including capped
retries and certification, one face costs
\[
 \softO\!\left(
   \vol(\gU)\log\frac1{\eta_{\mathrm{sdd}}}\,N_{\mathrm{try}}
 \right).
\]
Degree replies for newly exposed endpoints are cached. Candidate records
and sparse indices use deterministic balanced trees, contributing local-size
logarithms. Every other operation in the phase is therefore
$\widetilde{\mathcal{O}}(\vol(\gU))$, including candidate maintenance and all repeated
reads of cached incidences.  Batch exposure is
charged to the next face and remains $\mathcal{O}(1/\rho)$ by
\cref{eq:algorithm-face-volume}.  Finally,
\begin{equation*}
 \log(1/q_\alpha)
 =\log\frac{\sqrt{2/\alpha}+1}{\sqrt{2/\alpha}-1}
 \geq\sqrt{2\alpha}.
\end{equation*}
Since $\theta_{\mathrm{b}}\leq\sqrt\alpha$, the chosen
$J=T_{\mathrm{b}}q_{\mathrm{acc}}$ satisfies
$q_\alpha^{2J}\leq\exp(-2\sqrt2q_{\mathrm{acc}})
 \leq2^{-q_{\mathrm{acc}}}\leq\epsobj/64$,
justifying the capped-gap estimate above.
Also $J=\mathcal{O}(\alpha^{-1/2}\log(1/\epsobj))$.
Multiplying the per-face charge by this count proves
\cref{eq:local-rppr-work}.  The bound is unconditional:
only certified batches are exposed, even on an execution that later reports
failure. More precisely, every nonempty batch adds at least one new vertex.
Thus the number of face solves is at most
$\min\{k_*,J+1\}$, and each face has volume at most $V_*$.
The same ledger gives
\begin{equation}
 \E\Work\leq
 \softO\!\left(V_*\min\{k_*,\alpha^{-1/2}\}\right),
 \label{eq:randomized-refined-ledger}
\end{equation}
with the accuracy and failure logarithms already specified. Graph and
boundary records have $\mathcal{O}(V_*)$ words; only one supplied-face numerical
state is live at a time. All its allocation and disposal are charged.

For the constant-output cases, $F_\rho(\vzero)-F_\rho(\vx_\rho^*)\leq
\alpha/2$ follows from $\norm{\vb}_2^2\leq\alpha^2$ and strong convexity of
the smooth quadratic.  The seed-load sign proves the zero-RPPR test. The case $\alpha=1$ is diagonal and has the explicit solution stated in \cref{sec:problem-formulation}.
\end{proof}

\begin{remark}[Las Vegas wrapper and exact scope]
\label{rem:las-vegas-scope}
All accepted face solves and all admissions are verified deterministically;
randomness can only cause an explicit failure flag.  Running with a constant
$\zeta$ and restarting after a flag therefore gives a Las Vegas exact-real
algorithm with constant expected restart overhead.  This statement does not
provide deterministic bit complexity or guarantee the stability of a
particular floating-point implementation.
\end{remark}

\begin{corollary}[A safe subsolution output at the same work bound]
\label{cor:randomized-safe-repair}
The randomized algorithm can additionally return
\begin{equation}
 \vzero\leq\widehat\vx\leq\vx_\rho^*,\qquad
 \vzero\leq\vb-\mQ\widehat\vx\leq2\alpha\rho\vomega,
 \label{eq:randomized-safe-output}
\end{equation}
while satisfying the requested objective gap and retaining the expected
work in \cref{eq:randomized-refined-ledger} up to parameter logarithms.
\end{corollary}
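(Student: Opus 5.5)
The plan is to post-process the randomized output with the deterministic repair map of \cref{lem:det-repair}, taken at the single level $r=\rho$ with the zero baseline $\overline\vx=\vzero$. That baseline is trivially a subsolution below $\vx_\rho^*$.

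First we run \cref{alg:threshold-batch-rppr} with a tightened objective tolerance. We set $\delta=\min\{\lambda_\rho/2,\sqrt{\rho\epsobj/2}\}$ and request objective accuracy $\tau:=\alpha\delta^2/8$ instead of $\epsobj$. Then $\log(1/\tau)=\mathcal{O}(L_{\mathrm{par}})$, so by \cref{thm:local-rppr} and \cref{eq:randomized-refined-ledger} the expected work changes only by parameter logarithms. The returned $\widetilde\vx=[\vz]_+$ is nonnegative, supported in $\gU\subseteq\mathcal S_\rho^*$, and satisfies $F_\rho(\widetilde\vx)-F_\rho(\vx_\rho^*)\leq\tau$.

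Second, we apply \cref{eq:det-repair-map}:
\[
 \vp=[\widetilde\vx-(\mQ\widetilde\vx-\vb+\lambda_\rho\vomega)]_+,\qquad
 \vu=[\vp-\delta\vomega]_+ .
\]
\Cref{lem:det-repair} gives the following directly:
\begin{itemize}
\item $\vzero\leq\vu\leq\vx_\rho^*$;
\item $\vzero\leq\vb-\mQ\vu\leq2\lambda_\rho\vomega=2\alpha\rho\vomega$;
\item $F_\rho(\vu)-F_\rho(\vx_\rho^*)\leq2\delta^2/\rho\leq\epsobj$.
\end{itemize}
These are exactly \cref{eq:randomized-safe-output} together with the requested gap.

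The remaining point, and the step needing care, is locality of the repair. The projected-gradient input $\mQ\widetilde\vx$ must be computed on $\gU\cup\partial\gU$ by one scan of the rows in $\gU$. Every coordinate farther away has $(\mQ\widetilde\vx-\vb+\lambda_\rho\vomega)_i=\lambda_\rho\omega_i>0$ and $\widetilde x_i=0$, so $p_i=0$ there; this is the same argument as \cref{cor:boundary-only-discovery}. On $\partial\gU$, $\vp$ may become positive at vertices whose rows were never opened. Those entries need only degree queries, which are already cached, and no row scan. Moreover, $\vu\leq\vx_\rho^*$ forces $\supp\vu\subseteq\mathcal S_\rho^*$, so outputting $\vu$ costs at most $\vol(\mathcal S_\rho^*)=V_*$ words.

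The total repair cost is therefore $\softO(\vol(\gU)+|\partial\gU|)=\softO(V_*)$ in the balanced-tree model. This is dominated by the ledger in \cref{eq:randomized-refined-ledger}, which proves the corollary. The Las Vegas wrapper is unchanged, since the repair is deterministic.
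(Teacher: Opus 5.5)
Your proposal is correct and follows the paper's proof: run the threshold-batch algorithm at the tightened tolerance $\tau=\alpha\delta^2/8$, apply the repair map of \cref{lem:det-repair} with $r=\rho$ and zero baseline, and note that the repair only rescans the discovered face, so it adds $\softO(V_*)$ work and no new failure event. The only difference is cosmetic: the paper chooses $\delta$ by dyadic halving from $\alpha\rho/2$, which avoids computing $\sqrt{\rho\epsobj/2}$.
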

\begin{proof}
Choose $\delta$ by halving $\alpha\rho/2$ until
$2\delta^2/\rho\leq\epsobj$, and call the threshold-batch algorithm
with the smaller objective budget $\tau=\alpha\delta^2/8$.
On its returned candidate perform the projected-gradient and downward
correction in \cref{eq:det-repair-map}, with $r=\rho$ and old baseline
zero. \Cref{lem:det-repair} gives
\cref{eq:randomized-safe-output}, support safety, and the required gap.
The new budget has
$\log(1/\tau)=\mathcal{O}(\log(2/(\alpha\rho\min\{1,\epsobj\})))$.
The terminal product scans only the already discovered candidate face.
Apply the downward correction before opening newly positive neighbors;
the repaired support lies inside $\mathcal S_\rho^*$.
Hence any additional scan and output cost is $\widetilde{\mathcal{O}}(V_*)$.
The repair is deterministic and introduces no new failure event.
\end{proof}

\begin{proof}[Proof of \cref{thm:randomized-rppr-main}]
Use \cref{thm:local-rppr,eq:randomized-refined-ledger} and, for the stronger
subsolution output, \cref{cor:randomized-safe-repair}.
Set $\zeta=1/4$ and restart after an explicit failure. Each entire run
uses fresh random choices and succeeds with probability at least $3/4$.
The usual renewal identity bounds total expected work by at most $4/3$
times the unconditional expected work of one run; it does not require
independence between that run's work and its success event. Every returned
vector is certified, so this is Las Vegas. Keeping the requested $\zeta$
and making a single run gives the alternative failure-reporting guarantee.
\end{proof}

\section{PPR Consequences and Discussion}\label{sec:two-stage-ppr}
The objective-to-PPR conversion in \cref{prop:rppr-ppr-bridge} already
proves \cref{cor:main-ppr}. We record two further consequences: a positive
residual certificate compatible with the ACL approximation definition, and
a set-only handoff to an ordinary principal PPR solve.

\subsection{A positive residual certificate}

Write $\mP_{\mathrm{L}}=(\mI+\mA\mD^{-1})/2$ and, for any load $\va$,
let
\begin{equation*}
 \operatorname{PPR}_\alpha(\va)
   :=\alpha(\mI-(1-\alpha)\mP_{\mathrm{L}})^{-1}\va.
\end{equation*}
The ACL approximation condition is stronger than an arbitrary small
solution error: a nonnegative vector $\widehat\vpi$ is an ACL
$\epsappr$-approximation if
\begin{equation}
 \widehat\vpi=\operatorname{PPR}_\alpha(\eunit{v}-\vr^{\mathrm{mass}}),
 \qquad \vzero\leq\vr^{\mathrm{mass}}\leq\epsappr\mD\one.
 \label{eq:acl-approximation-contract}
\end{equation}
This is the residual representation in
\citet{andersen2007using} and \citet[Definition~1.1]{wei2026simple},
translated to the lazy parameter.

\begin{corollary}[ACL-compatible local acceleration]
\label{cor:acl-compatible-output}
For $0<\epsappr<1$, both our deterministic and randomized methods can
return an ACL $\epsappr$-approximation, with support volume at most
$2/\epsappr$, in respectively
$\softO(1/(\epsappr\sqrt\alpha))$ deterministic and expected Las Vegas
work. In particular $\vzero\leq\widehat\vpi\leq\vpi$ and the
semantic error is at most $\epsappr$.
\end{corollary}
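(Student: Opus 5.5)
The plan is to run both RPPR methods with $\rho=\epsappr/2$, use their safe subsolution outputs, and then translate the subsolution property into the ACL residual representation \cref{eq:acl-approximation-contract}. The deterministic method already returns $\vzero\leq\widehat\vx\leq\vx_\rho^*$ with $\vzero\leq\vb-\mQ\widehat\vx$. For the randomized method, \cref{cor:randomized-safe-repair} gives $\vzero\leq\vb-\mQ\widehat\vx\leq2\alpha\rho\vomega$. For the deterministic output I would also get this two-sided residual bound by citing the repaired invariant \cref{eq:det-repaired-invariant} of the last stage $r=\rho$. That invariant gives $\vzero\leq\vb-\mQ\overline\vx^+\leq2\lambda_\rho\vomega$. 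The objective tolerance is irrelevant here, so any fixed $\epsobj$ works, for instance $\epsobj=\alpha\rho^2$, which keeps the logarithms polylogarithmic. Both work bounds then become $\softO(1/(\rho\sqrt\alpha))=\softO(1/(\epsappr\sqrt\alpha))$, and the support volume is at most $1/\rho=2/\epsappr$ by \cref{thm:deterministic-rppr}, or by support safety $\supp\widehat\vx\subseteq\mathcal S_\rho^*$ together with \cref{lem:rppr-order-mass}.

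The translation step is the main computation. Set $\widehat\vpi=\mD^{1/2}\widehat\vx$ and define $\vr^{\mathrm{mass}}:=\alpha^{-1}\mD^{1/2}(\vb-\mQ\widehat\vx)$.

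First, I would verify the identity
\[
\mD^{1/2}\mQ\mD^{-1/2}=\tfrac{1+\alpha}2\mI-\tfrac{1-\alpha}2\mA\mD^{-1}=\mI-(1-\alpha)\mP_{\mathrm L}.
\]
Together with $\mD^{1/2}\vb=\alpha\eunit{v}$, this gives
\[
(\mI-(1-\alpha)\mP_{\mathrm L})\widehat\vpi=\alpha(\eunit{v}-\vr^{\mathrm{mass}}),
\]
which is exactly $\widehat\vpi=\operatorname{PPR}_\alpha(\eunit{v}-\vr^{\mathrm{mass}})$.

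Next, the residual bound $\vzero\leq\vb-\mQ\widehat\vx\leq2\alpha\rho\vomega$ becomes
\[
\vzero\leq\vr^{\mathrm{mass}}\leq2\rho\,\mD^{1/2}\vomega=2\rho\mD\one=\epsappr\mD\one.
\]
Nonnegativity of $\widehat\vpi$ is immediate from $\widehat\vx\geq\vzero$.

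For the final sentence of the corollary, I would use $\widehat\vx\leq\vx_\rho^*\leq\vx_0^*$ from \cref{lem:rppr-order-mass}, which gives $\widehat\vpi\leq\vpi$. The semantic error then follows from \cref{eq:ppr-residual-certificate}. Since
\[
\norm{\mD^{-1/2}(\vb-\mQ\widehat\vx)}_\infty\leq2\alpha\rho=\alpha\epsappr,
\]
we get $\norm{\mD^{-1}(\widehat\vpi-\vpi)}_\infty\leq\epsappr$.

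Alternatively, I could argue directly with inverse positivity. We have $\vpi-\widehat\vpi=\operatorname{PPR}_\alpha(\vr^{\mathrm{mass}})$, so
\[
\vzero\leq\mD^{-1}(\vpi-\widehat\vpi)\leq\epsappr\,\mD^{-1}\operatorname{PPR}_\alpha(\mD\one)=\epsappr\one.
\]
The last equality holds because $\mD\one$ is stationary for $\mP_{\mathrm L}$.

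The step I expect to need the most care is confirming that the deterministic algorithm's returned $\overline\vx$ really satisfies the two-sided residual bound with constant $2$ at the final level. The worry is the coordinatewise maximum with the old baseline. \Cref{lem:det-repair} states the bound for $\overline\vx^+$ including that maximum, so citing it suffices. If the constant $2$ were instead only $4$ at some step, I would take $\rho=\epsappr/4$, which changes only constants in the work bound.
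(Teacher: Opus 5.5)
Your proof is correct and follows the paper's argument. You take $\rho=\epsappr/2$, use the safe repaired output with $\vzero\leq\vb-\mQ\widehat\vx\leq2\alpha\rho\vomega$ from \cref{eq:det-repaired-invariant} or \cref{cor:randomized-safe-repair}, set $\vr^{\mathrm{mass}}=\alpha^{-1}\mD^{1/2}(\vb-\mQ\widehat\vx)$, rearrange the PageRank equation, and get the one-sided and semantic bounds from inverse positivity.

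Two small points need fixing.

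First, the objective tolerance is not irrelevant; it must be below $\alpha/2$. Otherwise \cref{alg:deterministic-rppr} returns $\vzero$ at its first line without any stage handoff. That zero output has mass residual $\eunit{v}$, which violates the ACL bound whenever $\epsappr d_v<1$, and this can happen with $\rho d_v<1$. Your choice $\epsobj=\alpha\rho^2$ does satisfy the requirement. The paper uses $\alpha\epsappr^2/8$ and says explicitly that this tolerance is below $\alpha/2$.

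Second, the branches $\rho d_v\geq1$ and $\alpha=1$ also never reach a handoff, so the repaired invariant cannot be cited there. Check the certificate directly, as the paper does:
\begin{itemize}
\item When $\rho d_v\geq1$, zero has mass residual $\eunit{v}\leq\rho\mD\one$.
\item When $\alpha=1$, the explicit seed coordinate leaves $\vb-\mQ\widehat\vx=\rho\sqrt{d_v}\,\eunit{v}\leq\rho\vomega$.
\end{itemize}
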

\begin{proof}
Set $\rho=\epsappr/2$ and request objective tolerance
$\alpha\epsappr^2/8$. The nontrivial deterministic execution reaches a
stage handoff, since this tolerance is below $\alpha/2$.
\Cref{eq:det-repaired-invariant} gives
$\vzero\leq\vb-\mQ\widehat\vx\leq2\alpha\rho\vomega$;
for the randomized method use \cref{cor:randomized-safe-repair}.
Define
\[
 \vr^{\mathrm{mass}}:=\alpha^{-1}\mD^{1/2}(\vb-\mQ\widehat\vx).
\]
Then \cref{eq:acl-approximation-contract} follows by rearranging the
PageRank equation, and
$\vr^{\mathrm{mass}}\leq2\rho\mD\one=\epsappr\mD\one$.
Inverse positivity gives the one-sided solution and semantic bounds.
The support and work follow from the main theorems.
If $\rho d_v\geq1$, zero has residual $\eunit{v}\leq\rho\mD\one$.
At $\alpha=1$ the exact one-coordinate RPPR solution has the same certificate.
\end{proof}

The corollary supplies the numerical approximation property used in ACL
local clustering, in addition to the semantic guarantee. Applying a
particular sweep-cut theorem still requires its graph, seed, and parameter
hypotheses; those hypotheses are not replaced by an objective-gap statement.

\subsection{An approximate support as a linear-system envelope}

For any nonempty set $\gU\subseteq\gV$, define the principal PPR solution
and its zero extension by
\begin{equation*}
 \vu_\gU:=\mQ_{\gU\gU}^{-1}\vb_\gU,
 \qquad \widetilde\vu_\gU\in\R^n.
\end{equation*}
An empty set has the zero solution. The following comparison does not
require $\gU$ to contain the full RPPR support or to be reachable by pivots.

\begin{lemma}[Approximate-envelope linearization]
\label{lem:approximate-envelope}
Let
\begin{equation*}
 \delta_\gU:=\max_{i\notin \gU}(\vx_\rho^*)_i/\sqrt{d_i},
\end{equation*}
where the maximum is zero if $\gU=\gV$. Then
\begin{equation*}
 \vzero\leq\widetilde\vu_\gU\leq\vx_0^*,\qquad
 \norm{\mD^{-1/2}(\vx_0^*-\widetilde\vu_\gU)}_\infty
       \leq\rho+\delta_\gU.
\end{equation*}
\end{lemma}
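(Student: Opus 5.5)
The plan is to compare the zero-extended principal solution $\widetilde\vu_\gU$ with $\vx_0^*$ from above and with $\vx_\rho^*$ from below, using only inverse positivity of principal submatrices of $\mQ$. For the upper bound, note $\mQ_{\gU\gU}(\vx_0^*)_\gU=\vb_\gU-\mQ_{\gU\gU^c}(\vx_0^*)_{\gU^c}\geq\vb_\gU$. This holds because $\vx_0^*=\mQ^{-1}\vb\geq\vzero$ and the off-diagonal entries of $\mQ$ are nonpositive. Multiplying by $\mQ_{\gU\gU}^{-1}\geq\vzero$ gives $\vu_\gU\leq(\vx_0^*)_\gU$. Also $\vu_\gU\geq\vzero$ since $\vb_\gU\geq\vzero$, so $\vzero\leq\widetilde\vu_\gU\leq\vx_0^*$. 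The empty set is trivial.

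For the error bound, I would split
\[
\vx_0^*-\widetilde\vu_\gU=(\vx_0^*-\vx_\rho^*)+(\vx_\rho^*-\widetilde\vu_\gU).
\]
By \cref{eq:rppr-bias}, the first term lies in $[\vzero,\rho\vomega]$, so its degree-normalized maximum norm is at most $\rho$. It then suffices to show $\vx_\rho^*-\widetilde\vu_\gU\leq\delta_\gU\vomega$ entrywise. The lower side, $\vx_0^*-\widetilde\vu_\gU\geq\vzero$, is already known, so no absolute values are needed.

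Outside $\gU$ the required inequality is exactly the definition of $\delta_\gU$, because $\widetilde\vu_\gU$ vanishes there. On $\gU$, I would use the KKT inequality $\mQ\vx_\rho^*\leq\vb$, which holds since the deficit $\vb-\mQ\vx_\rho^*\geq\vzero$ as shown in \cref{lem:rppr-order-mass}. Its rows on $\gU$ give
\[
\mQ_{\gU\gU}\bigl((\vx_\rho^*)_\gU-\vu_\gU\bigr)\leq-\mQ_{\gU\gU^c}(\vx_\rho^*)_{\gU^c}\leq-\delta_\gU\mQ_{\gU\gU^c}\vomega_{\gU^c}.
\]
The second inequality uses $(\vx_\rho^*)_{\gU^c}\leq\delta_\gU\vomega_{\gU^c}$ together with nonpositive off-diagonal entries. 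Now $\mQ\vomega=\alpha\vomega$ yields $-\mQ_{\gU\gU^c}\vomega_{\gU^c}=\mQ_{\gU\gU}\vomega_\gU-\alpha\vomega_\gU\leq\mQ_{\gU\gU}\vomega_\gU$. Multiplying by $\mQ_{\gU\gU}^{-1}\geq\vzero$ then gives $(\vx_\rho^*)_\gU-\vu_\gU\leq\delta_\gU\vomega_\gU$. Dividing by $\sqrt{d_i}=\omega_i$ and adding the bias bound via the triangle inequality finishes the proof.

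The step needing care is this final comparison on $\gU$. The sign bookkeeping must use the inequality form $\mQ\vx_\rho^*\leq\vb$ rather than exact stationarity, since $\gU$ need not match the support of $\vx_\rho^*$. It must also route the harmonic-vector identity through the principal block. I expect no other obstacle.
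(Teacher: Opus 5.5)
Your proof is correct and uses the same mechanism as the paper. Inverse positivity of $\mQ_{\gU\gU}$ gives the sandwich $\vzero\leq\widetilde\vu_\gU\leq\vx_0^*$, and the Stieltjes-sign comparison through $\mQ\vomega=\alpha\vomega$ on the principal block gives the error bound. The only difference is bookkeeping: you compare $\vx_\rho^*$ with $\vu_\gU$ via the subsolution inequality $\mQ\vx_\rho^*\leq\vb$ and add the bias bound at the end, as a one-sided entrywise sum rather than a literal norm triangle inequality. The paper instead first turns the bias bound into $(\vx_0^*)_{\gU^c}\leq(\rho+\delta_\gU)\vomega_{\gU^c}$ and then compares $\vx_0^*$ with $\vu_\gU$ directly using the exact equation $\mQ\vx_0^*=\vb$.
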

\begin{proof}
Write $\gU^c=\gV\setminus\gU$.
Outside $\gU$, \cref{eq:rppr-bias} gives
$(\vx_0^*)_i\leq(\rho+\delta_\gU)\sqrt{d_i}$.
On $\gU$, the difference
$\vw_\gU=(\vx_0^*)_\gU-\vu_\gU$ satisfies
\[
 \mQ_{\gU\gU}\vw_\gU=-\mQ_{\gU\gU^c}(\vx_0^*)_{\gU^c}\geq\vzero.
\]
Both nonnegativity of $\vu_\gU$ and its comparison with $(\vx_0^*)_\gU$
follow from inverse positivity. The outside bound and Stieltjes signs imply
\[
 \mQ_{\gU\gU}\vw_\gU
 \leq-\mQ_{\gU\gU^c}(\rho+\delta_\gU)\vomega_{\gU^c}
 \leq\mQ_{\gU\gU}(\rho+\delta_\gU)\vomega_\gU.
\]
The last inequality uses $\mQ\vomega=\alpha\vomega\geq\vzero$.
Multiplication by the nonnegative principal inverse proves the inside
bound. The empty-set case is the already established outside bound.
\end{proof}

\begin{lemma}[A sparse objective certificate gives a set certificate]
\label{lem:face-energy-envelope}
If a nonnegative $\vz$ supported on $\gU$ satisfies
$F_\rho(\vz)-F_\rho(\vx_\rho^*)\leq\Gamma_\gU$, then
\begin{equation*}
 \delta_\gU\leq\sqrt{2\Gamma_\gU/\alpha},\qquad
 \norm{\mD^{-1/2}(\vx_0^*-\widetilde\vu_\gU)}_\infty
       \leq\rho+\sqrt{2\Gamma_\gU/\alpha}.
\end{equation*}
\end{lemma}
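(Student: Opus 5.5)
The plan is to bound $\delta_\gU$ by strong convexity and then apply \cref{lem:approximate-envelope}; the second inequality follows once the first is established.

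\emph{Step 1.} Since $F_\rho$ has Hessian $\mQ\succeq\alpha\mI$ in its smooth part, it is $\alpha$-strongly convex. Its minimizer $\vx_\rho^*$ gives $0\in\partial F_\rho(\vx_\rho^*)$. The standard strong-convexity growth bound therefore yields
\[
 \tfrac{\alpha}{2}\norm{\vz-\vx_\rho^*}_2^2\leq F_\rho(\vz)-F_\rho(\vx_\rho^*)\leq\Gamma_\gU .
\]
This is the same estimate used in the proof of \cref{prop:rppr-ppr-bridge}. It gives $\norm{\vz-\vx_\rho^*}_2\leq\sqrt{2\Gamma_\gU/\alpha}$.

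\emph{Step 2.} For $i\notin\gU$ we have $z_i=0$, so $(\vx_\rho^*)_i=|(\vz-\vx_\rho^*)_i|\leq\norm{\vz-\vx_\rho^*}_2$. Dividing by $\sqrt{d_i}\geq1$ only decreases this quantity. Taking the maximum over $i\notin\gU$ gives $\delta_\gU\leq\sqrt{2\Gamma_\gU/\alpha}$; if $\gU=\gV$, then $\delta_\gU=0$ and the bound is trivial. Nonnegativity of $\vx_\rho^*$ from \cref{thm:obstacle-reduction} ensures that $\delta_\gU\geq0$, so nothing is lost.

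\emph{Step 3.} Substituting this bound into \cref{lem:approximate-envelope} gives the stated envelope inequality. That lemma requires no reachability or support containment for $\gU$, so it applies as is. No step is a genuine obstacle. The only point needing care is that the objective-gap-to-distance bound must use the full nonsmooth $F_\rho$, not only the quadratic. This is justified by strong convexity plus optimality, which is exactly the argument in \cref{lem:kkt-gap-certificate}.
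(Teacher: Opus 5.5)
Your proposal is correct and follows the paper's proof essentially step for step. Strong convexity of $F_\rho$ bounds $\norm{\vz-\vx_\rho^*}_2$ by $\sqrt{2\Gamma_\gU/\alpha}$; the vanishing of $\vz$ outside $\gU$ together with $d_i\geq1$ then bounds $\delta_\gU$; and \cref{lem:approximate-envelope} gives the envelope inequality, with your remarks on $\gU=\gV$ and the nonsmooth objective being harmless refinements of the same argument.
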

\begin{proof}
Strong convexity bounds $\norm{\vz-\vx_\rho^*}_2$ by
$\sqrt{2\Gamma_\gU/\alpha}$. Since $\vz$ vanishes outside $\gU$ and
$d_i\geq1$, this bounds $\delta_\gU$. Apply
\cref{lem:approximate-envelope}.
\end{proof}

The lemma applies to either main algorithm's output. For the randomized
algorithm, one may also use the exact reachable-face gap certified in the
proof of \cref{thm:local-rppr}; the numerical point need not be retained.

\begin{algorithm}[t]
\caption{PPR with a set-only handoff}
\label{alg:two-stage-ppr}
\begin{algorithmic}[1]
\STATE Choose $\rho,\varepsilon_{\mathrm{env}},\varepsilon_{\mathrm{lin}}>0$ with
       $\rho+\varepsilon_{\mathrm{env}}+\varepsilon_{\mathrm{lin}}\leq\epsppr$.
\STATE Run a main RPPR algorithm to objective gap
       $\epsobj=\alpha\varepsilon_{\mathrm{env}}^2/2$.
\STATE Retain only its output support $\gU$ (or its certified reachable face).
       Discard the changing numerical state. If $\gU=\emptyset$, return zero.
\STATE Build the principal PPR system on $\gU$ using original degrees and
       solve it to Euclidean error at most $\varepsilon_{\mathrm{lin}}$.
\STATE Project its numerical solution onto $\R_+^\gU$ and zero-extend to
       $\widehat\vx$. Return $\widehat\vpi=\mD^{1/2}\widehat\vx$.
\end{algorithmic}
\end{algorithm}

\begin{theorem}[Deterministic and randomized two-stage completion]
\label{thm:two-stage-completion}
For $0<\epsppr<1$, \cref{alg:two-stage-ppr} attains the semantic target
whenever its two certified stages return. With equal budgets
$\rho=\varepsilon_{\mathrm{env}}=\varepsilon_{\mathrm{lin}}=\epsppr/3$, there
is a deterministic implementation with
$\softO(1/(\epsppr\sqrt\alpha))$ fully charged work and a Las Vegas
implementation with the same expected work. The second stage may use
ordinary preconditioned conjugate gradient deterministically or a
residual-certified SDD solve randomly.
\end{theorem}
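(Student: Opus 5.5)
The proof splits into a correctness argument for the two certified stages and a work ledger for each implementation.

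\emph{Correctness.} Suppose the first stage returns an output with objective gap at most $\epsobj=\alpha\varepsilon_{\mathrm{env}}^2/2$, supported on $\gU$. Then \cref{lem:face-energy-envelope} with $\Gamma_\gU=\epsobj$ gives
\[
\norm{\mD^{-1/2}(\vx_0^*-\widetilde\vu_\gU)}_\infty\leq\rho+\varepsilon_{\mathrm{env}} .
\]
If we retain the certified reachable face instead of the output support, the same bound holds, since the proof of \cref{thm:local-rppr} certifies the exact face gap. If $\gU=\emptyset$, the same lemma (empty-set case of \cref{lem:approximate-envelope}) shows that zero already meets the budget.

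For the second stage, let $\widehat\vu$ be the numerical principal solution with $\norm{\widehat\vu-\vu_\gU}_2\leq\varepsilon_{\mathrm{lin}}$. Since $\vu_\gU\geq\vzero$ by \cref{lem:approximate-envelope}, orthant projection is nonexpansive relative to $\vu_\gU$, so the projected vector is within $\varepsilon_{\mathrm{lin}}$ in Euclidean norm. Because $d_i\geq1$, the degree-normalized maximum error is also at most $\varepsilon_{\mathrm{lin}}$. The triangle inequality then gives total error at most $\rho+\varepsilon_{\mathrm{env}}+\varepsilon_{\mathrm{lin}}\leq\epsppr$.

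\emph{Work.} With equal budgets $\epsppr/3$, the first stage has $\rho=\epsppr/3$ and $\epsobj=\alpha\epsppr^2/18$. The logarithm $L_{\mathrm{par}}$ is then $\mathcal{O}(\log(1/\alpha)+\log(1/\epsppr))$. So \cref{thm:deterministic-rppr} gives $\softO(1/(\epsppr\sqrt\alpha))$ deterministic work, and \cref{thm:randomized-rppr-main} gives the same expected work. Both outputs satisfy $\vol(\gU)\leq1/\rho=3/\epsppr$.

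Building $\mQ_{\gU\gU}$ (in degree coordinates, as $\mH_{\gU\gU}$) scans only rows of $\gU$, costing $\mathcal{O}(\vol(\gU))$ entries plus tree logarithms.

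\emph{Deterministic second stage.} Since $\alpha\mI\preceq\mQ_{\gU\gU}\preceq\mI$, conjugate gradient (Jacobi-preconditioned or plain on the symmetric form) started at zero reduces the $\mQ$-energy error by the factor $q$ per step with $\log(1/q)=\Omega(\sqrt\alpha)$. The initial energy error is at most $\norm{\vu_\gU}_{\mQ}\leq\sqrt\alpha$, and the Euclidean error is at most $\alpha^{-1/2}$ times the energy error. Hence $\mathcal{O}(\alpha^{-1/2}\log(1/(\alpha\varepsilon_{\mathrm{lin}})))$ iterations suffice. Each iteration is one sparse matrix-vector product of cost $\mathcal{O}(\vol(\gU))$, for a total of $\softO(1/(\epsppr\sqrt\alpha))$.

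To avoid relying on exact CG arithmetic, we can instead use a Chebyshev iteration with the known spectral bounds $[\alpha,1]$. The convergence rate is the same (\cref{lem:chebyshev-inverse}), and the error bound is deterministic without any stopping test.

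\emph{Randomized second stage.} One certifies via the residual: $\norm{\mQ_{\gU\gU}\widehat\vu-\vb_\gU}_2\leq\alpha\varepsilon_{\mathrm{lin}}$ implies Euclidean error at most $\varepsilon_{\mathrm{lin}}$, since $\norm{\mQ_{\gU\gU}^{-1}}_2\leq1/\alpha$. We retry fresh SDD calls until this test passes, exactly as in the proof of \cref{thm:local-rppr}. This yields $\softO(\vol(\gU))$ expected work, and restarting after a failure gives the Las Vegas property.

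\emph{Main obstacle.} The delicate step is the deterministic solver accounting. We must check that the CG/Chebyshev iteration count depends only on $\alpha$ and $\varepsilon_{\mathrm{lin}}$, not on $|\gU|$ or the graph. We must also check that every matrix-vector product, projection, and output operation is charged to $\vol(\gU)$ with only logarithmic overhead, so that the second stage adds $\softO(\vol(\gU)/\sqrt\alpha)=\softO(1/(\epsppr\sqrt\alpha))$ to the first.
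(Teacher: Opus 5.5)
Your proposal is correct and follows the paper's proof essentially step for step. Accuracy comes from \cref{lem:face-energy-envelope} together with nonexpansive orthant projection. Work comes from the Stage~I theorems, $\vol(\gU)\leq1/\rho$, and either a Chebyshev-rate (P)CG bound (initial energy at most $\sqrt\alpha$, converted to Euclidean error by dividing by $\sqrt\alpha$) or a residual-certified SDD retry loop. The one detail you should make explicit is the SDD relative energy tolerance, at most $\sqrt\alpha\,\varepsilon_{\mathrm{lin}}$, so that a good call has residual at most $\alpha\varepsilon_{\mathrm{lin}}$ (via $\mQ^2\preceq\mQ$) and passes your test with constant probability.
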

\begin{proof}
\Cref{lem:face-energy-envelope} bounds the error of the exact principal
solution by $\rho+\varepsilon_{\mathrm{env}}$.
Orthant projection cannot increase distance from that nonnegative solution.
A Euclidean error at most $\varepsilon_{\mathrm{lin}}$ therefore contributes
at most that amount to the degree-normalized maximum norm. This proves
accuracy, including the empty-set case.

Building the principal matrix from only the retained labels costs
$\widetilde{\mathcal{O}}(\vol(\gU))$; no prior solver state is needed. Its energy
satisfies
\[
 \norm{\widetilde\vu_\gU}_{\mQ}^2
 =\vb_\gU^\trans\mQ_{\gU\gU}^{-1}\vb_\gU\leq\alpha.
\]
Preconditioned CG, initialized at zero, on the degree system
$\mH_{\gU\gU}\vy_\gU=\mD_\gU^{1/2}\vb_\gU$, with preconditioner $\mD_\gU$,
is equivalent to CG on $\mQ_{\gU\gU}$, whose spectrum is in $[\alpha,1]$.
Its energy minimization over the current Krylov space, together with a
Chebyshev residual polynomial on $[\alpha,1]$, bounds the Euclidean error
after $j$ steps by
\[
 2\left(\frac{1-\sqrt\alpha}{1+\sqrt\alpha}\right)^j.
\]
Indeed the polynomial's maximum absolute residual bounds the energy error
relative to its initial value, which is at most $\sqrt\alpha$; converting
to Euclidean error divides by $\sqrt\alpha$.
The polynomial is obtained by the same construction as
\cref{lem:chebyshev-inverse}, with upper endpoint one. Exact zero residual
terminates CG immediately, including the degenerate spectral case.
Thus $\mathcal{O}(\alpha^{-1/2}\log(2/\varepsilon_{\mathrm{lin}}))$ iterations suffice.
Each costs $\mathcal{O}(\vol(\gU))$ arithmetic and repeated incidence reads, in
addition to local indexing logarithms. This gives the deterministic tail
charge
\begin{equation*}
 \softO\!\left(\frac{\vol(\gU)}{\sqrt\alpha}
                  \log\frac2{\varepsilon_{\mathrm{lin}}}\right).
\end{equation*}

For the randomized tail, choose a relative SDD energy tolerance no larger
than $\sqrt\alpha\,\varepsilon_{\mathrm{lin}}$. A good call then has active
residual norm at most $\alpha\varepsilon_{\mathrm{lin}}$.
Accept only after verifying this residual inequality by a charged active-row
scan. Every accepted call has Euclidean error at most
$\varepsilon_{\mathrm{lin}}$, since $\mQ_{\gU\gU}\succeq\alpha\mI$.
The resulting expected tail charge is
\begin{equation*}
 \softO\!\left(\vol(\gU)
          \log\frac2{\sqrt\alpha\,\varepsilon_{\mathrm{lin}}}\right).
\end{equation*}
The tolerance can be rounded down dyadically to avoid computing a root.
Independent certified retries give Las Vegas completion; alternatively,
split a requested failure budget between the two stages and report failure
explicitly if a cap is reached.

Finally, $\vol(\gU)\leq1/\rho$. Combining either tail charge with its
corresponding Stage~I theorem and the equal budgets proves the result.
All reconstruction, projection, state disposal, and output are included.
\end{proof}

For the shortest execution, return the numerical Stage~I answer directly
using \cref{eq:direct-ppr-budgets}. The set-only variant is useful when
support discovery and final linear accuracy are implemented by different
routines. Its correctness follows from a quantitative envelope certificate,
so exact identification of the regularized support is unnecessary.

\subsection{Discussion}\label{sec:conclusion}
We have given deterministic and randomized local algorithms that attain
the accelerated running-time target for regularized PageRank. The concurrent
work of \citet{cui2026accelerating} attains the same randomized scale;
our principal distinction is the deterministic bound with only
polylogarithmic overhead and no SDD oracle. An explicit
regularization-bias and objective-error conversion gives the corresponding
degree-normalized PPR guarantee. The two methods control locality in
different ways: deterministic continuation bounds cumulative work along
an accelerated trajectory, while randomized threshold batching bounds the
number of support-safe restricted systems. Both bounds apply to total
work in the stated local-access model.

The deterministic method also has a specified bounded-arithmetic
realization. Developing an effective fixed-precision implementation and
comparing practical constants remain useful next steps.

Two algorithmic questions follow directly from the results. First, can a
deterministic method achieve the same support-adaptive minimum as the
randomized method? Second, can the randomized method reuse numerical work
across nested active sets while maintaining all boundary information
locally? These improvements may help determine the optimal joint
dependence on regularization, teleportation, and accuracy, which is not
settled by the present upper bounds. The locality counterexamples for
unmodified accelerated methods also motivate identifying which constraints
or graph structures make classical recurrences efficient.

\section*{Research Assistance and Responsibility}
Baojian Zhou directed the research and is the sole author of this manuscript.
OpenAI's ChatGPT-5.6 Sol and ChatGPT-6-Astra were used through Codex and
related research tools to assist with the research and manuscript
preparation. Their assistance included proposing and refining proof
strategies, deriving and examining inequalities, searching for
counterexamples, comparing results with primary sources, writing
computational verification programs, and drafting and revising the text
and LaTeX. This use included substantive mathematical development and
critical examination of proofs, not only language editing. The randomized
and deterministic arguments were developed in research notes before being
revised and assembled into this manuscript.

Model-assisted reviews and computational checks were used to look for
errors and test intermediate claims and special cases. Neither agreement
between models nor successful computational tests establishes a theorem
or constitutes formal verification or external peer review. Claims of
correctness rest on the mathematical arguments presented in the paper,
not on the authority of a model response.

The author retains full responsibility for the correctness and completeness
of the mathematical statements and proofs, the accuracy of citations and
attribution, the description of computational evidence, and the final
submitted text. The AI systems are acknowledged as research tools and are
not listed as authors.

\paragraph{Development record.}
The repository preserves an early randomized proof in
\href{https://github.com/baojian/hybrid-local-solver/commit/3fa84552e6582604a214021f20310df1c6290c9e}{commit \texttt{3fa8455}}
(August~30, 2026, UTC+8), including the threshold-batch Cholesky theorem
and the charged RPPR algorithm in the \texttt{active\_edge\_lcp} note.
The deterministic continuation proof is preserved in
\href{https://github.com/baojian/hybrid-local-solver/commit/0f0ac32d70290f2c6877b4cf1faa7da85e2835ee}{commit \texttt{0f0ac32}}
(September~6, 2026, UTC+8), in the original sources of the independent
deterministic note. Both arguments were integrated into the active
manuscript by
\href{https://github.com/baojian/hybrid-local-solver/commit/7e1b4e77351b361e0c700ce3186c1534d2023f0e}{commit \texttt{7e1b4e7}}
(September~6, 2026, UTC+8).
GitHub-hosted workflow and pull-request records corroborate the presence
of these historical snapshots before September~10; the detailed evidence
and its limitations are recorded in the repository's dated release audit.
These are development records, not claims of earlier public disclosure
or external proof certification. Later exposition, certificates, and
implementation refinements should be distinguished from the early proofs.

\bibliography{references}

@inproceedings{fountoulakis2022open,
  title     = {Open Problem: Running Time Complexity of Accelerated
               {$\ell_1$}-Regularized {PageRank}},
  author    = {Fountoulakis, Kimon and Yang, Shenghao},
  booktitle = {Proceedings of the 35th Conference on Learning Theory},
  series    = {Proceedings of Machine Learning Research},
  volume    = {178},
  pages     = {5630--5632},
  publisher = {PMLR},
  year      = {2022},
  url       = {https://proceedings.mlr.press/v178/open-problem-fountoulakis22a.html}
}

@article{beck2009fast,
  title     = {A Fast Iterative Shrinkage-Thresholding Algorithm for Linear
               Inverse Problems},
  author    = {Beck, Amir and Teboulle, Marc},
  journal   = {SIAM Journal on Imaging Sciences},
  volume    = {2},
  number    = {1},
  pages     = {183--202},
  year      = {2009},
  doi       = {10.1137/080716542},
  url       = {https://doi.org/10.1137/080716542}
}

@inproceedings{allenzhu2017linear,
  title     = {Linear Coupling: An Ultimate Unification of Gradient and
               Mirror Descent},
  author    = {Allen-Zhu, Zeyuan and Orecchia, Lorenzo},
  booktitle = {8th Innovations in Theoretical Computer Science Conference
               (ITCS 2017)},
  series    = {Leibniz International Proceedings in Informatics},
  volume    = {67},
  pages     = {3:1--3:22},
  year      = {2017},
  publisher = {Schloss Dagstuhl -- Leibniz-Zentrum f{\"u}r Informatik},
  doi       = {10.4230/LIPIcs.ITCS.2017.3},
  url       = {https://doi.org/10.4230/LIPIcs.ITCS.2017.3}
}

@article{lin2018catalyst,
  title   = {Catalyst Acceleration for First-Order Convex Optimization: From
             Theory to Practice},
  author  = {Lin, Hongzhou and Mairal, Julien and Harchaoui, Zaid},
  journal = {Journal of Machine Learning Research ({JMLR})},
  volume  = {18},
  number  = {212},
  pages   = {1--54},
  year    = {2018},
  url     = {https://jmlr.org/papers/v18/17-748.html}
}

@inproceedings{wang2024revisiting,
  title     = {Revisiting Local Computation of {PageRank}: Simple and Optimal},
  author    = {Wang, Hanzhi and Wei, Zhewei and Wen, Ji-Rong and Yang, Mingji},
  booktitle = {Proceedings of the 56th Annual ACM Symposium on Theory of
               Computing},
  pages     = {911--922},
  publisher = {Association for Computing Machinery},
  year      = {2024},
  doi       = {10.1145/3618260.3649661},
  url       = {https://doi.org/10.1145/3618260.3649661}
}

@inproceedings{bai2024faster,
  title     = {Faster Local Solvers for Graph Diffusion Equations},
  author    = {Bai, Jiahe and Zhou, Baojian and Yang, Deqing and Xiao, Yanghua},
  booktitle = {Advances in Neural Information Processing Systems},
  volume    = {37},
  year      = {2024},
  doi       = {10.52202/079017-0085},
  url       = {https://proceedings.neurips.cc/paper_files/paper/2024/hash/0506ad3d1bcc8398a920db9340f27fe4-Abstract-Conference.html}
}

@inproceedings{zhou2024iterative,
  title     = {Iterative Methods via Locally Evolving Set Process},
  author    = {Zhou, Baojian and Sun, Yifan and Babanezhad Harikandeh, Reza
               and Guo, Xingzhi and Yang, Deqing and Xiao, Yanghua},
  booktitle = {Advances in Neural Information Processing Systems},
  volume    = {37},
  year      = {2024},
  doi       = {10.52202/079017-4494},
  url       = {https://proceedings.neurips.cc/paper_files/paper/2024/hash/fffe5a7804c40465ef2432386850c2c7-Abstract-Conference.html}
}

@inproceedings{huang2025accelerated,
  title     = {Accelerated Evolving Set Processes for Local {PageRank}
               Computation},
  author    = {Huang, Binbin and Luo, Luo and Xiao, Yanghua and Yang, Deqing
               and Zhou, Baojian},
  booktitle = {Advances in Neural Information Processing Systems},
  volume    = {38},
  year      = {2025},
  url       = {https://proceedings.neurips.cc/paper_files/paper/2025/hash/946ecab300b0695fe24b53a92e632935-Abstract-Conference.html}
}

@inproceedings{martinezrubio2023accelerated,
  title     = {Accelerated and Sparse Algorithms for Approximate Personalized
               {PageRank} and Beyond},
  author    = {Mart{\'i}nez-Rubio, David and Wirth, Elias and Pokutta, Sebastian},
  booktitle = {Proceedings of the 36th Conference on Learning Theory},
  series    = {Proceedings of Machine Learning Research},
  volume    = {195},
  pages     = {2852--2876},
  publisher = {PMLR},
  year      = {2023},
  url       = {https://proceedings.mlr.press/v195/martinez-rubio23b.html}
}

@article{fountoulakis2026complexity,
  title         = {Complexity of Classical Acceleration for
                   {$\ell_1$}-Regularized {PageRank}},
  author        = {Fountoulakis, Kimon and Mart{\'i}nez-Rubio, David},
  journal       = {arXiv preprint arXiv:2602.21138},
  year          = {2026},
  eprint        = {2602.21138},
  archiveprefix = {arXiv},
  primaryclass  = {math.OC},
  doi           = {10.48550/arXiv.2602.21138},
  url           = {https://arxiv.org/abs/2602.21138},
  note          = {Version 2}
}

@article{wei2026simple,
  title         = {A Simple Active-Set Method for {PageRank}-Based Local Graph
                   Clustering},
  author        = {Wei, Zhewei and Yang, Mingji},
  journal       = {arXiv preprint arXiv:2608.16339},
  year          = {2026},
  eprint        = {2608.16339},
  archiveprefix = {arXiv},
  primaryclass  = {cs.DS},
  doi           = {10.48550/arXiv.2608.16339},
  url           = {https://arxiv.org/abs/2608.16339},
  note          = {Version 1}
}

@article{fountoulakis2019variational,
  title   = {Variational Perspective on Local Graph Clustering},
  author  = {Fountoulakis, Kimon and Roosta-Khorasani, Farbod and Shun, Julian
             and Cheng, Xiang and Mahoney, Michael W.},
  journal = {Mathematical Programming},
  volume  = {174},
  number  = {1--2},
  pages   = {553--573},
  year    = {2019},
  doi     = {10.1007/s10107-017-1214-8},
  url     = {https://doi.org/10.1007/s10107-017-1214-8}
}

@article{karkkainen2003augmented,
  title   = {Augmented {Lagrangian} Active Set Methods for Obstacle Problems},
  author  = {K{\"a}rkk{\"a}inen, Tommi and Kunisch, Karl and Tarvainen, Pasi},
  journal = {Journal of Optimization Theory and Applications},
  volume  = {119},
  number  = {3},
  pages   = {499--533},
  year    = {2003},
  doi     = {10.1023/B:JOTA.0000006687.57272.b6}
}

@inproceedings{andersen2006local,
  title     = {Local Graph Partitioning Using {PageRank} Vectors},
  author    = {Andersen, Reid and Chung, Fan R. K. and Lang, Kevin J.},
  booktitle = {47th Annual IEEE Symposium on Foundations of Computer Science
               (FOCS 2006)},
  pages     = {475--486},
  publisher = {IEEE Computer Society},
  year      = {2006},
  doi       = {10.1109/FOCS.2006.44},
  url       = {https://doi.org/10.1109/FOCS.2006.44}
}

@article{ha2021statistical,
  title   = {Statistical Guarantees for Local Graph Clustering},
  author  = {Ha, Wooseok and Fountoulakis, Kimon and Mahoney, Michael W.},
  journal = {Journal of Machine Learning Research ({JMLR})},
  volume  = {22},
  number  = {148},
  pages   = {1--54},
  year    = {2021},
  url     = {https://jmlr.org/papers/v22/20-029.html}
}

@inproceedings{morris2003evolving,
  title     = {Evolving Sets and Mixing},
  author    = {Morris, Ben and Peres, Yuval},
  booktitle = {Proceedings of the 35th Annual ACM Symposium on Theory of
               Computing},
  pages     = {279--286},
  publisher = {Association for Computing Machinery},
  year      = {2003},
  doi       = {10.1145/780542.780585},
  url       = {https://doi.org/10.1145/780542.780585}
}

@inproceedings{andersen2009finding,
  title     = {Finding Sparse Cuts Locally Using Evolving Sets},
  author    = {Andersen, Reid and Peres, Yuval},
  booktitle = {Proceedings of the 41st Annual ACM Symposium on Theory of
               Computing},
  pages     = {235--244},
  publisher = {Association for Computing Machinery},
  year      = {2009},
  doi       = {10.1145/1536414.1536449},
  url       = {https://doi.org/10.1145/1536414.1536449}
}

@inproceedings{spielman2004nearly,
  title     = {Nearly-Linear Time Algorithms for Graph Partitioning, Graph
               Sparsification, and Solving Linear Systems},
  author    = {Spielman, Daniel A. and Teng, Shang-Hua},
  booktitle = {Proceedings of the 36th Annual ACM Symposium on Theory of
               Computing},
  pages     = {81--90},
  publisher = {Association for Computing Machinery},
  year      = {2004},
  doi       = {10.1145/1007352.1007372},
  url       = {https://doi.org/10.1145/1007352.1007372}
}

@inproceedings{koutis2011nearly,
  title     = {A Nearly-$m\log n$ Time Solver for {SDD} Linear Systems},
  author    = {Koutis, Ioannis and Miller, Gary L. and Peng, Richard},
  booktitle = {2011 IEEE 52nd Annual Symposium on Foundations of Computer
               Science},
  pages     = {590--598},
  publisher = {IEEE Computer Society},
  year      = {2011},
  doi       = {10.1109/FOCS.2011.53},
  url       = {https://doi.org/10.1109/FOCS.2011.53}
}

@article{spielman2013local,
  title   = {A Local Clustering Algorithm for Massive Graphs and Its
             Application to Nearly Linear Time Graph Partitioning},
  author  = {Spielman, Daniel A. and Teng, Shang-Hua},
  journal = {SIAM Journal on Computing ({SICOMP})},
  volume  = {42},
  number  = {1},
  pages   = {1--26},
  year    = {2013},
  doi     = {10.1137/080744888},
  url     = {https://doi.org/10.1137/080744888}
}

@article{andersen2007using,
  title   = {Using {PageRank} to Locally Partition a Graph},
  author  = {Andersen, Reid and Chung, Fan R. K. and Lang, Kevin J.},
  journal = {Internet Mathematics},
  volume  = {4},
  number  = {1},
  pages   = {35--64},
  year    = {2007},
  doi     = {10.1080/15427951.2007.10129139},
  url     = {https://doi.org/10.1080/15427951.2007.10129139}
}

@inproceedings{wei2024absolute,
  title = {Approximating Single-Source Personalized {PageRank} with Absolute Error Guarantees},
  author = {Wei, Zhewei and Wen, Ji-Rong and Yang, Mingji},
  booktitle = {27th International Conference on Database Theory (ICDT 2024)},
  series = {Leibniz International Proceedings in Informatics},
  volume = {290},
  pages = {9:1--9:19},
  year = {2024},
  doi = {10.4230/LIPIcs.ICDT.2024.9},
  url = {https://drops.dagstuhl.de/entities/document/10.4230/LIPIcs.ICDT.2024.9},
  note = {Full version: \url{https://arxiv.org/abs/2401.01019v1}}
}

@article{bertram2026undirected,
  title = {Personalized {PageRank} Estimation in Undirected Graphs},
  author = {Bertram, Christian and Jensen, Mads Vestergaard},
  journal = {arXiv preprint arXiv:2602.10843},
  year = {2026},
  eprint = {2602.10843},
  archiveprefix = {arXiv},
  url = {https://arxiv.org/abs/2602.10843},
  note = {Version 1, February 11, 2026}
}

@article{jiang2026nearoptimality,
  title = {Near-Optimality for Single-Source Personalized {PageRank}},
  author = {Jiang, Xinpeng and Liu, Haoyu and Luo, Siqiang and Xiao, Xiaokui},
  journal = {Proceedings of the ACM on Management of Data},
  volume = {4},
  number = {2},
  articleno = {110},
  numpages = {55},
  pages = {110:1--110:55},
  year = {2026},
  doi = {10.1145/3801906},
  url = {https://arxiv.org/abs/2507.14462},
  note = {Preprint version: arXiv:2507.14462v5, April 12, 2026}
}

@inproceedings{kwok2026asymmetric,
  title = {On Solving Asymmetric Diagonally Dominant Linear Systems in Sublinear Time},
  author = {Kwok, Tsz Chiu and Wei, Zhewei and Yang, Mingji},
  booktitle = {17th Innovations in Theoretical Computer Science Conference (ITCS 2026)},
  series = {Leibniz International Proceedings in Informatics},
  volume = {362},
  pages = {89:1--89:25},
  year = {2026},
  doi = {10.4230/LIPIcs.ITCS.2026.89},
  url = {https://drops.dagstuhl.de/entities/document/10.4230/LIPIcs.ITCS.2026.89}
}

@inproceedings{chen2021diffusion,
  author = {Chen, Li and Peng, Richard and Wang, Di},
  title = {$\ell_2$-norm Flow Diffusion in Near-Linear Time},
  booktitle = {2021 IEEE 62nd Annual Symposium on Foundations of Computer Science (FOCS)},
  pages = {540--549},
  publisher = {IEEE},
  year = {2021},
  doi = {10.1109/FOCS52979.2021.00060},
  url = {https://arxiv.org/abs/2105.14629}
}

@inproceedings{vladu2025barrier,
  author = {Vladu, Adrian},
  title = {Breaking the Barrier of Self-Concordant Barriers: Faster Interior Point Methods for {M}-Matrices},
  booktitle = {Proceedings of the 57th Annual ACM Symposium on Theory of Computing},
  pages = {2213--2224},
  publisher = {Association for Computing Machinery},
  year = {2025},
  doi = {10.1145/3717823.3718255},
  url = {https://arxiv.org/abs/2504.20619}
}

@inproceedings{lin2024constraints,
  title = {Faster Accelerated First-order Methods for Convex Optimization with Strongly Convex Function Constraints},
  author = {Lin, Zhenwei and Deng, Qi},
  booktitle = {Advances in Neural Information Processing Systems},
  volume = {37},
  year = {2024},
  url = {https://proceedings.neurips.cc/paper_files/paper/2024/file/8d8e060d9a3312ae12f42adf0da6ec7c-Paper-Conference.pdf}
}

@inproceedings{thorup2026pagerank,
  author = {Thorup, Mikkel and Wang, Hanzhi and Wei, Zhewei and Yang, Mingji},
  title = {{PageRank} Centrality in Directed Graphs with Bounded In-Degree},
  booktitle = {Proceedings of the 2026 Annual ACM-SIAM Symposium on Discrete Algorithms (SODA)},
  pages = {3820--3840},
  publisher = {Society for Industrial and Applied Mathematics},
  year = {2026},
  doi = {10.1137/1.9781611978971.140},
  url = {https://epubs.siam.org/doi/abs/10.1137/1.9781611978971.140}
}

@misc{thorup2026instance,
  author = {Thorup, Mikkel and Wang, Hanzhi},
  title = {Instance-Optimality of Bidirectional {PageRank} Estimation},
  year = {2026},
  eprint = {2512.16087},
  archivePrefix = {arXiv},
  primaryClass = {cs.DS},
  note = {Version 6, August 3, 2026. To appear in FOCS 2026},
  url = {https://arxiv.org/abs/2512.16087v6}
}

@misc{li2026resistance,
  author = {Li, Rong-Hua and Yang, Yichun},
  title = {Improved Algorithm for Counting Spanning Trees by $\ell_1$-Regularized Resistance},
  year = {2026},
  eprint = {2609.03574},
  archivePrefix = {arXiv},
  primaryClass = {cs.DS},
  note = {Version 2, September 4, 2026},
  url = {https://arxiv.org/abs/2609.03574v2}
}

@misc{cui2026accelerating,
  author = {Guanyu Cui and Zhewei Wei and Mingji Yang},
  title = {Accelerating the Local Push Primitive for {PageRank} Computation},
  year = {2026},
  eprint = {2609.12076},
  archivePrefix = {arXiv},
  primaryClass = {cs.DS},
  note = {Version 1, submitted September 10, 2026},
  url = {https://arxiv.org/abs/2609.12076v1}
}

\clearpage
\appendix
\section{Detailed Related Work and Normalization Comparisons}
\label{app:related-work}
\subsection{Local PageRank, regularization, and accuracy criteria}

The local-push work of \citet{andersen2006local,andersen2007using} is the
basic numerical precedent. Its residual invariant and positive mass decrease
lead to $\mathcal{O}(1/(\alpha\epsappr))$ adjacency work and a sparse approximation.
The same diffusion supports local graph partitioning through degree-normalized
sweep cuts. Earlier and complementary local clustering methods use truncated
random walks or evolving random sets
\citep{spielman2013local,morris2003evolving,andersen2009finding}.
Their conductance guarantees and their numerical work guarantees are distinct
objects. Our theorems concern RPPR objective accuracy and PPR solution error;
we do not infer a new conductance theorem merely from those numerical bounds.

\citet{fountoulakis2019variational} connect local PageRank to a weighted
$\ell_1$-regularized quadratic. Their KKT structure, nonnegative optimum,
monotone ISTA trajectory, and support-volume theorem explain why an ordinary
proximal method can be implemented locally. The objective-gap recurrence
in the proof of their Theorem~3, after equation~(23), together with
curvature at least $\alpha$ and $\mathcal{O}(k_*+V_*)$ work per local step, gives
the $\widetilde{\mathcal{O}}(V_*/\alpha)$ row in \cref{tab:lineage}.
The stronger mass identity in \cref{eq:rppr-order-mass} is a direct
consequence of the same structure.
\citet[Lemma~4 and Section~5]{ha2021statistical} study the regularization path and its statistical
properties; monotonicity as the regularizer decreases is also central to
our continuation argument. We provide the elementary order proof to fix
its precise role and normalization.

Several PageRank conventions coexist. To compare with non-lazy formulations,
let $\mP=\mA\mD^{-1}$ and set
\begin{equation}
 \alpha_{\mathrm{nl}}=\frac{2\alpha}{1+\alpha},\qquad
 \mL_{\alpha_{\mathrm{nl}}}=\mD-(1-\alpha_{\mathrm{nl}})\mA,
 \qquad \vy=\mD^{-1/2}\vx.
 \label{eq:related-nonlazy-map}
\end{equation}
The lazy and non-lazy systems then have the same PageRank vector, and
\begin{equation*}
 \mD^{1/2}\mQ\mD^{1/2}
   =\frac{1+\alpha}{2}\mL_{\alpha_{\mathrm{nl}}}.
\end{equation*}
In particular, for the degree-coordinate objective used by
\citet{wei2026simple,cui2026accelerating},
\begin{equation*}
 \begin{aligned}
 \psi_{\alpha_{\mathrm{nl}},\rho}(\vy)
   &:=\tfrac12\vy^\trans\mL_{\alpha_{\mathrm{nl}}}\vy
      -\alpha_{\mathrm{nl}}\eunit{v}^\trans\vy
      +\alpha_{\mathrm{nl}}\rho\norm{\mD\vy}_1,\\
 F_\rho(\mD^{1/2}\vy)
   &=\tfrac{1+\alpha}{2}\psi_{\alpha_{\mathrm{nl}},\rho}(\vy).
 \end{aligned}
\end{equation*}
Thus $\rho$ is unchanged and objective tolerances differ by a factor in
$[1/2,1]$. The rescaled symmetric lazy system of
\citet{zhou2024iterative} instead multiplies both $\mQ$ and $\vb$ by
$2/(1+\alpha)$, leaving $\alpha$ and the solution unchanged.
These algebraic conversions do not identify an objective tolerance with a
normalized residual or semantic output error; those require
\cref{eq:objective-to-ppr,eq:ppr-residual-certificate}.

\subsection{Accelerated and SDD-based active sets}

\citet{martinezrubio2023accelerated} give two sparse algorithms for convex
quadratics with an $M$-matrix Hessian, including RPPR. Their CDPR algorithm
computes the exact optimum and retains conjugate directions as its space
expands. Their ASPR algorithm solves successive restricted problems by
accelerated projected gradient, repairs approximate points to preserve safe
expansion, and attains additive objective accuracy. This established that
acceleration can coexist with support-sensitive computation, with a
tradeoff in the dependence on support size.

For a precise comparison, write $k_*=|\mathcal S_\rho^*|$,
$V_*=\vol(\mathcal S_\rho^*)$, and
$\widetilde V_*:=\nnz(\mQ_{\mathcal S_\rho^*,\mathcal S_\rho^*})$.
Their external-volume convention counts nonzeros in full columns of the
Hessian, including diagonals; for $0<\alpha<1$ it is
$V_*+k_*=\Theta(V_*)$ on our graphs. Their internal volume is
$\widetilde V_*\leq2V_*$. At $\alpha=1$, the matrix is diagonal and the
exact solution is handled separately.
Theorem~8 of that paper gives ASPR work
$\widetilde{\mathcal{O}}(k_*\widetilde V_*/\sqrt\alpha+k_*V_*)$;
Theorem~4 gives CDPR work $\mathcal{O}(k_*^3+k_*V_*)$ and space $\mathcal{O}(k_*^2)$.
The two bounds have different output guarantees: CDPR is exact,
whereas ASPR permits a specified additive objective error.

\citet[Theorem~1.3]{wei2026simple} (arXiv version~1, August~17, 2026) use nearly-linear SDD solves on growing
active sets. Their RPPR work is
$\widetilde{\mathcal{O}}(k_*\widetilde V_*+k_*V_*)=\widetilde{\mathcal{O}}(k_*V_*)$,
with probability specified by the caller. This removes the explicit
$1/\sqrt\alpha$ factor in the supplied-face solve. Their output additionally
has a nonnegative residual bounded by $2\rho$ in degree-normalized units,
which gives the ACL approximation property needed by their clustering
application. They also obtain $\widetilde{\mathcal{O}}(1/\epsappr^2)$ work for ACL
approximate PPR. This tradeoff can improve the teleportation dependence
substantially when the requested locality scale is coarse.
Their deterministic-solver remark on page~4 gives a deterministic version
with additional $k_*^{o(1)}$ or $(1/\epsappr)^{o(1)}$ factors. Therefore
the distinction here is the particular
$\widetilde{\mathcal{O}}(1/(\rho\sqrt\alpha))$ bound with only polylogarithmic
overhead, not the mere availability of a deterministic active-set method.

\Cref{tab:lineage} summarizes these bounds under the stated output and
work conventions.

Our randomized analysis limits the number of restricted systems that require
processing. Threshold batching retains safe support expansion and replaces
the cardinality-only count by a spectral depth bound, while still benefiting
from a small $k_*$. All principal systems use original degrees, and every
boundary refresh is included. The SDD solver itself is supplied by
\citet{koutis2011nearly}, building on the nearly-linear solver literature
\citep{spielman2004nearly}; our batch-depth argument bounds how many
locally discovered systems need solving. Our deterministic algorithm
establishes the accelerated local bound through a separate recurrence,
without requiring a deterministic nearly-linear SDD primitive.

These results do not establish a universal lower bound with
$1/\sqrt\alpha$ dependence. Even the existing $\widetilde{\mathcal{O}}(\rho^{-2})$
SDD bound is smaller than $\widetilde{\mathcal{O}}(1/(\rho\sqrt\alpha))$ in
some parameter regimes. The support-adaptive minimum in
\cref{eq:randomized-adaptive-work} records this distinction.

\subsection{Concurrent accelerated local push}

\citet{cui2026accelerating} give a directly overlapping randomized result
in arXiv:2609.12076v1, submitted September~10, 2026. Their Theorems~1.2
and~1.3 (PDF pages~3--4) provide accelerated ACL approximation and RPPR
objective accuracy in the same local graph model. The normalization map
in \cref{eq:related-nonlazy-map} leaves $\rho$ unchanged and multiplies
their degree-coordinate objective by $(1+\alpha)/2$. Consequently a
target gap $\epsobj$ here corresponds to their gap
$2\epsobj/(1+\alpha)$: normalization does not separate the results.

Their proof of Theorem~1.3 (PDF page~18) bounds the number of face solves
by
\[
 \mathcal{O}(\min\{k_*,\alpha^{-1/2}\log(1/(\alpha\eta))\})
\]
after this parameter conversion, where $\eta$ is their stopping tolerance.
Each phase has nearly-linear local solve and boundary cost. Since
$\widetilde V_*\leq2V_*$, this also implies
$\softO(V_*\min\{k_*,\alpha^{-1/2}\})$. This is a consequence of their
proof, not an additional advantage of our support-adaptive bound.
Their Lemma~3.10 (PDF pages~18--19) also treats sparse sources with
additive input work for ACL approximation.

The proof mechanisms differ. Their Lemmas~3.7 and~3.9 compare potential
decreases across consecutive blocks of expansions. Our
\cref{thm:threshold-batch-depth} uses a block-bidiagonal Cholesky
comparison and Chebyshev inverse decay. Both use thresholded, support-safe
active sets and supplied-face SDD solves. We provide explicit residual
certification and Las Vegas retries, without claiming that such wrapping
is intrinsically unavailable to their method.

The deterministic-solver remark on PDF page~4 introduces a further
subpolynomial factor when their randomized SDD solver is replaced.
Our deterministic continuation attains
$\softO(1/(\rho\sqrt\alpha))$ with only polylogarithmic overhead and
without an SDD primitive; \cref{app:bounded-arithmetic} separately gives
its rational-input realization. This is the principal running-time
distinction from the concurrent work. Repository provenance is recorded
below; a development timestamp is not a claim of earlier public disclosure.

\subsection{Classical acceleration and evolving-support methods}

FISTA \citep{beck2009fast} and linear coupling
\citep{allenzhu2017linear} illustrate the general principle that combining
primal and auxiliary-sequence progress yields faster convergence.
\citet[Section~3]{fountoulakis2022open} explicitly name both as motivation
for the PageRank question. The Euclidean energy of our deterministic
recurrence has this familiar role. The additional PageRank-metric energy,
its projection-sector inequality, and the selected-flow work charge address
the cost of realizing that iteration sequence locally.

The closest analysis of transient activity is
\citet{fountoulakis2026complexity} (arXiv version~2, April~8, 2026). They study standard FISTA on the
slightly over-regularized objective $F_{2\rho}$ and use the support at
$\rho$ to give a uniform margin outside an analytical core. Under an
explicit confinement condition, their Theorem~4.3 gives
\begin{equation*}
 \mathcal{O}\!\left(\frac1{\rho\sqrt\alpha}\log\frac\alpha{\epsobj}
       +\frac{\sqrt{\vol(\gB)}}{\rho\alpha^{3/2}}\right),
\end{equation*}
where $\gB$ contains all spurious activations. Their Theorem~4.4 provides
a graph-structural sufficient condition.
The over-regularization comparison already avoids dependence on a global
minimum complementarity margin. Our use of two regularizers builds on that
idea. The further step is an unconditional cumulative-volume estimate for
our constrained accelerated trajectory, using a diffuse source and the
second energy; no containing boundary set $\gB$ is assumed.

Their Proposition~D.4, summarized in Proposition~4.7, proves a worst-case
slowdown for standard FISTA on $m$-edge stars seeded at a leaf. At
$\rho_0=(1-\alpha)/(m(1+\alpha)+(1-\alpha))$, FISTA requires at least $2m$
degree-weighted work for every sufficiently small objective tolerance
depending only on $\alpha$, whereas ISTA reaches the same tolerance in
$\mathcal{O}((1/\alpha)\log(1/\epsobj))$ work independent of $m$.
For fixed $\alpha$, however, $\rho_0=\Theta(1/m)$, so this lower bound is
compatible with the $\widetilde{\mathcal{O}}(1/(\rho\sqrt\alpha))$ target.

Locally evolving-set methods make the cumulative-volume viewpoint explicit
\citep{zhou2024iterative}, including local gradient, Chebyshev, heavy-ball,
and SOR variants. Their accelerated guarantees involve trajectory-dependent
residual-reduction quantities. AESP \citep{huang2025accelerated} combines
local inexact proximal solves with an accelerated outer process in the
spirit of Catalyst \citep{lin2018catalyst}. For $\alpha<1/2$, its
Theorem~3.6 gives the semantic PPR guarantee with the bound
\begin{equation*}
 \softO\!\left(\min\left\{\frac m{\sqrt\alpha},
                  \frac{R_{\mathrm{A}}^2}{\sqrt\alpha\,\epsppr^2}
             \right\}\right),
\end{equation*}
where $R_{\mathrm{A}}$ is the source paper's maximum ratio of initial
weighted gradient masses across proximal subproblems (its equation~(7)).
The source does not give a graph-independent constant bound for this ratio;
its discussion proposes a simplex constraint as a way to control mass.
Thus mass-constrained acceleration has an explicit precedent. Our proof
uses a degree-scaled box together with a source-dependent mass cap, and
proves the projection comparison and cumulative work needed for the
$1/(\rho\sqrt\alpha)$ RPPR guarantee. Our continuation changes the
$\ell_1$ regularization scale; it is not itself a sequence of
Catalyst-shifted quadratic objectives.

\subsection{Other local diffusion and PPR estimation guarantees}

\citet{bai2024faster} give a general local diffusion framework. Their
Theorem~3.3 and Corollary~3.6 recover the unaccelerated
$\mathcal{O}(1/(\alpha\epsppr))$ PageRank scale; their Section~6 leaves accelerated
LocalSOR and LocalCH guarantees open. Thus a general local solver framework
and practical acceleration are precedents, but do not supply the
unconditional accelerated work estimate proved here.

A particularly close accuracy comparison is the SSPPR-D task of
\citet{wei2024absolute}. It uses the same degree-normalized solution-error
criterion as \cref{eq:semantic-ppr-target}. Theorem~20 of their full
preprint gives expected query work
\begin{equation}
 \widetilde{\mathcal{O}}\!\left(\frac1{\epsppr}
       \sqrt{\sum_{t\in\gV}\frac{\pi(v,t)}{d_t}}\right)
 \label{eq:related-sspprd}
\end{equation}
with $\alpha$ treated as constant. Their Section~1.4 remark restores a
linear $1/\alpha$ factor when it varies, and Appendix~D explicitly assumes
that the $\Theta(m)$ graph preprocessing required by Randomized Backward
Search has already been performed. The source-dependent factor in
\cref{eq:related-sspprd} can be valuable. This guarantee shares our
solution-error criterion; the relevant distinctions are teleportation
dependence, preprocessing, and Monte Carlo versus certified output.

Recent PPR estimation bounds use several other accuracy criteria.
\citet[Section~1 and equation~(2)]{bertram2026undirected} characterize
undirected PPR estimation with constant teleportation and thresholded
relative error, with variants for optional random-vertex, sorted-neighbor,
and adjacency queries. Their Theorem~4.2.2 gives a worst-graph,
average-source $\Omega(\min\{m,1/\delta\})$ lower bound for that
criterion. It is not a lower bound with explicit $\alpha$ dependence for
\cref{eq:semantic-ppr-target}.
\citet[Definitions~1.1--1.2]{jiang2026nearoptimality} study fixed-teleportation
single-source absolute and relative error on directed graphs. Absolute
error in PageRank probabilities is stronger than degree-normalized absolute
error on the same graph, but the directed lower-bound constructions do
not transfer automatically to our undirected task. Similarly,
\citet{wang2024revisiting} concern PageRank contributions and single-node
centrality on directed graphs; \citet{thorup2026pagerank} sharpen the
single-node bounds for bounded in-degree, again with fixed teleportation.
\citet{thorup2026instance} establish instance-optimality up to logarithmic
factors for adaptive bidirectional estimation in specified directed graph
classes. Their target is one vertex's global PageRank, averaged over uniform
starting vertices, with fixed teleportation and probabilistic relative error
(Sections~1--2 and Theorem~4.1 of arXiv version~6, August~3, 2026).
These lower bounds do not determine the joint parameter dependence of
either task studied here.

\citet{kwok2026asymmetric} extend sublinear linear-system algorithms to
asymmetric diagonally dominant matrices. Their Section~1.2 specifies a
supplied scalar query $\vt^\trans\vx^*$, with theorem-specific matrix,
vector, and sampling access. This differs from discovering and returning
an entire sparse approximation. Repeated scalar queries would also need a
charged procedure for finding the relevant target labels. Separately,
\citet{lin2024constraints} analyze accelerated primal--dual methods and
finite sparsity identification for an $\ell_1$ objective constrained by a
strongly convex function, including a PageRank example. Iteration and
identification bounds for that constrained formulation do not themselves
bound cumulative graph-row work for $F_\rho$.

\subsection{Obstacle methods and the role of the dual}

The nonnegative quadratic and its flow interpretation have substantial
algorithmic precedents. \citet[Theorem~1.1]{chen2021diffusion} solve
$\ell_2$ flow diffusion with nonnegativity constraints in randomized
$\mathcal{O}(m\log^8 n\log(1/\varepsilon))$ work on the supplied whole graph;
their Section~1.4 identifies strong locality as a further direction.
\citet[Theorem~2 and Corollary~3]{vladu2025barrier} give faster interior
point methods for symmetric $M$-matrix quadratics over the nonnegative
orthant, with work proportional to
$\widetilde{\mathcal{O}}(n^{1/3}\operatorname{nnz}(\mQ)\log(1/\varepsilon))$
including the stated norm and condition-number logarithms. These results
apply to the relevant optimization geometry, but their input-scale work
does not supply local support discovery or boundary accounting. Our claim
concerns those costs and the stated joint parameter bound, rather than a
new obstacle formulation or the first acceleration of constrained diffusion.

\citet{li2026resistance} apply related local obstacle ideas to
$\ell_1$-regularized resistance and spanning-tree counting. Their
Definition~3 uses the unshifted graph Laplacian and an unweighted linear
penalty. Theorem~13 (arXiv version~2, September~4, 2026) gives
$\widetilde{\mathcal{O}}(\lambda^{-3})$ work per source after
$\widetilde{\mathcal{O}}(m)$ graph preprocessing, with a sparse potential and
coordinatewise accuracy. Their nested SDD solves and preprocessed boundary
oracle are close methodological precedents; the objective, preprocessing,
and parameter bound differ from the RPPR problem studied here.

Stieltjes obstacle problems and primal--dual active-set methods are classical
\citep{karkkainen2003augmented}. Nonnegative principal inverses explain safe
pivots, and standard Fenchel duality gives the grounded electrical-flow
formulation in \cref{prop:grounded-flow-dual}. The outside obstacle slack is
exactly a violated dual vertex inequality. In the non-lazy coordinates of
\cref{eq:related-nonlazy-map}, the residual
$\vr^{\mathrm{nl}}=\eunit{v}-\alpha_{\mathrm{nl}}^{-1}\mL_{\alpha_{\mathrm{nl}}}\vy$
satisfies
\[
 \mL_{\alpha_{\mathrm{nl}}}\vy
       -\alpha_{\mathrm{nl}}(\eunit{v}-\rho\mD\one)
 =\alpha_{\mathrm{nl}}(\rho\mD\one-\vr^{\mathrm{nl}}).
\]
This directly identifies the residual activation used by the SDD active-set
method with local separation of a dual inequality.

We use this geometry as a shared language for certification and discovery.
A global obstacle or flow formulation by itself does not supply a local
algorithm: constructing all constraints or testing all vertices would
already read the graph. The complexity proofs must additionally control
which rows are opened, how inactive boundary records are maintained, and
how often numerical state is rebuilt. Those charges are explicit in both
algorithms. The deterministic proof also shows that a rounded implementation
can preserve the required inequalities without exact support identification.

\section{Bounded Arithmetic for the Deterministic Algorithm}
\label{app:bounded-arithmetic}
The exact-real model does not bound the encoding length of a repeatedly
updated rational state. We therefore give a specified rounded algorithm,
prove its guarantees for the actual stored iterates, and account for all
integer lengths. This appendix does not assume a positive activation margin.
Throughout, the constant-output cases have been removed, so
$0<\alpha<1$, $0<r<1/d_v\leq1$, and $U=4r<4$.

\subsection{A dimension-independent perturbation bound}

Fix a valid stage. At the actual stored states $\vxi,\vz\in\gK_r$,
form the ideal $\vy,\vq$ by \cref{eq:det-iteration}. Model a rounded step as
\begin{equation*}
 \begin{aligned}
 \vp_0&=\operatorname{Proj}_{\gK_r}(\vq+\vu),&
 \vp&=\vp_0+\ve_p,&
 \vxi^+&=\chi\vxi+\theta\vp+\ve_x,\\
 |\vu|&\leq\kappa_{\mathrm{raw}}\vomega,&
 -\kappa_{\mathrm{proj}}\vomega&\leq\ve_p\leq\vzero,&
 -\kappa_{\mathrm{pri}}\vomega&\leq\ve_x\leq\vzero.
 \end{aligned}
\end{equation*}
We require $\vp,\vxi^+\geq\vzero$. The downward errors then preserve
box and mass feasibility. The ideal projection $\vp_0$ is a comparison
object; the implementation need not enumerate all of its positive entries.

\begin{lemma}[Two energies under directed rounding]
\label{lem:bounded-two-energies}
The energies $E$ and $B$ in \cref{eq:det-first-energy,eq:det-second-energy}
satisfy, at the actual stored states,
\begin{equation}
 \begin{gathered}
 E^+\leq\chi E+e_{\mathrm{step}},\qquad
 B^+\leq\chi B+e_{\mathrm{step}},\\
 e_{\mathrm{step}}:=2\mu_{\mathrm{c}}\kappa_{\mathrm{raw}}
       +\tfrac52\kappa_{\mathrm{pri}}
       +\tfrac52(\theta+\mu_{\mathrm{c}})\kappa_{\mathrm{proj}}.
 \end{gathered}
 \label{eq:bounded-energy-errors}
\end{equation}
For uniform error bounds put $\Gamma=e_{\mathrm{step}}/\theta$. Starting
from zero, the rounded trajectory obeys
\begin{equation*}
 \begin{aligned}
 J_r(\vxi_k)-J_r(\vxi^*)&\leq\chi^k+\Gamma,\\
 \norm{\mQ(\vxi_k-\vxi^*)}_2^2&\leq18\alpha^2r+4\Gamma.
 \end{aligned}
\end{equation*}
\end{lemma}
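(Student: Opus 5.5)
We plan to rerun the proof of \cref{lem:det-comparison} at the actual stored states, once in the Euclidean metric with comparator $\vxi^*$ and once in the $\mQ$ metric with comparator $\vt$. The rounding errors enter as additive perturbations. First, treat the ideal step that uses $\vp_0$ in place of $\vp$ and has no primal error. The perturbed raw vector $\vq+\vu$ is projected onto $\gK_r$. The sector inequality \cref{eq:det-comparison-sector} (Euclidean case) and \cref{lem:det-sector} (in the $\mQ$ metric, valid for \emph{any} input) therefore hold for $\vp_0$ relative to $\vq+\vu$. Relative to $\vq$ they hold up to the term $\inner{\vp_0-\vu_{\mathrm{cmp}}}{\vu}$, respectively $\inner{\vp_0-\vt}{\mQ\vu}$. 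Since $\vp_0$ and both comparators lie in $\gK_r$, their difference is bounded by $4r\vomega$ coordinatewise, with weighted mass at most $2m_r\leq2$. Pairing against $|\vu|\leq\kappa_{\mathrm{raw}}\vomega$ gives an error of order $\kappa_{\mathrm{raw}}$. In the $\mQ$ case we additionally use $|\mQ|\vomega=\vomega$. In the comparison lemma the sector inequality is multiplied by $\mu_{\mathrm{c}}/2$, which produces the $2\mu_{\mathrm{c}}\kappa_{\mathrm{raw}}$ term.

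Second, replace $\vp_0$ by $\vp=\vp_0+\ve_p$ and $\chi\vxi+\theta\vp_0$ by $\vxi^+$. Both changes are downward and stay in $\gK_r$, so every perturbed quantity still lies in the bounded set $\gK_r$. We bound the change of each energy by a Lipschitz estimate. The objective $J_r$, and likewise $\mathcal A_r$, changes by at most the gradient magnitude times the displacement. Gradients at points of $\gK_r$ satisfy $|\mQ\vxi-\vh+\lambda_r\vomega|\leq C\vomega$ coordinatewise, using $\vh\leq4\lambda_r\vomega$ and $\mQ\xi\leq|\mQ|\xi$. The displacements are controlled in weighted $\ell_1$ by a constant times $\kappa$ after pairing with $\vomega$, using $\vomega^\trans\cdot\leq$ mass bounds; this yields the $\tfrac52\kappa_{\mathrm{pri}}$ contribution. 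The kinetic term $\tfrac{\mu_{\mathrm{c}}}2\norm{\vp-\vu_{\mathrm{cmp}}}^2$ changes by $\mu_{\mathrm{c}}\inner{\vp_0-\vu_{\mathrm{cmp}}}{\ve_p}$ plus a quadratic term. The primal shift $\theta\ve_p$ contributes a further $\theta$ times the same kind of bound, which gives the $\tfrac52(\theta+\mu_{\mathrm{c}})\kappa_{\mathrm{proj}}$ term. The main obstacle is to make all of these pairings dimension-independent. Each must be written as a degree-weighted $\ell_1$ pairing against $\vomega$-scaled bounds, rather than as Euclidean norms that would introduce $\sqrt n$. The key inputs are $\omega_i\geq1$, $4r\omega_i\cdot\kappa\omega_i$ summed under the mass caps $\sum\omega_i u_i\leq m_r\leq1$, and the fact that the coefficient vectors involved are bounded by $\vomega$ entrywise. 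We will track constants loosely, aiming only to reach the stated $\tfrac52$.

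Finally, unroll the recursions. From $E^+\leq\chi E+e_{\mathrm{step}}$ we get $E_k\leq\chi^kE_0+e_{\mathrm{step}}/(1-\chi)=\chi^kE_0+\Gamma$, with $E_0\leq1$ as in \cref{prop:det-stage-rate}; dropping the kinetic term gives the objective bound. Similarly $B_k\leq3\lambda_rm_h+\Gamma$. Repeating the argument of \cref{lem:det-response} then gives $\norm{\mQ\vxi_k-\vh}_2^2\leq2B_k\leq8\lambda_rm_h+2\Gamma$, since the linear and kinetic terms stay nonnegative on $\gK_r$. Combining this with $\norm{\vq^*}_2^2\leq\lambda_rm_h$ through the squared triangle inequality yields $18\alpha^2r+4\Gamma$.
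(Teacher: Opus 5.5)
Your overall plan matches the paper's proof. You perturb the sector inequality for $\vp_0=\operatorname{Proj}_{\gK_r}(\vq+\vu)$, then absorb the downward losses, then unroll the recurrences. Three of your steps are sound: the raw-error term $2\mu_{\mathrm c}\kappa_{\mathrm{raw}}$, the mirror-term estimate for $\ve_p$, and the final unrolling. The gap is in the step that controls $J_r$ and $\mathcal A_r$ under the primal loss $\vd:=\vxi^+-(\chi\vxi+\theta\vp_0)=\theta\ve_p+\ve_x$.

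There you pair a coordinatewise gradient bound $|\mQ\vxi-\vh+\lambda_r\vomega|\le C\vomega$ with a claimed weighted bound $\vomega^\trans|\vd|=\mathcal{O}(\kappa)$. That second bound is false. What you actually know is $|\vd|\le\kappa_d\vomega$ with $\kappa_d=\theta\kappa_{\mathrm{proj}}+\kappa_{\mathrm{pri}}$. This gives only $\vomega^\trans|\vd|\le\kappa_d\vol(\supp\vd)$, which depends on the support. The alternative is the lost-mass bound $\vomega^\trans|\vd|\le1$, which is not small. Your pairing therefore yields either a support-dependent error or an $\mathcal{O}(r)$ error per step. Neither can be made small by refining the grid, so the stated $e_{\mathrm{step}}$ does not follow.

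The pairing has to go the other way, and the penalty needs a sign argument rather than a size estimate. Expand
$J_r(\vxi_0^++\vd)-J_r(\vxi_0^+)=(\mQ\vxi_0^+-\vh)^\trans\vd+\lambda_r\vomega^\trans\vd+\tfrac12\vd^\trans\mQ\vd$.
\begin{itemize}
\item \emph{Penalty term.} The penalty gradient $\lambda_r\vomega$ has weighted mass $\lambda_r\vol(\gV)$, so no dimension-free pairing controls it. Instead discard $\lambda_r\vomega^\trans\vd\le0$, which holds because $\vd\le\vzero$. The same applies to $\alpha\lambda_r\vomega^\trans\vd$ in $\mathcal A_r$.
\item \emph{Linear term.} Use the contraction $\vomega^\trans|\mQ\va|\le\vomega^\trans|\va|$ to get $\vomega^\trans|\mQ\vxi_0^+-\vh|\le1+m_h\le2$. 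Pair this weighted-mass bound with the pointwise bound $|\vd|\le\kappa_d\vomega$; for $\mathcal A_r$ use $|\mQ\vd|\le\kappa_d\vomega$. This gives $2\kappa_d$.
\item \emph{Quadratic term.} Use $\norm{\vd}_2^2\le\kappa_d\vomega^\trans|\vd|\le\kappa_d$, where the lost weighted mass is at most one, together with $\norm{\mQ\vd}_2\le\norm{\vd}_2$. This gives $\kappa_d/2$.
\end{itemize}
The total $\tfrac52\kappa_d$ is exactly the $\tfrac52\kappa_{\mathrm{pri}}+\tfrac52\theta\kappa_{\mathrm{proj}}$ part of $e_{\mathrm{step}}$. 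This mass-versus-pointwise pairing is the same one you already used correctly for $\vu$ and $\ve_p$; the box height $4r\vomega$ plays no role in any of these estimates.

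A minor slip: you should write $\norm{\mQ\vxi_k-\vh}_2^2\le2\bigl(B_k+\mathcal A_r(\vt)\bigr)=2(B_k+\lambda_r m_h)$, not $2B_k$. Your final bound $8\lambda_r m_h+2\Gamma$ is nevertheless correct.
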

\begin{proof}
Symmetry and $|\mQ|\vomega=\vomega$ imply
\begin{equation}
 \vomega^\trans|\mQ\va|\leq\vomega^\trans|\va|,
 \qquad |\mQ\va|\leq\kappa\vomega
       \quad\hbox{if }|\va|\leq\kappa\vomega.
 \label{eq:bounded-weighted-contractions}
\end{equation}
Every vector in $\gK_r$ has weighted mass at most one. In particular,
\cref{lem:det-correction-domain} gives $\vxi^*,\vt\in\gK_r$, with
$\vomega^\trans\vt=m_h/\alpha=m_r\leq1$.
The two comparison vectors therefore satisfy
\[
 |(\vp_0-\vxi^*)^\trans\vu|\leq2\kappa_{\mathrm{raw}},
 \qquad
 |(\vp_0-\vt)^\trans\mQ\vu|\leq2\kappa_{\mathrm{raw}}.
\]
The normal of the projection of $\vq+\vu$ has the required sector
signs in both metrics. Replacing that normal by $\vq-\vp_0$ changes
the comparison estimate by at most $2\mu_{\mathrm{c}}\kappa_{\mathrm{raw}}$.
This follows directly from the squared-distance identity in the proof of
\cref{lem:det-comparison}; no nonexpansiveness in a second metric is used.

Let $\vxi_0^+=\chi\vxi+\theta\vp_0$ and
$\vd=\vxi^+-\vxi_0^+=\theta\ve_p+\ve_x\leq\vzero$.
Put $\kappa_d=\theta\kappa_{\mathrm{proj}}+\kappa_{\mathrm{pri}}$.
The lost weighted mass is at most one, so
$\norm{\vd}_2^2\leq\kappa_d\vomega^\trans|\vd|\leq\kappa_d$.
Also
$\vomega^\trans|\mQ\vxi_0^+-\vh|\leq1+m_h\leq2$.
Expanding either $J_r$ or $\mathcal A_r$, their positive mass penalties
decrease under $\vd\leq\vzero$; the remaining linear increase is at
most $2\kappa_d$, by \cref{eq:bounded-weighted-contractions}.
The quadratic increase is at most $\kappa_d/2$, using
$\mQ\preceq\mI$ and $\norm{\mQ\vd}_2\leq\norm{\vd}_2$.

Similarly $\norm{\ve_p}_2^2\leq\kappa_{\mathrm{proj}}$.
Expanding the Euclidean or $\mQ$ mirror-distance term costs at most
$5\mu_{\mathrm{c}}\kappa_{\mathrm{proj}}/2$, because the absolute weighted mass
of the difference between two nonnegative vectors of weighted mass at most
one is at most two.
Adding these contributions proves \cref{eq:bounded-energy-errors}.
Summing the geometric recurrences proves the objective estimate.
The proof of \cref{lem:det-response} with the additional energy $\Gamma$
gives a bound $8\lambda_r m_h+2\Gamma$ on
$\norm{\mQ\vxi_k-\vh}_2^2$, and hence the stated response bound.
\end{proof}

Uniform error per coordinate alone would permit dimension-dependent error.
The estimates above instead use the mass of the actual feasible vectors
and their actual downward losses. Unexposed zero coordinates are not
assigned fictitious independent rounding errors.

\begin{lemma}[Local work survives rounding]
\label{lem:bounded-kinetic-volume}
If $\theta\kappa_{\mathrm{raw}}\leq\lambda_r/4$ and
$\Gamma\leq\alpha^2r$, the actual rounded kinetic supports satisfy
\begin{equation}
 \sum_{k<K}\vol(\supp\vp_k)\leq360K/r.
 \label{eq:bounded-kinetic-volume}
\end{equation}
\end{lemma}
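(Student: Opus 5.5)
The plan is to redo the selected-flow argument of \cref{thm:det-kinetic-work} for the actual rounded trajectory, carrying the rounding as extra forcing terms. Use the mass coordinates $\vv_k=\theta\mD^{1/2}\vp_{k-1}$ (the stored kinetic vectors) and $\vv^*=\theta\mD^{1/2}\vxi^*$. The raw proposal at step $k$ is $\theta\mD^{1/2}(\vq_k+\vu_k)$, where $\vq_k$ is formed from the stored states. Hence the exact identity of that proof still holds with one additional forcing term $\theta\mD^{1/2}\vu_k$. By $|\vu_k|\leq\kappa_{\mathrm{raw}}\vomega$, this term has absolute entries at most $\theta\kappa_{\mathrm{raw}}d_i\leq\lambda_r d_i/4$.

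The downward errors $\ve_p$ only subtract mass, and the ideal projection's multipliers do the same. On the selected set $\gI_k=\{i:v_{k+1,i}>v_i^*\}$, the selected-flow inequality \cref{eq:det-selected-flow} therefore holds with forcing $\vh_k^{\mathrm{f}}+\theta\mD^{1/2}\vu_k$. Outside $\gC_r$, the margin \cref{eq:det-comparison-margin} gives $r^*_{\mathrm{f},i}\geq\lambda_r d_i/2$. The rounding forcing absorbs at most $\lambda_r d_i/4$ of this, leaving a net margin of $\lambda_r d_i/4$ on every positive stored coordinate outside $\gC_r$. Inside $\gC_r$ the volume is at most $2/r$ per step, as before. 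Summing over $k$ and telescoping with $\beta_0\leq\chi<1$ then gives
\[
 \tfrac{\lambda_r}{4}V_{\mathrm{out}}\leq\sqrt{(V_{\mathrm{out}}+2K/r)H_2}.
\]
Here $H_2=\sum_k\norm{\mD^{-1/2}\vh_k^{\mathrm{f}}}_2^2$ is evaluated at the stored primal iterates.

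Next, bound $H_2$ through \cref{lem:bounded-two-energies}. The spectral comparison of $\mQ-\mu_{\mathrm{c}}\mI$ with $\mQ$, together with $d_i\geq1$, gives
\[
 H_2\leq K\,(18\alpha^2r+4\Gamma)\leq22\alpha^2rK,
\]
using $\Gamma\leq\alpha^2 r$. Set $Y=V_{\mathrm{out}}+2K/r$. Young's inequality gives
\[
 Y\leq\frac{2K}{r}+\frac{Y}{2}+\frac{8H_2}{\lambda_r^2},
\]
so $Y\leq 4K/r+16\cdot22K/r=356K/r\leq360K/r$, which matches the stated constant.

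I expect the main obstacle to be checking that the selected-flow inequality survives the rounding, and in particular that the identity for $\vv^{\mathrm{raw}}_k-\vv^*$ remains exact. Two points need care. First, the recurrence coefficient must use the stored $\vp_k$, not the ideal $\vp_0$; the identity is linear in stored quantities, so this should go through. Second, the primal error $\ve_x$ enters only through $\vxi_k$, and hence only through $\vh_k^{\mathrm{f}}$, which is already controlled by the rounded response bound. A further point is that the lower-face multipliers must vanish on selected coordinates. This holds because a selected stored value is positive, so the ideal value $p_{0,i}\geq p_i>0$ as well.
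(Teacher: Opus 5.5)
Your approach is the paper's: rerun the selected flow on the stored states, add $\theta\mD^{1/2}\vu_k$ as extra forcing, bound $H_2$ by the rounded response estimate, and finish with Young. Most steps check out. The identity for the raw kinetic vector is purely algebraic in the stored $(\vz_k,\vxi_k)$. The lower normal vanishes on selected coordinates because $p_{0,i}\geq p_i>0$. Downward $\ve_p$ only reduces flow, and $H_2\leq K(18\alpha^2r+4\Gamma)$ holds.

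There is one accounting gap. The inequality $\tfrac{\lambda_r}{4}V_{\mathrm{out}}\leq\sqrt{(V_{\mathrm{out}}+2K/r)H_2}$ does not follow from your argument. The raw-error forcing $\theta\sqrt{d_i}\,u_{k,i}$ appears on the right of the selected-flow inequality at \emph{every} selected coordinate, including selected coordinates inside $\gC_r$. There the only lower bound is $r^*_{\mathrm{f},i}\geq0$; in fact $r^*_{\mathrm{f},i}=0$ wherever $\vx_r^*>0$. So nothing absorbs a positive $u_{k,i}$ there. You dropped these terms when you said the inside volume is $2/r$ per step ``as before.'' With $\nu=\theta\kappa_{\mathrm{raw}}$, their total can be as large as $\nu\sum_{k<K}\vol(\gI_k\cap\gC_r)\leq 2\nu K/r$.

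The repair is one line. The correct inequality is $\tfrac{\lambda_r}{4}V_{\mathrm{out}}\leq\sqrt{YH_2}+2\nu K/r\leq\sqrt{YH_2}+\lambda_rK/(2r)$, where $Y=V_{\mathrm{out}}+2K/r$. Equivalently, as the paper does, charge $\nu d_i$ on every selected coordinate: $\tfrac{\lambda_r}{2}V_{\mathrm{out}}\leq\sqrt{YH_2}+\nu Y$. Either form gives $Y\leq 4K/r+4\sqrt{YH_2}/\lambda_r$. Young then yields $Y\leq 8K/r+16H_2/\lambda_r^2\leq 8K/r+352K/r=360K/r$. So the stated constant $360$ is exactly what the argument produces. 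Your $356$ came from the omitted rounding term inside $\gC_r$.
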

\begin{proof}
Use the comparison set $\gC_r$ from \cref{eq:det-comparison-margin}.
On a selected coordinate $p_i>\xi_i^*$, also $p_{0,i}>\xi_i^*$.
Its lower normal vanishes; the upper and mass normals and $\ve_p\leq0$
can only reduce the selected flow. The raw error contributes at most
$\nu d_i$, where $\nu=\theta\kappa_{\mathrm{raw}}$.
The telescope from \cref{eq:det-selected-flow} therefore becomes
\begin{equation*}
 \frac{\lambda_r}2 V_{\mathrm{out}}
 \leq\sqrt{(V_{\mathrm{out}}+2K/r)H_2}
          +\nu(V_{\mathrm{out}}+2K/r),
 \qquad H_2\leq K(18\alpha^2r+4\Gamma).
\end{equation*}
Downward primal error affects the response estimate; it introduces no
additional term into the identity for the raw kinetic vector at the next
actual state. With $Y=V_{\mathrm{out}}+2K/r$ and $\nu\leq\lambda_r/4$,
the displayed inequality gives
$Y\leq4K/r+4\sqrt{YH_2}/\lambda_r
 \leq4K/r+Y/2+8H_2/\lambda_r^2$.
Consequently $Y\leq8K/r+16H_2/\lambda_r^2\leq360K/r$.
\end{proof}

\subsection{A rounded implementation and its sparse reporter}

Retain the stage parameters $\theta$ and $0<\delta\leq\lambda_r/2$ from
\cref{alg:deterministic-rppr}. Choose a dyadic grid width
$h_{\mathrm{q}}=1/H$, where $H$ is a power of two,
with
\begin{equation*}
 h_{\mathrm{q}}\leq
 \min\{1/8,\theta\tau/256,\delta/2,\lambda_r/8\},
 \qquad \tau=\alpha\delta^2/8.
\end{equation*}
Use the coarsest such grid at the first stage and refine it by halving as
needed at later stages. The grids are thus nested. Let
\begin{equation*}
 T=1/\theta,\qquad K=Tq_{\mathrm{blk}},\qquad
 q_{\mathrm{blk}}:=\min\{j\in\sN:2^j\geq2/\tau\}.
\end{equation*}
Here $T$ is an integer, and $(1-\theta)^T<1/2$ implies
$\chi^K\leq\tau/2$. This finite schedule uses no stored growing exact
power. Its iteration count is $\mathcal{O}(\log(1/\tau)/\sqrt\alpha)$.

Store densities
\begin{equation*}
 \xi_i/\omega_i=\sigma X_i,\qquad Z_i=z_i/\omega_i,
 \qquad \sigma,X_i,Z_i\in h_{\mathrm{q}}\sN_0.
\end{equation*}
At iteration boundaries $1/2\leq\sigma\leq1$.
Maintain primal neighbor records $L_i$, initially
$L_i=\sum_{j\sim i}X_j$, together with exact neighbor sums of $Z_j$
and the current baseline densities. For notational convenience set
$c=(1-\alpha)/2$ and $q_0=(1+\alpha)/2$ in this appendix.
The normalized response used by the reporter is
\begin{equation}
 \widetilde{\mathscr R}_i=(q_0-\mu_{\mathrm{c}})X_i-cL_i/d_i.
 \label{eq:bounded-approximate-response}
\end{equation}
The source is always computed exactly from the stored baseline:
\begin{equation}
 \frac{h_i}{\omega_i}
 =\frac{\alpha\one_{\{i=v\}}}{d_i}
    -q_0\frac{\overline x_i}{\omega_i}
    +\frac c{d_i}\sum_{j\sim i}\frac{\overline x_j}{\omega_j}.
 \label{eq:bounded-exact-source}
\end{equation}
The seed contribution is zero off its one known coordinate.

Use \cref{eq:det-raw-key} with
$\mathscr R_i$ replaced by \cref{eq:bounded-approximate-response} and
with exact kinetic and source terms. The finite reporter computes the exact
box/mass multiplier $\gamma$ for this represented raw vector. Enumerate
only the \emph{closed} tail
\begin{equation}
 k_i\geq\frac{\lambda_r/\theta+\gamma+h_{\mathrm{q}}}{\sigma}.
 \label{eq:bounded-closed-tail}
\end{equation}
At each emitted vertex set
$p_i/\omega_i=\lfloor p_{0,i}/\omega_i\rfloor_{h_{\mathrm{q}}}$, where
$\lfloor t\rfloor_{h_{\mathrm{q}}}=h_{\mathrm{q}}\lfloor t/h_{\mathrm{q}}\rfloor$.
Because $h_{\mathrm{q}}\leq U$, \cref{eq:bounded-closed-tail} is exactly the
condition for a positive stored projection. Equality must be included.
Enumerating positive ideal coordinates smaller than one grid unit would
incur work absent from \cref{eq:bounded-kinetic-volume}; the closed-tail
reporter avoids that enumeration.

Remove the old source and kinetic exceptions. Set
\begin{equation*}
 \sigma'=\lfloor\chi\sigma\rfloor_{h_{\mathrm{q}}},\qquad
 X_i\gets\left\lfloor
 X_i+\frac{\theta(p_i/\omega_i)}{\sigma'}
 \right\rfloor_{h_{\mathrm{q}}}\quad\hbox{at emitted vertices only}.
\end{equation*}
A changed $X_j$ scatters its \emph{actual} dyadic increment to every
neighbor record $L_i$ with $i\sim j$. Set $\vz\gets\vp$ and form the
exact kinetic neighbor sums by scanning these same emitted rows.
Since $\chi\sigma\geq1/4$ and $h_{\mathrm{q}}\leq1/8$, the scale stays
positive. At the old boundary $X_i\leq2U$. Scale rounding therefore loses
at most $2Uh_{\mathrm{q}}$ in physical primal density, and rounding the update
loses at most another $h_{\mathrm{q}}$.

If $\sigma'<1/2$, perform a scalar rebase on all retained records:
\begin{equation*}
 X_i\gets\lfloor\sigma'X_i\rfloor_{h_{\mathrm{q}}},\qquad
 L_i\gets\lfloor\sigma'L_i\rfloor_{h_{\mathrm{q}}},\qquad
 \sigma\gets1.
\end{equation*}
Both replacements use their pre-rebase values. Do not scatter these $X_i$
changes: the neighbor records have already been scaled. Do not scale the
kinetic or baseline neighbor sums. Rebuild all keys from the resulting
records. If $\sigma'\geq1/2$, set $\sigma\gets\sigma'$ instead and
refresh only touched ordinary keys and the new exceptions. No projection
query occurs during an incomplete update. The optional rebase adds at most
one grid unit of downward primal error. Thus
\begin{equation}
 \kappa_{\mathrm{proj}}=h_{\mathrm{q}},\qquad
 \kappa_{\mathrm{pri}}\leq(2U+2)h_{\mathrm{q}}\leq10h_{\mathrm{q}}.
 \label{eq:bounded-downward-errors}
\end{equation}
Nonnegativity is enforced by construction.

\begin{lemma}[Controlled neighbor-response error]
\label{lem:bounded-neighbor-error}
The stored neighbor records satisfy
\begin{equation}
 \frac1{d_i}\left|L_i-\sum_{j\sim i}X_j\right|\leq4h_{\mathrm{q}},
 \qquad \kappa_{\mathrm{raw}}\leq2h_{\mathrm{q}}/\theta.
 \label{eq:bounded-neighbor-error}
\end{equation}
\end{lemma}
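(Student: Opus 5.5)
We track the discrepancy $E_i:=L_i-\sum_{j\sim i}X_j$ through the three operations that change either side: scattered increments, scalar rebases, and stage handoffs with grid refinement. Scattering is exact. When $X_j$ changes at an emitted vertex, the actual dyadic increment (post-floor value minus pre-floor value) is added to every $L_i$ with $i\sim j$. Both sides therefore change by the same amount, and $E_i$ is unchanged. The scale update $\sigma'=\lfloor\chi\sigma\rfloor_{h_{\mathrm{q}}}$ does not touch $X$ or $L$ at all, so it is also harmless to $E_i$.

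The only source of discrepancy is a rebase. A rebase replaces $X_j$ by $\lfloor\sigma'X_j\rfloor_{h_{\mathrm{q}}}$ and $L_i$ by $\lfloor\sigma'L_i\rfloor_{h_{\mathrm{q}}}$ without scattering. The new discrepancy is therefore
\[
E_i^{\mathrm{new}}=\sigma'E_i+\sum_{j\sim i}\epsilon_j-\epsilon_i',
\qquad \epsilon_j,\epsilon_i'\in[0,h_{\mathrm{q}}).
\]
This gives the recursion
\[
|E_i^{\mathrm{new}}|\leq\sigma'|E_i|+d_ih_{\mathrm{q}}+h_{\mathrm{q}}\leq\tfrac12|E_i|+2d_ih_{\mathrm{q}},
\]
using $\sigma'<1/2$ and $d_i\geq1$. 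Starting from $E_i=0$, the geometric series gives $|E_i|\leq4d_ih_{\mathrm{q}}$ forever, which is the first bound in \cref{eq:bounded-neighbor-error}. At a stage handoff we will reinitialize $L_i$ exactly from the rows of the new records, or note that the nested grids keep stored values on the finer grid. Either way the invariant persists. Records created for newly exposed vertices start exact.

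For $\kappa_{\mathrm{raw}}$, we compare the represented raw density $q_i/\omega_i=\sigma k_i-\lambda_r/\theta$ with the ideal one. Per \cref{eq:det-raw-key}, the kinetic and source terms are exact, so the only difference is the primal response term $-\sigma\mathscr R_i/(\theta(1+\theta))$. The stored response differs from the true normalized one exactly by $c\,E_i/d_i$, where $c<1/2$. Multiplying by $\sigma\leq1$ and $1/(\theta(1+\theta))\leq1/\theta$ gives a density error of at most $4h_{\mathrm{q}}c/\theta\leq2h_{\mathrm{q}}/\theta$. In physical units this is $|u_i|\leq(2h_{\mathrm{q}}/\theta)\omega_i$, which is the claimed $\kappa_{\mathrm{raw}}$. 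We must also check that the stored $X$ itself is the actual state, so the rounding of $X$ enters only through $\kappa_{\mathrm{pri}}$ and not here; otherwise the error would be double-counted.

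The main obstacle is bookkeeping rather than analysis. We must confirm that every path modifying $X_j$ outside a rebase scatters the actual post-floor increment, including the floor applied at emitted vertices. We must also confirm that the rebase uses pre-rebase values of both $X$ and $L$, so that the $\sigma'E_i$ term appears with the rounding errors of a single rebase only.
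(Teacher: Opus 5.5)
Your proposal is correct and follows essentially the paper's route. Exact scattering of the actual dyadic increments preserves $L_i-\sum_{j\sim i}X_j$; a rebase gives the recursion $\sigma' e_i+\sum_{j\sim i}\ell_j-\ell_i^L$, which with $\sigma'<1/2$ and $d_i\geq1$ keeps the degree-normalized error at most $4h_{\mathrm{q}}$; and the raw-density error is then $\sigma c|e_i|/(\theta(1+\theta)d_i)\leq2h_{\mathrm{q}}/\theta$. Your hedge about stage handoffs is harmless but unnecessary: each stage restarts from $\vxi=\vzero$, so the primal records and their neighbor sums begin at zero and are trivially exact.
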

\begin{proof}
Let $e_i^L=L_i-\sum_{j\sim i}X_j$. Actual scattered increments preserve
this difference between rebases, including when records are first exposed.
At a rebase, denote its scalar floor losses by $\ell_j$ and $\ell_i^L$.
Then
\[
 e_i^{L,+}=\sigma'e_i^L+\sum_{j\sim i}\ell_j-\ell_i^L,
 \qquad 0\leq\ell_j,\ell_i^L<h_{\mathrm{q}}.
\]
Because $d_i\geq1$ and $\sigma'<1/2$, an old bound of $4h_{\mathrm{q}}$
yields a new bound at most
$\sigma'4h_{\mathrm{q}}+2h_{\mathrm{q}}\leq4h_{\mathrm{q}}$.
The initial error is zero. The only approximate response term is this
primal neighbor sum; its effect on the raw density is at most
\[
 \frac{|u_i|}{\omega_i}
 \leq\frac{\sigma c|e_i^L|}{\theta(1+\theta)d_i}
 \leq\frac{2h_{\mathrm{q}}}\theta.
\]
Retaining the stage's exposed records ensures that this induction does not
silently discard a previously nonzero neighbor contribution.
\end{proof}

Substituting \cref{eq:bounded-downward-errors,eq:bounded-neighbor-error}
into \cref{eq:bounded-energy-errors}, and using $\theta\leq1/2$, gives
\begin{equation}
 \Gamma\leq29h_{\mathrm{q}}/\theta<\tau/8,
 \qquad \theta\kappa_{\mathrm{raw}}\leq2h_{\mathrm{q}}\leq\lambda_r/4.
 \label{eq:bounded-certified-budgets}
\end{equation}
Also $\tau\leq\alpha^2r$. Hence
\cref{lem:bounded-two-energies,lem:bounded-kinetic-volume} certify a gap
below $\tau$ for the actual candidate and a cumulative volume at most
$360K/r$. These statements allow approximate primal neighbor records;
they never identify them with an exact matrix product.

\paragraph{Charging rebases.}
A scale interval starts at one. Each step decreases scale by at most
$\theta+h_{\mathrm{q}}$, so crossing below $1/2$ takes at least
$1/[2(\theta+h_{\mathrm{q}})]$ steps. Since $h_{\mathrm{q}}\leq\theta$,
there are at most $4\theta K+1=\mathcal{O}(1+\log(1/\tau))$ rebases.
Each is a scalar and tree pass over at most $\mathcal{O}(1+K/r)$ retained records;
it inspects no graph adjacency list. Thus rebases contribute only an
additional logarithmic factor. Ordinary scatters and the final candidate
scan retain their explicit cumulative-volume charges.

\subsection{Integer encodings and a finite projection search}

It remains to show that the reporter's arithmetic does not accumulate a
product of degree denominators or one new denominator per iteration.
Write
\[
 \alpha=A_0/B_\alpha,\qquad\theta=1/T,\qquad
 h_{\mathrm{q}}=1/H,\qquad\sigma=S/H,
\]
and express dyadic densities and neighbor records as integers over $H$.
Let $M_i=\sum_{j\sim i}Z_j$. Multiplying the raw key by its degree gives
\begin{equation*}
 \begin{aligned}
 W_i:=d_i k_i={}&-
   \frac{(q_0-\mu_{\mathrm{c}})d_iX_i-cL_i}{\theta(1+\theta)}\\
 &+\frac1\sigma\left\{
   \chi d_iZ_i-\frac{(q_0-\mu_{\mathrm{c}})d_iZ_i-cM_i}{1+\theta}
            +\frac{d_i h_i/\omega_i}{\theta}\right\}.
 \end{aligned}
\end{equation*}
The degree denominator cancels from every term. By
\cref{eq:bounded-exact-source}, all $W_i$ have denominators dividing
$2B_\alpha HT(T+1)S$; the physical weighted keys $\sigma W_i$ have
one common denominator dividing $2B_\alpha H^2T(T+1)$.
Include the represented denominator of $r$ for the common shift and the
box height. A comparison of two unweighted keys uses their two individual
degrees, whereas subtree weighted sums use this one common denominator.
No least common multiple of encountered degrees is formed.

For a direct implementation, partition the records into two disjoint
balanced trees. Clearing the fixed stage denominators expresses the physical
pre-shift density as
\begin{equation*}
 \frac{S\mathfrak b_i+H\mathfrak e_i}{G_0d_i},
\end{equation*}
where $\mathfrak b_i,\mathfrak e_i$ are integers, $G_0$ is a fixed
positive integer, and $\mathfrak e_i=0$ outside the current source and
kinetic exceptions. The ordinary tree stores $\mathfrak b_i/d_i$ with
global scale $S$; the exception tree stores
$\mathfrak n_i/d_i=(S\mathfrak b_i+H\mathfrak e_i)/d_i$ with scale one.
A registry places each vertex in exactly one tree and records the old key
needed for deletion. Each subtree stores cardinality, degree sum, and
numerator sum. All updates and weighted sums therefore use integers.

A physical clipped-mass query is four tail queries: two trees, each at
the lower and upper clipping thresholds. Its breakpoints, in units scaled
by $G_0$, are the ordered sequences
\[
 \{S\mathfrak b_i/d_i\},\quad\{\mathfrak n_i/d_i\},
 \quad\{S\mathfrak b_i/d_i-G_0U\},
 \quad\{\mathfrak n_i/d_i-G_0U\}.
\]
Let $N$ be the number of retained records. The argument of
\cref{lem:det-reporter} applies to these four lists.
Successively binary search each list by rank, narrowing the bracket until
it has no breakpoint from that list in its interior. Previously excluded
breakpoints remain excluded. There are $\mathcal{O}(\log(N+2))$ queries per list,
each using $\mathcal{O}(\log(N+2))$ integer tree operations. An affine interval
then gives the root by one rational division. Equality returns immediately;
a zero cap returns the zero projection. The reporter still costs
$\mathcal{O}(\log^2(N+2))$ operations plus its stored positive output, independently
of activation margins.

First test the clipped mass at $\gamma=0$, as in \cref{lem:det-reporter};
if it is at most $m_r$, take $\gamma=0$. For the active-mass case,
write $a_i^{\mathrm{den}}$ for a represented raw density after the common
$\lambda_r/\theta$ shift. Let $\gI$ be the free coordinates in the final
interval and $\gU_0$ the saturated coordinates. If $\gI$ is nonempty, the
last division gives
\begin{equation*}
 \gamma=\frac{\sum_{i\in\gI}d_i a_i^{\mathrm{den}}
               +U\sum_{i\in\gU_0}d_i-m_r}
              {\sum_{i\in\gI}d_i}.
\end{equation*}
An empty free set is handled by the exact breakpoint comparison.
The only new divisor is a degree sum. The ideal projected densities are
immediately rounded to the grid, so that divisor never enters the next
stored vector. This is a finite ordered search, not repeated numerical
bisection toward an unknown sign margin.

Normalized primal densities are at most $2U$ at iteration boundaries and
remain bounded by an absolute constant during a rebase step. Neighbor sums
add at most the encoding length of a degree or an exposed-record count.
The grids are nested across stages, so the old baseline is exact on the
new grid. It follows that every integer temporary has length
\begin{equation}
 \mathcal{O}\!\left(L_{\mathrm{in}}+\log H+\log T+
       \log(N+2)+\log(d_{\max}+2)\right).
 \label{eq:bounded-integer-length}
\end{equation}
Here $L_{\mathrm{in}}$ includes the binary lengths of the rational input
numerators and denominators and the maximum length of an encountered vertex
label; $d_{\max}$ is the maximum encountered degree. Intermediate cross products change only the absolute
constant. In particular, there is no linear-in-$K$ denominator growth.

\subsection{Rounded handoff and the bit-complexity theorem}

At completion of a stage, materialize the actual candidate and perform one
\emph{exact} projected-gradient step as in \cref{eq:det-repair-map}.
This scan must use the actual stored primal densities; the approximate
records $L_i$ are not a valid replacement for its matrix product.
The projected-gradient point $\vp$ obeys
$\norm{\vp-\vx_r^*}_2\leq\delta/2$ and has a valid subgradient of
norm at most $\delta/2$, by \cref{eq:det-pg-certificate}. Return the
rounded downward correction
\begin{equation}
 u_i/\omega_i=
    \bigl\lfloor[p_i/\omega_i-\delta]_+\bigr\rfloor_{h_{\mathrm{q}}},
 \qquad \overline\vx^+=\max\{\overline\vx,\vu\}.
 \label{eq:bounded-terminal-repair}
\end{equation}
The total density error is at most
$\delta/2+\delta+h_{\mathrm{q}}\leq2\delta$, and $\vu\leq\vx_r^*$.
On a retained positive coordinate, uniform clipping helps the subsolution
inequality, while the additional grid loss changes the $\mQ$ product by
at most $h_{\mathrm{q}}\omega_i$. Thus
\[
 (\mQ\vu-\vb)_i
 \leq(-\lambda_r+\delta/2-\alpha\delta+h_{\mathrm{q}})\omega_i\leq0.
\]
At a zero coordinate, $(\mQ\vu-\vb)_i\leq0$ follows directly from
the Stieltjes signs. The coordinatewise
maximum preserves safety, as in \cref{lem:det-repair}. Consequently the
actual dyadic baseline satisfies all of \cref{eq:det-repaired-invariant}:
source diffuseness, the support bound, and gap at most $2\delta^2/r$.
It supplies the invariant for the next stage and the requested final gap.
Newly positive neighbors are scanned only after this repair.

\begin{theorem}[Bounded-arithmetic deterministic local acceleration]
\label{thm:bounded-deterministic-rppr}
For rational inputs $\alpha,\rho,\epsobj>0$ in the main problem with
$0<\rho<1/d_v$, the rounded continuation algorithm described in this
appendix returns
$\vzero\leq\widehat\vx\leq\vx_\rho^*$ with objective gap at most
$\epsobj$. Its support volume is at most $1/\rho$. Its fully charged
local operation count is $\softO(1/(\rho\sqrt\alpha))$.

Let $L_{\mathrm{par}}=\log(2/(\alpha\rho\min\{1,\epsobj\}))$ and set
\begin{equation*}
 B:=1+L_{\mathrm{in}}+L_{\mathrm{par}}
       +\log\!\left(2+\frac{L_{\mathrm{par}}}{\rho\sqrt\alpha}\right)
       +\log(d_{\max}+2).
\end{equation*}
All numerical integers have $\mathcal{O}(B)$ bits. With schoolbook integer arithmetic
and exact division, the total bit cost is
$\softO(B^2/(\rho\sqrt\alpha))$. The output stores rational densities
and original degrees. The theorem permits arbitrarily small positive
activation margins and exact threshold equalities. For $\rho\geq1/d_v$,
one degree query identifies the exact zero solution.
\end{theorem}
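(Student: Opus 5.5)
The proof assembles the appendix lemmas inside the stage-by-stage structure of \cref{thm:deterministic-rppr}, replacing exact statements by their rounded versions.

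\textbf{Correctness.} Correctness goes by induction over the halving schedule of \cref{alg:deterministic-rppr}. The inductive hypothesis is that the stored dyadic baseline satisfies $\vzero\leq\overline\vx\leq\vx_r^*$ and $\vzero\leq\vb-\mQ\overline\vx\leq 4\lambda_r\vomega$ at the start of stage $r$. The base case is the zero baseline with $r\geq1/(2d_v)$. Within a stage, the choice of $h_{\mathrm{q}}$ feeds \cref{lem:bounded-neighbor-error} and \cref{eq:bounded-downward-errors} into \cref{eq:bounded-certified-budgets}. This gives $\Gamma<\tau/8$ and $\theta\kappa_{\mathrm{raw}}\leq\lambda_r/4$. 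Then \cref{lem:bounded-two-energies} bounds the correction gap by $\chi^K+\Gamma\leq\tau/2+\tau/8<\tau$ for the actual stored candidate, using $\chi^K\leq\tau/2$ from the integer schedule $K=Tq_{\mathrm{blk}}$. The rounded terminal repair \cref{eq:bounded-terminal-repair} has already been shown to reproduce all of \cref{eq:det-repaired-invariant}. At the next level, $r'\geq r/2$ gives diffuseness $\vh\leq4\lambda_{r'}\vomega$. Monotonicity from \cref{lem:rppr-order-mass} preserves safety, and nested grids keep the baseline exact. At the final level, the halving of $\delta$ gives gap $2\delta^2/\rho\leq\epsobj$, support inside $\mathcal S_\rho^*$, and hence volume at most $1/\rho$. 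The branches $\rho\geq1/d_v$, $\epsobj\geq\alpha/2$, and $\alpha=1$ are handled as before.

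\textbf{Operation count.} Per stage, \cref{lem:bounded-kinetic-volume} replaces \cref{thm:det-kinetic-work} and bounds cumulative emitted row volume by $360K/r$. The two-tree integer reporter costs $\mathcal{O}(\log^2(N+2))$ plus output per iteration, exactly as in \cref{lem:det-reporter}. Source exceptions cost $\mathcal{O}(1+1/r)$ per iteration. There are $\mathcal{O}(1+\log(1/\tau))$ rebases, each a pass over $\mathcal{O}(1+K/r)$ records. The terminal exact scan is charged to already opened rows. This reproduces \cref{eq:det-stage-ledger} up to one extra logarithmic factor, with $K=\mathcal{O}(L_{\mathrm{par}}/\sqrt\alpha)$ since $\log(1/\tau)=\mathcal{O}(L_{\mathrm{par}})$. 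The geometric sum \cref{eq:det-geometric-work} then gives $\softO(1/(\rho\sqrt\alpha))$ operations.

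\textbf{Bit cost.} For bit complexity I would invoke \cref{eq:bounded-integer-length} and check that each term is $\mathcal{O}(B)$:
\begin{itemize}
\item $\log H=\mathcal{O}(\log(1/\theta)+\log(1/\tau)+\log(1/\lambda_r))=\mathcal{O}(L_{\mathrm{par}})$;
\item $\log T=\mathcal{O}(\log(1/\alpha))$;
\item $\log(N+2)=\mathcal{O}(\log(2+L_{\mathrm{par}}/(\rho\sqrt\alpha)))$, because $N=\mathcal{O}(1+K/r)$;
\item the degree and label terms are absorbed by $L_{\mathrm{in}}$ and $\log(d_{\max}+2)$.
\end{itemize}
Each elementary operation, including cross-multiplied key comparisons and the one exact division per projection, is then a constant number of schoolbook operations on $\mathcal{O}(B)$-bit integers, costing $\mathcal{O}(B^2)$ bits. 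Multiplying by the operation count gives $\softO(B^2/(\rho\sqrt\alpha))$.

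\textbf{Main obstacle.} The delicate step is ensuring that no quantity silently escapes these bounds across stages. First, the denominator of the exact source \cref{eq:bounded-exact-source} must not grow: a degree division appears there, so I would store source densities multiplied by $d_i$, as in $W_i$, to cancel it. Second, $\gamma$'s degree-sum divisor never propagates, because projections are floored to the grid before storage. Third, rebased records stay within constant magnitude. I expect the source-denominator check to be the point needing the most care, since it is the one term combining exposed neighbor densities with a fresh degree.
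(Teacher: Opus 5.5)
Your proposal is correct and follows essentially the same route as the paper's proof. Specifically, it uses the stage induction closed by the rounded repair \cref{eq:bounded-terminal-repair}, and obtains the candidate gap from \cref{lem:bounded-two-energies} with \cref{eq:bounded-certified-budgets} and $\chi^K\leq\tau/2$. It then sums the rounded kinetic-volume, reporter, source-refresh and rebase charges via \cref{eq:det-geometric-work}. Finally, it reduces \cref{eq:bounded-integer-length} to $\mathcal{O}(B)$ through $\log H=\mathcal{O}(L_{\mathrm{par}})$, $\log T=\mathcal{O}(\log(1/\alpha))$ and $N=\mathcal{O}(1+L_{\mathrm{par}}/(\rho\sqrt\alpha))$.

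Two small points. The only check you leave unstated is $\Gamma<\tau\leq\alpha^2 r$, which is needed to invoke \cref{lem:bounded-kinetic-volume}. The source-denominator issue you flag is already settled in the appendix by the degree-multiplied keys $W_i$.
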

\begin{proof}
For $\alpha=1$, the exact one-coordinate solution has rational density
$[1/d_v-\rho]_+$ and satisfies the stated bounds directly. Assume
$0<\alpha<1$ for the continuation argument.
The initial baseline and the stage induction are the same as in
\cref{alg:deterministic-rppr}, with \cref{eq:bounded-terminal-repair} at
each handoff. The block schedule, \cref{eq:bounded-certified-budgets}, and
\cref{lem:bounded-two-energies} give the required candidate gap.
The rounded repair closes the induction and the final choice of $\delta$
proves accuracy and support safety.

The rounded cumulative volume, finite reporter, sparse source refreshes,
and scalar rebases have just been charged. Their sum over the halving
schedule uses \cref{eq:det-geometric-work} and is
$\softO(1/(\rho\sqrt\alpha))$. At any stage the number of retained
records is $N=\mathcal{O}(1+L_{\mathrm{par}}/(\rho\sqrt\alpha))$.
Choosing the coarsest allowed nested grids gives
$\log H=\mathcal{O}(L_{\mathrm{par}})$, and $\log T=\mathcal{O}(\log(1/\alpha))$.
Hence \cref{eq:bounded-integer-length} is $\mathcal{O}(B)$.
All comparisons, products, sums, floors, and divisions can be performed
with $\mathcal{O}(B^2)$ bit operations. Label processing is included in $L_{\mathrm{in}}$.
No operation uses an unrepresented real oracle, so multiplication by this
encoding cost proves the bit bound.
\end{proof}

\section{Explicit Seed Distributions and Their Input Costs}
\label{app:seed-scope}

The main problem begins with one known seed label. This appendix uses a
different, explicit input interface: a list of $h$ distinct vertex labels
and positive weights $(i,s_i)$, with $\sum_i s_i=1$. Reading this list,
checking its mass, querying its degrees, and building an indexed source map
are charged. These labels are initially known; all subsequent graph labels
must still be discovered through charged incidences. The graph assumptions
and original-degree normalization are unchanged. The results in this
appendix count exact-real words; the bounded-arithmetic theorem in
\cref{app:bounded-arithmetic} retains its point-source scope.

\subsection{Randomized acceleration with additive seed-input work}

The threshold-batch proof has a useful block-source extension. Set
\begin{equation}
 \gB_0:=\{i:s_i>\rho d_i\},\qquad
 \vc_\rho=\alpha\mD^{-1/2}\vs-\alpha\rho\vomega.
 \label{eq:general-seed-initial-block}
\end{equation}
Thus $\gB_0$ is exactly the set of positive coordinates of $\vc_\rho$.
It is computable from the input list and its degree replies. If it is
empty, zero satisfies the full obstacle KKT system and is the optimum.
Otherwise
\begin{equation}
 \vol(\gB_0)<1/\rho,\qquad
 \gB_0\subseteq\mathcal S_\rho^*,\qquad
 \norm{(\vc_\rho)_{\gB_0}}_2\leq\norm{\vb}_2\leq\alpha.
 \label{eq:general-seed-initial-bounds}
\end{equation}
The first inequality follows by summing $\rho d_i<s_i$.
For support containment, a positive load at a zero optimum coordinate
would contradict the off-diagonal signs and KKT. For the norm bound, use
$d_i\geq1$ and $\sum_i s_i^2\leq(\sum_i s_i)^2=1$.

\begin{theorem}[Explicit-distribution randomized RPPR]
\label{thm:general-seed-randomized}
For the explicit seed list above, put
$k_*=|\mathcal S_\rho^*|$ and $V_*=\vol(\mathcal S_\rho^*)$.
There is a Las Vegas local algorithm with expected fully charged work
\begin{equation*}
 \softO\!\left(h+V_*\min\{k_*,\alpha^{-1/2}\}\right)
 \leq\softO\!\left(h+
          \min\{\rho^{-2},\rho^{-1}\alpha^{-1/2}\}\right)
\end{equation*}
that returns $\vzero\leq\widehat\vx\leq\vx_\rho^*$ with objective
gap at most $\epsobj$ and
$\vzero\leq\vb-\mQ\widehat\vx\leq2\alpha\rho\vomega$.
The support volume is at most $1/\rho$. Logarithms may include the
explicit input length $h$; there is no ambient graph-size charge.
\end{theorem}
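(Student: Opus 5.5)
The plan is to rerun the proof of \cref{thm:local-rppr} essentially verbatim, changing only the initial face: start from $\gB_0$ of \cref{eq:general-seed-initial-block} instead of $\{v\}$, and add the cost of processing the input list. First I would read the list, check the mass, query the $h$ degrees and build a balanced-tree source map in $\softO(h)$ work. Then I compute $\gB_0$. If $\gB_0$ is empty, return zero: by the remark before the theorem it is the exact optimum, and its residual $\vb\leq\alpha\rho\vomega$ already satisfies the claimed residual bound. Otherwise expose the rows of $\gB_0$ at cost $\vol(\gB_0)<1/\rho$ and set $\gU\gets\gB_0$.

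Next I would re-establish safe expansion from this initial face. Reachability must be redefined so that it starts at $\gB_0$. The base case of \cref{thm:safe-batched-pivots} then needs $\vx^{\gB_0}_{\gB_0}=\mQ_{\gB_0\gB_0}^{-1}(\vc_\rho)_{\gB_0}>\vzero$, which holds because the load is positive and the principal Stieltjes inverse has positive diagonal. It also needs $\vx^{\gB_0}\leq\vx_\rho^*$, which follows from the same inverse-positivity comparison, since $\gB_0\subseteq\mathcal S_\rho^*$ by \cref{eq:general-seed-initial-bounds}. The induction step uses no property of the point source. For \cref{cor:boundary-only-discovery}, a coordinate outside $\gU\cup\partial\gU$ has load $\leq\vzero$, with strict negativity off $\gB_0$ because $s_i\leq\rho d_i$ there, so its slack is nonnegative. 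Consequently every nonempty threshold batch is certified support-safe as before, and each face has volume at most $V_*$.

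The depth bound, \cref{thm:threshold-batch-depth}, needs two checks, and I expect this to be the main point. The first is the causality inequality for $k=1$: every $i\notin\gB_0$ has $(\vc_\rho)_i\leq0\leq\vartheta$, which holds by the definition of $\gB_0$. The second is the seed tail estimate \cref{eq:batch-seed-tail}. There the block vector $[\vg_0;\vzero]$ is unchanged in form, since $\vg_0=(\vc_\rho)_{\gB_0}$, and its norm is at most $\alpha\leq1$ by \cref{eq:general-seed-initial-bounds}. Hence the constant $8q_\alpha^{2J}$ survives. The threshold term still uses $|\gS|\leq1/\rho$. The face-energy bound $\norm{\vx^\gU}_\mQ^2\leq\alpha$ needs $\norm{\vb}_2\leq\alpha$, which I would verify via $\sum s_i^2\leq1$.

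With these checks in hand, I would make the following choices. Use the same $\vartheta,\nu,J,N_{\mathrm{try}}$. Keep the residual-certified SDD calls and the final stopping analysis via \cref{lem:kkt-gap-certificate}; the boundary count is unchanged. The number of faces is at most $\min\{k_*,J+1\}$, each with cost $\softO(V_*)$. Apply \cref{cor:randomized-safe-repair} for the subsolution and residual guarantees, and use restarts for Las Vegas, exactly as in the proof of \cref{thm:randomized-rppr-main}. The seed list is read once per run, so a restart adds only a constant factor to the $h$ term. The resulting total is $\softO(h+V_*\min\{k_*,\alpha^{-1/2}\})$, and $V_*\leq1/\rho$ together with $k_*\leq V_*$ gives the displayed envelope.
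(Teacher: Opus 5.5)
Your proposal is correct and follows the paper's proof essentially step for step. You start the face sequence at $\gB_0$ (positive face solution by inverse positivity with positive diagonal), use boundary-only discovery from the nonpositive load outside $\gB_0$, and check the depth theorem's only two singleton-dependent points (causality for block one and $\norm{\vg_0}_2\leq\alpha$); then you apply the same ledger with $\min\{k_*,J+1\}$ faces, the $\softO(h)$ source map, and the safe-repair corollary. Two small touch-ups: off $\gB_0$ the load is only nonpositive (it vanishes where $s_i=\rho d_i$), which still suffices because you only need nonnegative slack; and the point-source $\alpha=1$ shortcut should be replaced by coordinatewise soft thresholding of the list in $\mathcal{O}(h)$ work, as the paper does.
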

\begin{proof}
The obstacle reduction, least-supersolution order, KKT signs, and weighted
mass proof in \cref{thm:obstacle-reduction,lem:rppr-order-mass} use only
$\vs\geq\vzero$ and $\one^\trans\vs=1$, except for the assertion that
one active component contains the one seed. That assertion is unnecessary
here. Start the face sequence at $\gB_0$. Its exact face solution is
strictly positive because its load is strictly positive and the principal
inverse is nonnegative with positive diagonal. The proof of
\cref{thm:safe-batched-pivots} then applies to all later batches.
Outside $\gB_0$, the initial load is nonpositive. Thus a negative slack
outside the current face can occur only on its boundary, even when an
unadmitted source label lies farther away. No scan over all source labels
is required at later phases.

The proof of \cref{thm:threshold-batch-depth} uses the singleton initial
block at exactly two places. The causal inequality for block one needs
$(\vc_\rho)_i\leq0$ outside the initial block. The seed-tail estimate needs
$\norm{\vg_0}_2\leq\alpha$. Both hold by
\cref{eq:general-seed-initial-block,eq:general-seed-initial-bounds}, with
$\vg_0=(\vc_\rho)_{\gB_0}$. All other steps allow arbitrary block
sizes: the Stieltjes Cholesky factor, its block-bidiagonal truncation,
inverse ordering, row-map norm, and Chebyshev propagation depend on block
indices rather than on the cardinality of block zero. The support-volume
bound remains $V_*\leq1/\rho$. Therefore the same energy-depth theorem
holds verbatim with initial block $\gB_0$.

Run the certified threshold-batch algorithm from this initial face.
Its energy bound still follows from $\norm{\vb}_2\leq\alpha$.
The numerical wrapper, boundary error estimate, and output projection
therefore have the same guarantees. There are at most
$\min\{k_*-|\gB_0|+1,J+1\}\leq\min\{k_*,J+1\}$ face solves.
Building the source map costs $\widetilde{\mathcal{O}}(h)$ once. Each later source
lookup costs $\mathcal{O}(\log(h+2))$ and occurs only in a charged face or boundary
operation. Exposing the initial block costs at most $V_*$, and all later
face and boundary records obey the same volume bound. The source map
adds $\mathcal{O}(h)$ storage and is not rescanned per phase.

Finally apply \cref{cor:randomized-safe-repair}. Its proof uses the
same KKT signs, $\mQ\vomega=\alpha\vomega$, and support-volume bound,
all of which hold for this distribution. This gives the stronger output
at the same work up to logarithms. If the initial block is empty, zero
already has residual at most $\alpha\rho\vomega$.
At $\alpha=1$, coordinatewise soft thresholding of the explicit source
list is exact in $\mathcal{O}(h)$ work. These branches complete the proof.
\end{proof}

In particular, direct regularization and the bias bound give a general-seed
PPR approximation in expected work
$\widetilde{\mathcal{O}}(h+1/(\epsppr\sqrt\alpha))$.
The result is a direct multi-source algorithm. It does not use linear
superposition of separate regularized optima.

\subsection{The deterministic extension and its different input charge}

\begin{proposition}[Explicit-distribution deterministic continuation]
\label{prop:general-seed-deterministic}
With the same explicit source interface, the exact-real deterministic
method has work
\begin{equation}
 \softO\!\left(\frac{h+1/\rho}{\sqrt\alpha}\right)
 \label{eq:general-seed-deterministic-work}
\end{equation}
for the same RPPR objective and safe-subsolution output guarantees.
\end{proposition}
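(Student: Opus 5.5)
The plan is to rerun \cref{alg:deterministic-rppr} with the general load $\vb=\alpha\mD^{-1/2}\vs$ and to recheck each ingredient. Every ingredient except the first-stage diffuseness and the source bookkeeping already uses only $\vs\geq\vzero$ and $\one^\trans\vs=1$. These are: \cref{thm:obstacle-reduction} without the connectivity claim, \cref{lem:rppr-order-mass}, the bias bound \cref{eq:rppr-bias}, the correction domain \cref{lem:det-correction-domain}, the repair \cref{lem:det-repair}, both energies (\cref{prop:det-stage-rate,lem:det-response}), the margin set $\gC_r$ in \cref{eq:det-comparison-margin}, and the kinetic-volume bound \cref{thm:det-kinetic-work}. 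I would record this first. The point-source facts $\vomega^\trans\vb=\alpha$ and $\norm{\vb}_2\leq\alpha$ hold verbatim.

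The first change is the starting level. A point source permits $r_{\mathrm{old}}=1/d_v$, because $\vb\leq2\alpha r\vomega$ at $r=1/(2d_v)$. For a general seed I would instead start at
\[
 r_0:=\max_i s_i/d_i ,
\]
which is computed from the charged input list and degree replies in $\softO(h)$ work. If $\rho\geq r_0$, the zero vector is optimal. Otherwise the stages begin at $r=\max\{\rho,r_0/2\}$, so that $\vb\leq\alpha r_0\vomega\leq2\alpha r\vomega\leq4\lambda_r\vomega$, and the diffuse-source invariant holds at the first stage. Since $\sum_i s_i=1$ forces $r_0\leq1$, there are $\mathcal{O}(\log(1/\rho))$ stages, and the geometric sum \cref{eq:det-geometric-work} still gives $\sum_j1/r_j\leq4/\rho$. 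Continuation induction, safety $\vzero\leq\overline\vx\leq\vx_\rho^*$, the final gap, and the support volume then follow exactly as in the proof of \cref{thm:deterministic-rppr}.

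The main obstacle is the source-record ledger. The per-iteration refresh in \cref{eq:det-stage-ledger} charges $K|\gF_r|$ with $|\gF_r|=\mathcal{O}(1+1/r)$. Now the fixed source $\vh=\vb-\mQ\overline\vx$ has records on all $h$ seed labels, in addition to $\supp\overline\vx\cup\partial(\supp\overline\vx)$. So $|\gF_r|=\mathcal{O}(h+1/r)$, and refreshing these exceptions each iteration costs $\mathcal{O}(K(h+1/r))$ per stage. I would first try to avoid the $h$ term by placing the seed contribution $h_i/(\theta\sigma_k\omega_i)$ of seed coordinates that lie outside the kinetic and baseline incidence sets into a separate tree scaled globally by $1/\sigma_k$, so that a scale change costs one scalar. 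The mixed sum of keys from the scaled tree and the unscaled tree complicates the threshold search of \cref{lem:det-reporter}, however. Since the claimed bound \cref{eq:general-seed-deterministic-work} already permits $h/\sqrt\alpha$ per stage up to logarithms, I would accept the simple charge: refresh all $h$ source exceptions per iteration.

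The question is then whether the stage count multiplies the $h$ term. With $K=\mathcal{O}(L_{\mathrm{par}}/\sqrt\alpha)$ iterations per stage and $\mathcal{O}(\log(1/\rho))$ stages, the total is $\softO(h/\sqrt\alpha)$, because the stage count is logarithmic and is absorbed into the soft notation. Adding the $\softO(1/(\rho\sqrt\alpha))$ kinetic and repair charges, together with the one-time $\softO(h)$ input cost, gives \cref{eq:general-seed-deterministic-work}. Two remaining checks complete the argument. First, an unexposed vertex still has negative raw density. This holds because every seed label is now represented as a source record with known degree, and every other unexposed vertex has zero source. Second, the repaired baseline's support, and hence its boundary assembly, stays within $\mathcal S_r^*$ by \cref{lem:det-repair}.
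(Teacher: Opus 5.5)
Your proposal is correct and follows the paper's proof essentially step for step: start the halving schedule at $r_0=\max_i s_i/d_i$ (returning zero if $\rho\geq r_0$), reuse the point-source invariants, both energies, and the $76K/r$ kinetic-volume bound, and accept the $\mathcal{O}(K(h+1/r))$ per-stage refresh of the enlarged source-record set, summed over $\mathcal{O}(1+\log(1/\rho))$ stages. The paper adds two details you leave implicit: the $\alpha=1$ branch becomes coordinatewise soft thresholding of the seed list, and the terminal repair also evaluates the $h$ source records, an $\mathcal{O}(h)$ per-stage cost already dominated by your refresh charge.
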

\begin{proof}
Read the source list and compute
$r_0=\max_i s_i/d_i$. If $\rho\geq r_0$, return the exact zero
solution. At $\alpha=1$, soft thresholding the explicit source list gives
the exact solution in $\mathcal{O}(h)$ work. Otherwise start the continuation schedule
at $r_{\mathrm{old}}=r_0$ and zero baseline. Its first stage has $r\geq r_0/2$, so
$\vb\leq2\alpha r\vomega$. All subsequent invariants follow as before.
The correction domain, both energies, and the selected-flow argument use
only nonnegativity and unit source mass. They therefore give the same
$76K/r$ cumulative kinetic-volume estimate.

The change is in the fixed source-record set. It now lies in
$\supp\vs\cup\supp\overline\vx\cup\partial(\supp\overline\vx)$,
with $\mathcal{O}(h+1/r)$ records. Refreshing these exceptions costs
$\widetilde{\mathcal{O}}(h+1/r)$ per iteration. Thus a stage costs
$\widetilde{\mathcal{O}}(K(h+1/r))$.
There are $\mathcal{O}(1+\log(1/\rho))$ stages, each with
$K=\widetilde{\mathcal{O}}(1/\sqrt\alpha)$, and
$\sum_j1/r_j\leq4/\rho$. This proves
\cref{eq:general-seed-deterministic-work}, including repeated source
refreshes rather than assigning them an additive input charge.
The terminal repair evaluates the supplied source records as well as the
candidate support and boundary. This adds $\mathcal{O}(h)$ scalar work without
opening unselected source rows.
For arbitrary requested accuracy, use the final repair schedule; if a
positive residual upper bound is desired even at a loose objective target,
run with tolerance $\min\{\epsobj,\alpha/4\}$ to omit the optional
loose-accuracy zero branch. The extra precision is only logarithmic.
\end{proof}

\subsection{What superposition alone would give}

For comparison, linearity of unregularized PPR yields the following
weaker direct reduction from point-source routines.

\begin{proposition}[Superposition with an explicit mixture factor]
\label{prop:sparse-source-superposition}
Let $\gK=\supp\vs$. Suppose a point-source routine at $v$ has
semantic tolerance $\varepsilon_v$ and work
$\widetilde{\mathcal{O}}(1/(\sqrt\alpha\varepsilon_v))$, apart from constant
input work. Then the mixture
$\widehat\vpi=\sum_{v\in\gK}s_v\widehat\vpi^{(v)}$ has error at most
$\sum_vs_v\varepsilon_v$. Optimizing these budgets gives work
\begin{equation*}
 \softO\!\left(h+
       \frac{(\sum_{v\in\gK}\sqrt{s_v})^2}
            {\sqrt\alpha\epsppr}\right).
\end{equation*}
The mixture factor lies in $[1,h]$ and equals $h$ for a uniform source.
The work guarantee is deterministic or expected according to the
point-source routine used.
\end{proposition}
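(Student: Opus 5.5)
We prove \cref{prop:sparse-source-superposition} in three steps: an exact linearity identity, a Lagrangian choice of per-source budgets, and a Cauchy--Schwarz bound on the resulting factor.

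\textbf{Exactness of superposition.} The PPR map is linear in the source. Since $\vb=\alpha\mD^{-1/2}\vs=\sum_{v\in\gK}s_v\,\alpha\mD^{-1/2}\eunit{v}$, we have $\vx_0^*=\sum_v s_v\vx_0^{*(v)}$, where $\vx_0^{*(v)}=\mQ^{-1}\alpha\mD^{-1/2}\eunit{v}$. Hence $\vpi=\sum_v s_v\vpi^{(v)}$ exactly. The triangle inequality for $\norm{\mD^{-1}(\cdot)}_\infty$ then gives
\[
\norm{\mD^{-1}(\widehat\vpi-\vpi)}_\infty\leq\sum_v s_v\norm{\mD^{-1}(\widehat\vpi^{(v)}-\vpi^{(v)})}_\infty\leq\sum_v s_v\varepsilon_v .
\]

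\textbf{Budget optimization.} We minimize $\sum_v 1/\varepsilon_v$ subject to $\sum_v s_v\varepsilon_v=\epsppr$. The Lagrange condition gives $\varepsilon_v\propto s_v^{-1/2}$, so we take
\[
\varepsilon_v=\frac{\epsppr}{\sqrt{s_v}\,\sum_{u}\sqrt{s_u}}.
\]
This choice satisfies $\sum_v s_v\varepsilon_v=\epsppr$, and
\[
\sum_v\frac1{\varepsilon_v}=\frac{\bigl(\sum_u\sqrt{s_u}\bigr)^2}{\epsppr}.
\]
Summing the per-call work $\widetilde{\mathcal O}(1/(\sqrt\alpha\varepsilon_v))$ yields the displayed bound.

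\textbf{Bookkeeping.} Several costs remain, and each is charged once.
\begin{itemize}
\item Reading the input list and the weights costs $\mathcal O(h)$.
\item Each point-source call adds a constant-input term, giving another $\mathcal O(h)$.
\item Any $\varepsilon_v\geq1$ is served by the zero vector, per \cref{cor:main-ppr}, at constant cost.
\item Forming the mixture costs the sum of the output sizes. Each output size is at most that call's work, so this is absorbed.
\item Merging the sparse outputs in a balanced tree adds only logarithmic factors.
\end{itemize}
To avoid computing square roots, round each $\varepsilon_v$ down to a dyadic value within a factor of two. This changes only constants.

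\textbf{Range of the mixture factor.} By Cauchy--Schwarz,
\[
\Bigl(\sum_v\sqrt{s_v}\Bigr)^2\leq h\sum_v s_v=h,
\]
with equality when $\vs$ is uniform. For the lower bound, $s_v\leq1$ implies $\sqrt{s_v}\geq s_v$, so $\sum_v\sqrt{s_v}\geq\sum_v s_v=1$.

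\textbf{Randomized routines and the main obstacle.} If the point-source routine is randomized, linearity of expectation gives the same bound on expected work. If the routine reports failure with probability at most $\zeta_v$, we allocate $\zeta_v=\zeta/h$ and apply a union bound; this costs only an extra logarithm in $h$. The main obstacle is making the logarithmic factors of the per-call soft bounds uniform across calls. Those factors depend on $\log(1/\varepsilon_v)$, and $\log(1/\varepsilon_v)\leq\log(\sqrt h/\epsppr)$. They therefore remain polylogarithmic in $h$ and $1/\epsppr$, which the soft notation permits.
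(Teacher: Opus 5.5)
Your linearity and triangle-inequality step, the Lagrangian allocation $\varepsilon_v=\epsppr/(\sqrt{s_v}\sum_u\sqrt{s_u})$, and the Cauchy--Schwarz range of the mixture factor all match the paper. Your bound $\log(1/\varepsilon_v)\leq\log(\sqrt h/\epsppr)$ also correctly handles the per-call logarithms. The real obstacle is elsewhere, in your sentence ``round each $\varepsilon_v$ down to a dyadic value within a factor of two; this changes only constants.'' That holds for the accuracy and for the work of the calls. However, you never charge the cost of \emph{computing} the rounded values. Each $\varepsilon_v$ depends on $\sqrt{s_v}$ and on $\sum_u\sqrt{s_u}$. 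Without root extraction, which you explicitly avoid, as the paper does throughout, you need a factor-two estimate of $\sqrt{s_u}$ for every $u\in\gK$, found by comparing squared dyadics with $s_u$. That search takes a number of steps growing with $\log(1/s_u)$, or $\log\log(1/s_u)$ with galloping. Input weights can be arbitrarily small, so this step is not bounded by $\softO(h)$ with logarithms only in $h$, $\epsppr$, and $\alpha$. Your zero-output branch for $\varepsilon_v\geq1$ does not help: deciding whether $v$ falls into that branch already needs the estimate of $\sum_u\sqrt{s_u}$, which involves the tiny weights.

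The missing ingredient is a cutoff on small weights. The paper discards every weight $s_v\leq\epsppr/(2h)$ and uses a zero output for those sources. Every point-source PPR vector has $\vpi^{(v)}\geq\vzero$, $\one^\trans\vpi^{(v)}=1$, and $d_i\geq1$, so its degree-normalized maximum norm is at most one. The discarded sources therefore contribute error at most $h\cdot\epsppr/(2h)=\epsppr/2$. The remaining budget $\epsppr/2$ is allocated by your formula over the retained weights only, without renormalizing. Each retained $s_v>\epsppr/(2h)$ has a dyadic upper estimate of $\sqrt{s_v}$ within a factor two after $\mathcal{O}(\log(h/\epsppr))$ squared comparisons. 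Using upper estimates keeps the weighted budget sum within $\epsppr/2$ and increases $\sum_v1/\varepsilon_v$ by only a constant factor. An alternative also works: replace the root of each weight below $1/h^2$ by the fixed upper bound $1/h$, which inflates $\sum_u\sqrt{s_u}$ by at most an additive $1$, a constant factor since that sum is at least $1$. Either way, some such truncation is required. With it, your argument is complete.
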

\begin{proof}
Linearity and the triangle inequality give the error estimate.
Minimizing $\sum_v1/\varepsilon_v$ subject to
$\sum_vs_v\varepsilon_v\leq\epsppr$ gives
$\varepsilon_v=\epsppr/(\sqrt{s_v}\sum_u\sqrt{s_u})$.
Cauchy--Schwarz gives the factor's range. If $\epsppr\geq1$, zero is
sufficient. Otherwise discard weights at most $\epsppr/(2h)$ and allocate
error budget $\epsppr/2$ among the retained terms, without renormalizing.
The discarded terms contribute at most $\epsppr/2$, since every point-source
PPR vector has degree-normalized maximum norm at most one. Each retained
weight admits a dyadic upper estimate of its square root within a factor
two using $\mathcal{O}(\log(h/\epsppr))$ squared comparisons. Using these estimates
in the budget formula preserves accuracy and increases work by at most a
constant factor. Tolerances at least one use zero outputs; the per-source
constant work is covered by $h$. Merging sparse outputs costs only their
total size and local dictionary logarithms.
\end{proof}

RPPR itself is nonlinear in the source. At $\alpha=1$ its optimum is
\[
 (\vx_\rho^*(\vs))_i
   =\max\{s_i/\sqrt{d_i}-\rho\sqrt{d_i},0\}.
\]
For example, on a single edge with $\rho=1/2$, the two point-source
regularized optima are $(1/2,0)$ and $(0,1/2)$, whereas the optimum for
their uniform seed mixture is zero. Their averaged optima are therefore
not the mixture optimum. The direct extension in
\cref{thm:general-seed-randomized} works by an initial positive-load
block and the same depth theorem, rather than by composing those optima.

\section{Grounded Electrical-Flow Interpretation}
\label{app:grounded-flow-dual}
The KKT slack from \cref{sec:obstacle-active-sets} has an exact network
interpretation. Form an augmented graph by
adding a ground vertex $\mathsf g$.  Give every original edge conductance
$(1-\alpha)/2$ and connect each $i\in\gV$ to $\mathsf g$ with conductance
$\alpha d_i$; when $\alpha=1$, omit the zero-conductance original edges.  Fix
an orientation, let $\mB$ be the signed incidence matrix with the ground
column deleted, and let $\mW$ be the diagonal matrix of positive
conductances.  Define
\begin{equation}
 \mH:=\mD^{1/2}\mQ\mD^{1/2}
 =\alpha\mD+\frac{1-\alpha}{2}(\mD-\mA),
 \qquad
 \vc_\rho^{\mathrm d}:=\mD^{1/2}\vc_\rho=\alpha(\vs-\rho\mD\one).
 \label{eq:grounded-system}
\end{equation}
Then $\mB^\trans\mW\mB=\mH$.

\begin{proposition}[Grounded electrical-flow dual]
\label{prop:grounded-flow-dual}
Under the change of variables $\vy=\mD^{-1/2}\vx$, the obstacle problem is
\begin{equation}
 \min_{\vy\geq\vzero}
 \left\{
  \frac12\norm{\mW^{1/2}\mB\vy}_2^2-(\vc_\rho^{\mathrm d})^\trans\vy
 \right\}.
 \label{eq:grounded-flow-primal}
\end{equation}
Its Fenchel dual, written as a minimum-energy problem, is
\begin{equation}
 \min_{\vf}\ \frac12\vf^\trans\mW^{-1}\vf
 \qquad\text{subject to}\qquad
 \mB^\trans\vf\geq\vc_\rho^{\mathrm d}.
 \label{eq:grounded-flow-dual}
\end{equation}
The two minimum values have opposite signs.  At an optimal primal--dual pair,
\begin{equation*}
 \vf^*=\mW\mB\vy^*,
 \qquad
 \vlambda^*:=\mB^\trans\vf^*-\vc_\rho^{\mathrm d}
 =\mD^{1/2}\vw^*\geq\vzero,
 \qquad
 y_i^*\lambda_i^*=0.
\end{equation*}
More generally, let $\vx^\gU$ be the zero-extended exact solution on a
reachable active set $\gU$, with slack $\vw^\gU$ as defined in
\cref{subsec:reachable-faces}. Set $\vy^\gU:=\mD^{-1/2}\vx^\gU$.
The induced flow $\vf^\gU:=\mW\mB\vy^\gU$ is tight on $\gU$ and
\begin{equation*}
 (\mB^\trans\vf^\gU-\vc_\rho^{\mathrm d})_j=\sqrt{d_j}\,w_j^\gU
 \quad(j\notin \gU).
\end{equation*}
Consequently, $w_j^\gU<0$ is exactly a violated dual vertex constraint.
\end{proposition}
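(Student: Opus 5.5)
The plan is to verify the factorization $\mB^\trans\mW\mB=\mH$ first, then derive the dual by Lagrangian duality, and finally read off the optimality and face relations.

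\textbf{Factorization.} Each original edge $\{i,j\}$ with conductance $(1-\alpha)/2$ contributes $\frac{1-\alpha}{2}(\eunit{i}-\eunit{j})(\eunit{i}-\eunit{j})^\trans$. Each ground edge $\{i,\mathsf g\}$ has its ground column deleted, so it contributes $\alpha d_i\eunit{i}\eunit{i}^\trans$. Summing gives $\frac{1-\alpha}{2}(\mD-\mA)+\alpha\mD$, which equals $\mD^{1/2}\mQ\mD^{1/2}$ by the definition of $\mQ$.

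\textbf{Primal form.} With $\vx=\mD^{1/2}\vy$, we have $\frac12\vx^\trans\mQ\vx=\frac12\norm{\mW^{1/2}\mB\vy}_2^2$ and $\vc_\rho^\trans\vx=(\vc_\rho^{\mathrm d})^\trans\vy$. Nonnegativity is preserved because $\mD^{1/2}$ is positive diagonal. This gives \cref{eq:grounded-flow-primal}.

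\textbf{Dual.} Introduce $\vf=\mW\mB\vy$ through the conjugate identity
\[
 \tfrac12\norm{\mW^{1/2}\mB\vy}^2=\max_{\vf}\Bigl\{\vf^\trans\mB\vy-\tfrac12\vf^\trans\mW^{-1}\vf\Bigr\}.
\]
Substituting, the primal value becomes
\[
 \min_{\vy\geq\vzero}\max_{\vf}\Bigl\{(\mB^\trans\vf-\vc_\rho^{\mathrm d})^\trans\vy-\tfrac12\vf^\trans\mW^{-1}\vf\Bigr\}.
\]
The function is convex--concave and the primal has a unique attained optimum. Strong duality holds since the problem is a convex QP with linear constraints; Slater holds trivially for the orthant. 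So we may swap min and max. The inner minimum over $\vy\geq\vzero$ equals $0$ if $\mB^\trans\vf\geq\vc_\rho^{\mathrm d}$ and $-\infty$ otherwise. Hence the primal optimum equals $-\min\{\frac12\vf^\trans\mW^{-1}\vf:\mB^\trans\vf\geq\vc_\rho^{\mathrm d}\}$, which is the claimed opposite-sign relation.

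\textbf{Optimality relations.} The saddle-point conditions give $\vf^*=\mW\mB\vy^*$ by stationarity in $\vf$. Then
\[
 \vlambda^*=\mB^\trans\mW\mB\vy^*-\vc_\rho^{\mathrm d}=\mH\vy^*-\mD^{1/2}\vc_\rho=\mD^{1/2}(\mQ\vx^*-\vc_\rho)=\mD^{1/2}\vw^*.
\]
Nonnegativity and complementarity $y_i^*\lambda_i^*=0$ then follow from the obstacle KKT system \cref{eq:obstacle-kkt}, since $y_i^*=x_i^*/\sqrt{d_i}$.

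\textbf{Face statement.} The same computation for $\vy^\gU$ gives $\mB^\trans\vf^\gU-\vc_\rho^{\mathrm d}=\mD^{1/2}\vw^\gU$ on all coordinates. On $\gU$, restricted stationarity gives $\vw^\gU_\gU=\vzero$, so the dual constraints hold with equality there (tightness). Off $\gU$, the coordinate is $\sqrt{d_j}\,w_j^\gU$. Because $\sqrt{d_j}>0$, $w_j^\gU<0$ holds exactly when the dual vertex inequality at $j$ is violated.

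\textbf{Main obstacle.} The only delicate points are the duality justification and the sign conventions of the deleted ground column and of $\mW^{-1}$ at $\alpha=1$. The latter is why zero-conductance edges are omitted: the factorization then still holds with only ground edges, and $\mW$ stays invertible.
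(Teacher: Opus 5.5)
Your proposal is correct and follows essentially the same route as the paper: the factorization $\mB^\trans\mW\mB=\mH$, the conjugate identity for the quadratic, and elimination of $\vy\geq\vzero$ to obtain the constraint $\mB^\trans\vf\geq\vc_\rho^{\mathrm d}$. Both then apply the slack identity $\mH\vy-\vc_\rho^{\mathrm d}=\mD^{1/2}(\mQ\vx-\vc_\rho)$ at the optimum and at a face solution, and your QP strong-duality justification of the min--max swap does the job of the paper's Fenchel--Rockafellar appeal (it can also be checked directly: $\vf^*=\mW\mB\vy^*$ is dual feasible by \cref{eq:obstacle-kkt}, and complementarity makes the two optimal values negatives of each other).
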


\begin{proof}
The identities in \cref{eq:grounded-system} give
$\vy^\trans\mH\vy=\norm{\mW^{1/2}\mB\vy}_2^2$ and transform
\cref{eq:obstacle-problem} into \cref{eq:grounded-flow-primal}.  The conjugate
identity
\[
 \frac12\norm{\mW^{1/2}\vz}_2^2
 =\max_{\vf}
 \left\{\vf^\trans\vz-\frac12\vf^\trans\mW^{-1}\vf\right\}
\]
and minimization over $\vy\geq\vzero$ make the inner value finite precisely
when $\mB^\trans\vf-\vc_\rho^{\mathrm d}\geq\vzero$.  Fenchel--Rockafellar duality applies
because the quadratic is continuous everywhere and the orthant is nonempty,
with the sign reversal in the displayed minimization convention.  Stationarity in the conjugate pair
gives $\vf^*=\mW\mB\vy^*$; minimization over the orthant gives
complementarity.  Finally,
\[
 \mB^\trans\mW\mB\vy-\vc_\rho^{\mathrm d}
 =\mH\vy-\vc_\rho^{\mathrm d}
 =\mD^{1/2}(\mQ\vx-\vc_\rho)
 =\mD^{1/2}\vw,
\]
which proves both slack identities.
\end{proof}

Thus the active-set method can be interpreted as local constraint
generation for \cref{eq:grounded-flow-dual}: a restricted solve constructs
the flow induced by the current potentials, and boundary slack tests identify
violated vertex inequalities. The implementation uses the local support
and certificate properties of \cref{sec:obstacle-active-sets} to perform
these tests; the batch-depth bound in \cref{sec:threshold-batch-depth}
controls the number of restricted solves. Constructing the full dual or
testing every vertex would require graph-wide access.

\end{document}